\tikzset{phase/.style = {draw,fill,shape=circle,minimum size=5pt,inner sep=0pt},crossx/.style={path picture={ 
\draw[thick,black,inner sep=0pt]
(path picture bounding box.south east) -- (path picture bounding box.north west) (path picture bounding box.south west) -- (path picture bounding box.north east);
}}, cross/.style={path picture={ 
\draw[thick,black](path picture bounding box.north) -- (path picture bounding box.south) (path picture bounding box.west) -- (path picture bounding box.east);
}}, not/.style={draw,circle,cross,minimum width=0.3 cm}}
\newtheorem{theorem}{Theorem}
\newtheorem{lemma}[theorem]{Lemma}
\newtheorem{corollary}[theorem]{Corollary}
\newtheorem{proposition}[theorem]{Proposition}
\newtheorem{definition}[theorem]{Definition}
\newcommand{\sket}[1]{{\ensuremath{\lvert#1\rangle}}}
\newcommand{\lket}[1]{{\ensuremath{\left\lvert#1\right\rangle}}}
\newcommand{\ket}[1]{\mathchoice{\lket{#1}}{\sket{#1}}{\sket{#1}}{\sket{#1}}}
\newcommand{\sbra}[1]{{\ensuremath{\langle#1\rvert}}}
\newcommand{\lbra}[1]{{\ensuremath{\left\langle#1\right\rvert}}}
\newcommand{\bra}[1]{\mathchoice{\lbra{#1}}{\sbra{#1}}{\sbra{#1}}{\sbra{#1}}}
\newcommand{\ident}{\mathbbm{1}}
\DeclareMathOperator{\tr}{Tr}
\DeclareMathOperator{\argmax}{argmax}
\newcommand{\mbE}{\mathbb{E}}
\renewcommand{\otimes}{\varotimes}
\newcommand{\htwo}{\tilde{H}^{\downarrow}_2}
\newcommand{\hhalf}{H^{\uparrow}_{\frac{1}{2}}}
\newcommand{\eqdef}{:=}
\newcommand{\email}[1]{\href{mailto:#1}{#1}}
\renewcommand\footnotemark{} 
\begin{document}
\cleanlookdateon
\allowdisplaybreaks

\title{Polarization of Quantum Channels using Clifford-based Channel Combining%
\thanks{This work was supported in part by  the ``Investissements d’avenir'' (ANR-15-IDEX-02) program of the French National Research Agency. Ashutosh Goswami acknowledges the European Union’s Horizon 2020 research and innovation programme, under the Marie Skłodowska-Curie grant agreement No 754303.}}

\author{%
Frédéric~Dupuis,\thanks{Frédéric Dupuis was with Université de Lorraine, CNRS, Inria, LORIA, F-54000 Nancy, France. He is currently with Département d'Informatique et de Recherche Opérationnelle, Université de Montréal, Québec, Canada (\email{dupuisf@iro.umontreal.ca}).}
\ \ 
Ashutosh~Goswami,\thanks{Ashutosh Goswami is with Université Grenoble Alpes, Grenoble INP, LIG, F-38000 Grenoble, France (\email{ashutosh-kumar.goswami@univ-grenoble-alpes.fr}).}
\ \ 
Mehdi~Mhalla, 
\thanks{Mehdi Mhalla is with Université Grenoble Alpes, CNRS, Grenoble INP, LIG, F-38000 Grenoble, France (\email{mehdi.mhalla@univ-grenoble-alpes.fr}).}
\ 
Valentin~Savin \thanks{Valentin Savin is with Université Grenoble Alpes, CEA-LETI, F-38054 Grenoble, France (\email{valentin.savin@cea.fr}).}
}

\date{}

%

\maketitle

\begin{abstract}
We provide a purely quantum version of polar codes, achieving the symmetric coherent information of any qubit-input quantum channel.  Our scheme relies on a recursive channel combining and splitting construction, where a two-qubit gate randomly chosen from the Clifford group is used to combine two single-qubit channels. The inputs to the synthesized bad channels are frozen by preshared EPR pairs between the sender and the receiver, so our scheme is entanglement assisted. We further show that quantum polarization can be achieved by choosing the channel combining Clifford operator randomly, from a much smaller subset of only nine two-qubit Clifford gates. Subsequently, we show that a Pauli channel polarizes if and only if a specific classical channel over a four-symbol input set polarizes. 
We exploit this equivalence to prove fast polarization for Pauli channels, and to devise an efficient successive cancellation based decoding algorithm for such channels. 
Finally, we present a code construction based on chaining several quantum polar codes, which is shown to require a rate of preshared entanglement that vanishes asymptotically.
\end{abstract}


\section{Introduction}
Polar codes, proposed by Arikan~\cite{arikan09}, are the first explicit construction of a family of codes that provably achieve the channel capacity for any binary-input, symmetric, memoryless channel. His construction relies on a channel combining and splitting procedure, where a CNOT gate is used to combine two instances of the transmission channel. Applied recursively, this procedure allows synthesizing a set of so-called virtual channels from several instances of the transmission channel. When the code length goes to infinity, the synthesized channels tend to become either noiseless (good channels) or completely noisy (bad channels), a phenomenon which is known as ``channel polarization''. Channel polarization can effectively be exploited by transmitting messages via the good channels, while freezing the inputs to the bad channels to values known to both the encoder and decoder. This construction has been further generalized to classical channels with non-binary input alphabets in \cite{sta09}. Moreover, polar codes have been generalized for the transmission of classical information over classical-quantum channels in \cite{wg13-2}, and for transmitting quantum information in \cite{rdr11,wg13,rw12}. It was shown in \cite{rdr11} that the recursive construction of polar codes using a CNOT polarizes in both amplitude and phase bases for Pauli and erasure channels, and~\cite{rw12} extended this to general quantum channels. Then, a Calderbank-Shor-Steane (CSS)-like construction~\cite{cs96,steane96} was used to generalize polar codes for transmitting quantum information. This construction requires a small number of EPR pairs to be shared between the sender and the receiver, in order to deal with virtual channels that are bad in both amplitude and phase bases, thus making the resulting code entanglement-assisted~\cite{hsieh-devetak-brun2}. This construction was further refined in \cite{srdr13}, where preshared entanglement is completely suppressed at the cost of a more complicated multilevel coding scheme, in which polar coding is employed separately at each level.  However, all of these quantum channel coding schemes essentially exploit classical polarization, in either amplitude or phase basis.

\smallskip In this paper, we give a purely quantum version of polar codes, {\em i.e.},~a family of polar codes where the good channels are good as quantum channels, and not merely in one basis. Our construction uses a two-qubit gate chosen randomly from the Clifford group to combine two single-qubit channels, which bears similarities to the randomized channel combining/splitting operation proposed in~\cite{sta09}, for the polarization of classical channels with non-binary input alphabets. We show that the synthesized quantum channels tend to become either noiseless or completely noisy as quantum channels, meaning that their symmetric coherent information\footnote{Symmetric coherent (respectively, mutual) information refers the coherent (respectively, mutual) information of the quantum channel for a uniformly distributed input, as defined in Section~\ref{sec:preliminaries}.} tend to either $+1$ or $-1$.
Similar to the classical case, information qubits are transmitted through good (almost noiseless) channels, while the inputs to the bad (noisy) channels are ``frozen'' by sharing EPR pairs between the sender and the receiver. Thus, our scheme is entanglement assisted, for which the capacity is established in~\cite{BSST02}.
We show that the proposed scheme achieves  a quantum communication rate equal to half the symmetric mutual information of the quantum channel. The achieved net rate, defined as the quantum communication rate less the  entanglement consumption rate, is equal to the symmetric coherent information of the quantum channel. 
Further, we show that polarization can be achieved while reducing the set of two-qubit Clifford gates, used to randomize the channel combining operation, to a subset of nine Clifford gates only. 
We also present an efficient decoding algorithm for the proposed quantum polar codes for the particular case of Pauli channels. To a Pauli channel, we associate a classical symmetric channel, with both input and output alphabets given by the quotient of the $1$-qubit Pauli group by its centralizer, and show that the former polarizes quantumly if and only if the latter polarizes classically. This equivalence provides an alternative proof of the quantum polarization for a Pauli channel and, more importantly, an effective way to decode the quantum polar code for such channels, by decoding its classical counterpart. Fast polarization properties \cite{sta09,at09} are also proven for Pauli channels, by using techniques similar to those in~\cite{sta09}. Finally, we present a code construction based on chaining several quantum polar codes~\cite{BDH06}, which is shown to require a rate of preshared entanglement that vanishes asymptotically. We conclude by discussing some perspectives opened by the proposed construction,  which  we believe may complement or extend the classical CSS-based viewpoint.

\section{Preliminaries}
\label{sec:preliminaries}

Here are some basic definitions that we will need to prove the quantum polarization. First, we recall standard definitions of EPR pair and Clifford group.
\begin{definition}[Einstein-Podolsky-Rosen (EPR) pair]
An EPR pair on two qubit systems $A, A'$ is the quantum state, $\Phi_{AA'} = \ket{\Phi_{AA'}}\bra{\Phi_{AA'}}$, where,
 $$\ket{\Phi_{AA'}} = \frac{\ket{0_A0_{A'}} + \ket{1_A1_{A'}}}{\sqrt{2}}.$$
\end{definition}
\begin{definition}[Clifford group]
Let $P_n$ be Pauli group on $n$ qubits. The Clifford group $\mathcal{C}_n$ on $n$ qubits is the group of unitary transformations that take $P_n$ to $P_n$ via conjugation. Precisely,
\begin{equation*}
\mathcal{C}_n \eqdef \{ U \in U(2^n) \mid U \sigma U^\dagger \in P_n, \forall \sigma \in P_n \}  \,/\, U(1),
\end{equation*}
where global phase factors are ignored, since $U$ and $e^{i\varphi}U$ act in the same way.
\end{definition}

We will need the conditional sandwiched Rényi entropy of order 2, as defined by Renner~\cite{renner-phd}, and the conditional Petz-Rényi entropy of order $\frac{1}{2}$.
\begin{definition}[Conditional sandwiched Rényi entropy of order 2]
\label{def:renyi-2-entropy}
    Let $\rho_{AB}$ be a quantum state. Then,
    \[ \htwo(A|B)_{\rho} := -\log \tr\left[ \rho_B^{-\frac{1}{2}} \rho_{AB} \rho_B^{-\frac{1}{2}} \rho_{AB} \right], \]
    where $\rho_B := \tr_A(\rho_{AB})$ is the quantum state obtained by tracing out the $A$ system.
\end{definition}
\begin{definition}[Petz-Rényi entropy of order $\frac{1}{2}$]
\label{def:renyi-half-entropy}
    Let $\rho_{AB}$ be a quantum state. Then,
    \[ \hhalf(A|B)_{\rho} := 2 \log \sup_{\sigma_B} \tr\left[ \rho_{AB}^{\frac{1}{2}} \sigma^{\frac{1}{2}}_B \right], \]
    where the supremum is taken over all quantum states $\sigma_B$. 
\end{definition}
As shown in~\cite[Theorem 2]{tbh14}, those two quantities satisfy a duality relation: given a pure tripartite state $\rho_{ABE}$, $\htwo(A|B)_{\rho} = -\hhalf(A|E)_{\rho}$.

\medskip Throughout this work, we shall consider quantum channels $\mathcal{N}_{A' \rightarrow B}$, with  qubit input system $A'$ and output system $B$ of arbitrary dimension. When no confusion is possible, we shall discard the channel input and output systems from the notation. We will also need the symmetric coherent information of a quantum channel, and the concept of complementary channel.
\begin{definition}[Symmetric coherent information]
\label{def:symmetric-coherent-information}
    Let $\mathcal{N}_{A' \rightarrow B}$ be a channel with qubit input $A'$ and output $B$ of arbitrary dimension. The symmetric coherent information of  $\mathcal{N}$ is defined as the coherent information of the channel for a uniformly distributed input, that is
    $$I(\mathcal{N}) := -H(A|B)_{\mathcal{N}(\Phi_{A'A})} \in [-1, 1],$$
where $H(A|B)_{\rho_{AB}} := H(\rho_{AB}) - H(\rho_B)$, with $H(\sigma)$ being the Von Neumann entropy of a density matrix $\sigma$ and $\rho_B := \tr_A(\rho_{AB})$, and $\mathcal{N}(\Phi_{A'A}) := (\mathcal{N}\otimes I_A)(\Phi_{A'A})$ is the quantum state on the $AB$ system obtained by applying $\mathcal{N}$ on the $A'$-half of the EPR pair $\Phi_{A'A}$.
\end{definition}
\begin{definition}[Complementary channel]
\label{def:complementary-channel}
    Let $\mathcal{N}_{A' \rightarrow B}$ be a channel with qubit input $A'$ and output $B$ of arbitrary dimension, and let $U_{A' \rightarrow BE}$ be a Stinespring dilation of $\mathcal{N}$ (i.e.~a partial isometry such that $\mathcal{N}(\cdot) = \tr_E[U (\cdot) U^{\dagger}]$). The complementary channel of $\mathcal{N}$ is then $\mathcal{N}^c_{A' \rightarrow E}$, which is given by $\mathcal{N}^c(\cdot) := \tr_B[U (\cdot) U^{\dagger}]$. 
\end{definition}
Technically, Definition~\ref{def:complementary-channel} depends on the choice of the Stinespring dilation, so the complementary channel is only unique up to an isometry on the output system. However, this will not matter for any of what we do here.

\medskip Finally, we need the following lemma, providing necessary conditions for the convergence of a stochastic process. The lemma below is a slightly modified version of \cite[Lemma 2]{sta09}, so as to meet our specific needs. The proof is omitted, since it is essentially the same as the one in {\em loc. cit.} (see also \cite[Remark 1]{sta09}).
\begin{lemma}[{\cite[Lemma 2]{sta09}}]\label{lemma:stochastic_proc_convergence}
Suppose $B_i$, $i=1,2,\dots$ are independent and identically distributed (i.i.d.), $\{0,1\}$-valued random variables with $P(B_1 = 0) = P(B_1 = 1) = 1/2$, defined on a probability space $(\Omega, {\cal F}, P)$. Set ${\cal F}_0 = \{ \phi, \Omega\}$ as the trivial $\sigma$-algebra and set ${\cal F}_n$, $n\geq 1$, to be the $\sigma$-field generated by $(B_1, \dots, B_n)$. Suppose further that two stochastic processes $\{I_n : n\geq 0\}$ and $\{T_n : n\geq 0\}$ are defined on this probability space with the following properties:
\begin{description}
\item[{\rm\em(i.1)}] $I_n$ takes values in $[\iota_0, \iota_1]$ and is measurable with respect to ${\cal F}_n$. That is, $I_0$ is a constant, and $I_n$ is a
function of $B_1, \dots, B_n$.
\item[{\rm\em(i.2)}] $\{ (I_n, {\cal F}_n) : n \geq 0\}$ is a martingale.
\item[{\rm\em(t.1)}] $T_n$ takes values in the interval $[\theta_0, \theta_1]$ and is measurable with respect to ${\cal F}_n$.
\item[{\rm\em(t.2)}] $T_{n+1} \leq f(T_n)$ when $B_{n+1} = 1$,  for some continuous function $f: [\theta_0, \theta_1] \rightarrow [\theta_0, \theta_1]$, such that $f(\theta) < \theta, \forall \theta \in (\theta_0, \theta_1)$.
\item[{\rm\em(i\&t.1)}] For any $\epsilon > 0$ there exists $\delta > 0$, such that $I_n \in (\iota_0+\epsilon, \iota_1-\epsilon)$ implies $T_n \in (\theta_0+\delta, \theta_1-\delta)$. 
\end{description}

\smallskip \noindent Then, $\displaystyle I_\infty \eqdef \lim_{n\rightarrow\infty} I_n$ exists with probability 1, $I_\infty$ takes values in $\{\iota_0, \iota_1\}$, and $\mbE(I_\infty)\eqdef \iota_0 P(I_\infty = \iota_0) + \iota_1 P(I_\infty = \iota_1) = I_0$.
\end{lemma}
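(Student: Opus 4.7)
The plan is to obtain convergence of $I_n$ from the martingale hypothesis, and then use conditions (t.2) and (i\&t.1) together with the Borel--Cantelli lemma applied to the i.i.d.\ bits $B_n$ to rule out the possibility that the limit takes interior values. By (i.1)--(i.2), $\{I_n\}$ is a uniformly bounded martingale taking values in $[\iota_0, \iota_1]$, so Doob's martingale convergence theorem yields an almost sure (and, by dominated convergence, $L^1$) limit $I_\infty \eqdef \lim_n I_n$. Preservation of expectation under $L^1$ convergence then gives $\mbE(I_\infty) = \mbE(I_0) = I_0$, and if in addition $I_\infty \in \{\iota_0, \iota_1\}$ a.s., this identity rewrites as $I_0 = \iota_0 P(I_\infty = \iota_0) + \iota_1 P(I_\infty = \iota_1)$. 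Thus the entire content of the lemma reduces to proving this last claim.

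To establish it I would argue by contradiction: assume there exists $\epsilon > 0$ such that the event $A_\epsilon \eqdef \{I_\infty \in (\iota_0+\epsilon, \iota_1-\epsilon)\}$ has positive probability. For each $\omega \in A_\epsilon$, almost sure convergence supplies a finite random index $N(\omega)$ such that $I_n(\omega) \in (\iota_0+\epsilon, \iota_1-\epsilon)$ for all $n \geq N(\omega)$; condition (i\&t.1) then furnishes a deterministic $\delta > 0$ (depending only on $\epsilon$) so that $T_n(\omega) \in (\theta_0+\delta, \theta_1-\delta)$ for all $n \geq N(\omega)$. By continuity of $f$ and the strict bound $f(\theta) < \theta$ on $(\theta_0, \theta_1)$, the map $\theta \mapsto \theta - f(\theta)$ attains a positive minimum $\eta > 0$ on the compact interval $[\theta_0+\delta, \theta_1-\delta]$. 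Condition (t.2) therefore forces $T_{n+1}(\omega) \leq T_n(\omega) - \eta$ whenever $n \geq N(\omega)$ and $B_{n+1} = 1$.

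Fix an integer $L > (\theta_1 - \theta_0)/\eta$. The i.i.d.\ fair sequence $\{B_n\}$ almost surely contains infinitely many disjoint blocks of $L$ consecutive $1$'s, by the second Borel--Cantelli lemma applied to the independent events $\{B_{kL+1} = \cdots = B_{(k+1)L} = 1\}$ of probability $2^{-L}$. In particular, almost surely on $A_\epsilon$ some such block starts at an index $n \geq N(\omega)$; iterating (t.2) across this block gives $T_{n+L}(\omega) \leq T_n(\omega) - L\eta < \theta_0 + \delta$, contradicting the lower bound $T_{n+L}(\omega) > \theta_0+\delta$ deduced from (i\&t.1). Hence $P(A_\epsilon) = 0$ for every $\epsilon > 0$, so $I_\infty \in \{\iota_0, \iota_1\}$ almost surely, completing the argument. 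The step I expect to require the most care is the interplay between the random threshold $N(\omega)$ and the Borel--Cantelli argument for long runs: because infinitely many $L$-blocks of $1$'s occur almost surely regardless of how large $N$ happens to be, the intersection with $A_\epsilon$ still carries the contradiction, but this is the point where a careless argument could fail if one tried to choose the block before revealing $N$.
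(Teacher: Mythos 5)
Your proof is correct and follows the same standard route---Doob's martingale convergence theorem to get $I_\infty$, then a Borel--Cantelli run-of-$1$s argument combined with (t.2) and (i\&t.1) to force $T_n$ below the interior threshold and so rule out $I_\infty \in (\iota_0,\iota_1)$---that the paper's cited source \cite[Lemma~2]{sta09} uses; the paper omits the proof for exactly this reason. The only phrasing to tighten is ``for each $\omega \in A_\epsilon$'' (should be ``for almost every $\omega \in A_\epsilon$'', excising the null set where convergence or the infinitely-many-runs event fails), but this does not affect the argument.
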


\section{Purely Quantum Polarization}\label{sec:purely_quantum_polarization}
In this section, we introduce our purely quantum version of polar codes, which is based on the channel combining and splitting operations depicted in Fig.~\ref{fig:channel-combine-split}. For the channel combining operation (Fig.~\ref{fig:combined-channel}), we consider a randomly chosen two-qubit Clifford unitary, to combine two independent copies of a quantum channel ${\cal W}$. The combined channel is then split, with the corresponding bad and good channels shown in Fig.~\ref{fig:bad-channel} and Fig.~\ref{fig:good-channel}, respectively. 
In other words, the bad channel $\mathcal{W} \boxast_C \mathcal{W}$ is a channel from $U_1$ to $Y_1 Y_2$ that acts as 
$$(\mathcal{W} \boxast_C \mathcal{W})(\rho) = \mathcal{W}^{\otimes 2}\left( C (\rho \otimes \frac{\ident}{2}) C^{\dagger} \right).$$
Likewise, the good channel $\mathcal{W} \varoast_C \mathcal{W}$ is a channel from $U_2$ to $R_1 Y_1 Y_2$ that acts as 
$$(\mathcal{W} \varoast_C \mathcal{W})(\rho) = \mathcal{W}^{\otimes 2}\left( C(\Phi_{R_1 U_1} \otimes \rho) C^{\dagger} \right),$$ where $\Phi_{R_1 U_1}$ is an EPR pair. 

\begin{figure}[!b]
    \centering
\subfigure[Combined channel]{%
\begin{tikzpicture}[scale=0.9]
        \draw 
            (0,0) node[draw] (canal1) {$\mathcal{W}$}
            (canal1) ++(0, -1) node[draw] (canal2) {$\mathcal{W}$}
            ($.5*(canal1)+.5*(canal2)$) ++(-1.5, 0) node[draw, minimum height=1.8cm] (C) {$C$}
            ;

        \draw
            (canal1 -| C.east) to node[above] {$X_1$} (canal1)
            (canal2 -| C.east) to node[above] {$X_2$} (canal2)
            (canal1 -| C.west) to ++(-0.4, 0) node[left] {$U_1$}
            (canal2 -| C.west) to ++(-0.4, 0) node[left] {$U_2$}
            (canal1.east) to ++(0.4, 0) node[right] {$Y_1$}
            (canal2.east) to ++(0.4, 0) node[right] {$Y_2$}
            ;
            
        \node[draw, dashed,fit=(C) (canal2) (canal1)] { };
    \end{tikzpicture}
    \label{fig:combined-channel}
}\hfill%
\subfigure[Bad channel, $\mathcal{W} \boxast_C \mathcal{W}$]{
    \begin{tikzpicture}[scale=0.9]
        \draw 
            (0,0) node[draw] (canal1) {$\mathcal{W}$}
            (canal1) ++(0, -1) node[draw] (canal2) {$\mathcal{W}$}
            ($.5*(canal1)+.5*(canal2)$) ++(-1.5, 0) node[draw, minimum height=1.8cm] (C) {$C$}
            ;

        \draw
            (canal1 -| C.east) to node[above] {$X_1$} (canal1)
            (canal2 -| C.east) to node[above] {$X_2$} (canal2)
            (canal1 -| C.west) to ++(-1.5, 0) node[left] {$U_1$}
            (canal2 -| C.west) to ++(-0.5, 0) node[left] (ground) {$\frac{\ident}{2}$}
            (canal1.east) to ++(0.4, 0) node[right] (y1) {$Y_1$}
            (canal2.east) to ++(0.4, 0) node[right] (y2) {$Y_2$}
            ;

        \node[draw, dashed,fit=(C) (canal2) (ground)] { };
    \end{tikzpicture}
    \label{fig:bad-channel}
}\hfill%
\subfigure[Good channel, $\mathcal{W} \varoast_C \mathcal{W}$]{
    \begin{tikzpicture}[scale=0.9]
        \draw 
            (0,0) node[draw] (canal1) {$\mathcal{W}$}
            (canal1) ++(0, -1) node[draw] (canal2) {$\mathcal{W}$}
            ($.5*(canal1)+.5*(canal2)$) ++(-1.5, 0) node[draw, minimum height=1.8cm] (C) {$C$}
            ;

        \draw
            (canal1 -| C.east) to node[above] {$X_1$} (canal1)
            (canal2 -| C.east) to node[above] {$X_2$} (canal2)
            (canal2 -| C.west) to ++(-2.8, 0) node[left] {$U_2$} 
            (canal1.east) to ++(0.4, 0) node[right] (y1) {$Y_1$}
            (canal2.east) to ++(0.4, 0) node[right] {$Y_2$}
            (canal1 -| C.west) to ++(-0.5, 0) to ++(-.5, .5) node[left] (EPR) {$\Phi_{R_1U_1}$} coordinate (coude) to ++(.5, .5) coordinate (topline) to (topline -| y1.west) node[right] (R) {$R_1$}
            ;

        \node[draw, dashed, fit=(EPR) (C) (coude) (topline) (canal2)] { };
    \end{tikzpicture}
    \label{fig:good-channel}
}
    \caption{Channel combining and splitting. (a) Combined channel: $C$ is a two-qubit Clifford unitary chosen at random. (b) Bad channel: we input a totally mixed
state into the second input. (c) Good channel: we input half of an EPR pair into the first input, and the other half becomes the output $R_1$.}
    \label{fig:channel-combine-split}
\end{figure}

\medskip The polarization construction is obtained by recursively applying the above channel combining and splitting operations. Let us denote ${\cal W}_C^{(0)} \eqdef {\cal W}\boxast_C{\cal W}$, ${\cal W}_C^{(1)} \eqdef {\cal W}\varoast_C{\cal W}$, where index $C$ in the above notation indicates the Clifford unitary used for the channel combining operation. To accommodate a random choice of $C$,  a classical description of $C$ must be included as part of the output of the bad/good channels at each step of the transformation.  To do so, for $i_1 = 0, 1$, we define
\begin{equation}\label{eq:recursion_first_step}
{\cal W}^{(i_1)}(\rho) = \frac{1}{|\mathcal{C}_2|} \sum_{C \in \mathcal{C}_2} \ket{C}\bra{C} \otimes {\cal W}_C^{(i_1)}(\rho), 
\end{equation}
where $\mathcal{C}_2$ denotes the Clifford group on two qubits,  $|\mathcal{C}_2|$ is the number of elements of $\mathcal{C}_2$ and is given by $|\mathcal{C}_2|=11520$ \cite{ozols2008clifford}, and $\{ \ket{C} \}_{C \in \mathcal{C}_2}$ denotes an orthogonal basis of some auxiliary system. 
Put differently,  ${\cal W}^{(i_1)}$ is the classical equiprobable mixture of quantum channels $\left\{ {\cal W}_C^{(i_1)} \mid C \in \mathcal{C}_2 \right\}$. It has qubit input system (namely, $U_1$ if $i_1 = 0$, or  $U_2$ if $i_1 = 1$), and composite output system consisting of the above auxiliary system and the output system of the ${\cal W}_C^{(i_1)}$ channels (namely, $Y_1Y_2$ if $i_1 = 0$, or $R_1Y_1Y_2$ if $i_1 = 1$).

\medskip Now, applying twice the  operation ${\cal W} \mapsto \left({\cal W}^{(0)},{\cal W}^{(1)}\right)$, we get channels ${\cal W}^{(i_1 i_2)} \eqdef \left( {\cal W}^{(i_1)} \right)\,\!^{(i_{2})}$,  where  $(i_1i_{2}) \in \{00, 01, 10, 11\}$.
In general, after $n$ levels of recursion, we obtain $2^n$ channels\footnote{Throughout this paper, we shall use a string-like notation for binary vectors, {\em e.g.}, $(i_1\cdots i_n)\in \{0,1\}^n$. In Section~\ref{sec:quantum_polar_coding}, binary strings  $(i_1\cdots i_n)\in \{0,1\}^n$ will be further identified to integers $i\in\{0,1,\dots,N-1\}$, where $N = 2^n$. This will be explicitly stated in the text, so as to avoid any possible confusion that might arise.}:
\begin{equation}\label{eq:recursive_construction}
{\cal W}^{(i_1\cdots i_{n})} \eqdef \left( {\cal W}^{(i_1\cdots i_{n-1})} \right)\,\!^{(i_{n})},  \ \ (i_1\cdots i_{n}) \in \{0, 1\}^n.
\end{equation}

Since channels ${\cal W}^{(i_1\cdots i_{n})}$ have qubit input system (and composite output system), it follows from Definition~\ref{def:symmetric-coherent-information} that their symmetric coherent information  $I\left( {\cal W}^{(i_1\cdots i_{n})}\right) \in [-1, +1]$ (see also Lemma~\ref{lem:i-and-r-for-mixture} below). Our main theorem below states that as $n$ goes to infinity, the symmetric coherent information of the synthesized channels ${\cal W}^{(i_1\cdots i_{n})}$ polarizes, meaning that it goes to either $-1$ or $+1$, except possibly for a vanishing fraction of channels.
 To prove the polarization theorem, we will utilize Lemma~\ref{lemma:stochastic_proc_convergence}. This basically requires us to find two quantities $I$ and $T$ that respectively play the roles of the symmetric mutual information of the channel and of the Bhattacharyya parameter from the classical case.

\smallskip   As mentioned above, for $I$ we shall consider the symmetric coherent information of the quantum channel.  
  For $T$, we will need to be slightly more creative.  Any choice (not necessarily unique) of $T$ would be convenient, as long as $I$ and $T$ satisfy the conditions in Lemma~\ref{lemma:stochastic_proc_convergence}. For any channel $\mathcal{N}_{A' \rightarrow B}$, let us define $R(\mathcal{N}) \in \left[\tfrac{1}{2}, 2\right]$ as
\begin{equation} \label{def:renyi_bhat}
R(\mathcal{N}) := 2^{\hhalf(A|B)_{\mathcal{N}(\Phi_{AA'})}} = 2^{-\htwo(A|E)_{\mathcal{N}^c(\Phi_{AA'})}}. 
\end{equation}
This quantity will be our $T$ and we will call it the ``Rényi-Bhattacharyya'' parameter. We can see  from the expression of $\hhalf$ that this indeed looks vaguely like the Bhattacharyya parameter; however we will work mostly with the second form involving the complementary channel as this will be more mathematically convenient for us.

\smallskip Before stating the main theorem, we first provide the following lemma on the symmetric coherent information $I$ and the  Rényi-Bhattacharyya parameter $R$ of a classical mixture of quantum channels.  It will allow us to derive the main steps in the proof of the polarization theorem, by conveniently working with the ${\cal W}_C^{(0)}(\rho)$/ ${\cal W}_C^{(1)}(\rho)$ construction, rather than the ${\cal W}^{(0)}(\rho)$/ ${\cal W}^{(1)}(\rho)$ mixture (in which a classical description of $C$ is included in the output). The proof is omitted, since part (a) is trivial, and  (b) follows easily from~\cite[Section B.2]{mdsft13}.

\begin{lemma}\label{lem:i-and-r-for-mixture}
Let ${\cal N}(\rho) = \sum_{x\in X} \lambda_x \ket{x}\bra{x} \otimes {\cal N}_x(\rho)$, be a classical mixture of quantum channels ${\cal N}_x$, where $\{\ket{x}\}_{x\in X}$ is some orthonormal basis of an auxiliary system, and $\lambda_x\geq 0,\forall x\in X$, with $\sum_{x\in X} \lambda_x = 1$. 
Then,
\begin{itemize}
\item[$(a)$] $I({\cal N}) = \mbE_X I({\cal N}_x) \eqdef \sum_{x\in X} \lambda_x I({\cal N}_x)$.
\item[$(b)$] $R({\cal N}) = \mbE_X R({\cal N}_x) \eqdef \sum_{x\in X} \lambda_x R({\cal N}_x)$.
\end{itemize}
\end{lemma}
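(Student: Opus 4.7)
For part (a), the approach is immediate: feeding $\Phi_{A'A}$ into ${\cal N}$ produces the classical-quantum state $\sigma_{AXB} = \sum_x \lambda_x \ketbra{x}{x}_X \otimes {\cal N}_x(\Phi_{A'A})_{AB}$, in which the classical label $X$ is part of the effective output of ${\cal N}$. Since the conditional von Neumann entropy of a classical-quantum state decomposes as $H(A|XB)_\sigma = \sum_x \lambda_x H(A|B)_{{\cal N}_x(\Phi_{A'A})}$, and $I({\cal N}) = -H(A|XB)_\sigma$ by definition, the equality $I({\cal N}) = \mbE_X I({\cal N}_x)$ follows in one line.

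For part (b), my plan is to construct an explicit Stinespring dilation of ${\cal N}$ whose complementary channel inherits the classical-mixture form, and then read off $R({\cal N})$ by a block-diagonal calculation. Concretely, picking a Stinespring isometry $U_x : A' \to BE$ for each ${\cal N}_x$ (with a common environment $E$, without loss of generality) and introducing an auxiliary register $X'$ to purify the classical label, I would define
\begin{equation*}
V\ket{\psi}_{A'} := \sum_x \sqrt{\lambda_x}\, \ket{x}_X \ket{x}_{X'} U_x \ket{\psi}.
\end{equation*}
A direct check shows that $V$ is an isometry, that $\tr_{X'E}[V(\cdot)V^\dagger] = {\cal N}(\cdot)$, and that $\tr_{XB}[V(\cdot)V^\dagger] = \sum_x \lambda_x \ketbra{x}{x}_{X'} \otimes {\cal N}_x^c(\cdot)$; so ${\cal N}^c$ is itself a classical mixture of the ${\cal N}_x^c$ with the same weights.

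It then remains to substitute $\sigma_{AX'E} := {\cal N}^c(\Phi_{AA'}) = \sum_x \lambda_x \ketbra{x}{x}_{X'} \otimes \tau^x_{AE}$, with $\tau^x := {\cal N}_x^c(\Phi_{AA'})$, into Definition~\ref{def:renyi-2-entropy}. Because $X'$ is classical, $\sigma_{X'E}^{-1/2}$ is block-diagonal, and the entire expression inside the trace is block-diagonal as well, each $X'=x$ block carrying a net prefactor $\lambda_x$ (the two $\lambda_x^{-1/2}$'s from the inverse square roots cancel one of the two $\lambda_x$'s coming from $\sigma_{AX'E}$). One then obtains
\begin{equation*}
\tr\bigl[\sigma_{X'E}^{-1/2} \sigma_{AX'E} \sigma_{X'E}^{-1/2} \sigma_{AX'E}\bigr] = \sum_x \lambda_x \tr\bigl[(\tau^x_E)^{-1/2} \tau^x_{AE} (\tau^x_E)^{-1/2} \tau^x_{AE}\bigr] = \sum_x \lambda_x 2^{-\htwo(A|E)_{\tau^x}},
\end{equation*}
and invoking $R({\cal N}) = 2^{-\htwo(A|X'E)_\sigma}$ together with $R({\cal N}_x) = 2^{-\htwo(A|E)_{\tau^x}}$ yields $R({\cal N}) = \mbE_X R({\cal N}_x)$. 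No step is genuinely difficult; the only mild care needed is in identifying the right Stinespring dilation so that the complementary channel retains the classical-mixture structure, after which everything collapses to an exact block-diagonal computation with no inequalities to fight.
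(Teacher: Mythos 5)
Your proof is correct, and since the paper omits the argument entirely (part~(a) is dismissed as trivial, part~(b) is delegated to a reference on the sandwiched R\'enyi divergence), your explicit derivation actually fills in what the paper leaves to the reader. For~(a) your decomposition of $H(A|XB)$ over the classical register is the standard one-liner, matching what the paper has in mind. For~(b) the paper's implicit route is to invoke the additivity of $\tilde{D}_2$ over direct sums, proved in the cited appendix, and then translate that statement into the conditional-entropy form; you instead build the Stinespring isometry $V\ket{\psi} = \sum_x \sqrt{\lambda_x}\,\ket{x}_X\ket{x}_{X'} U_x\ket{\psi}$ by hand, observe that the complementary channel ${\cal N}^c$ inherits the CMP structure with the classical flag moved to $X'$, and then do the block-diagonal bookkeeping directly in Definition~\ref{def:renyi-2-entropy}. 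Both routes exploit the same underlying fact --- that everything is simultaneously block-diagonal in the classical label, so the $\lambda_x^{-1/2}$'s from the two copies of $\sigma_{X'E}^{-1/2}$ cancel one of the two $\lambda_x$'s from the two copies of $\sigma_{AX'E}$, leaving exactly $\lambda_x$ per block. Your version is slightly longer but self-contained and avoids having to unfold the reference; the one point worth flagging for rigor is that $\sigma_{X'E}^{-1/2}$ must be read as the generalized inverse so that $x$ with $\lambda_x = 0$ contribute nothing, which is the usual convention and harmless here.
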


\smallskip We can now state the polarization theorem.
\begin{theorem}\label{thm:quantum_polarization}
For any qubit-input quantum channel $\mathcal{W}$, let $I(\mathcal{W})$ be its symmetric coherent information, and  $\left\{\mathcal{W}^{(i_1 \cdots i_n)} : (i_1 \cdots i_n) \in \{0, 1\}^n \right\}$  be the set of virtual channels defined in (\ref{eq:recursive_construction}). Then, for any $\delta > 0$,
\begin{equation*}
\lim_{n\rightarrow\infty} \frac{\#\{(i_1\cdots i_{n}) \in \{0,1\}^{n} : I\left( {\cal W}^{(i_1\cdots i_{n})} \right) \in (-1+\delta, 1-\delta)  \}}{2^n} = 0,
\end{equation*}
and furthermore, 
\begin{equation*}
\lim_{n \rightarrow \infty} \frac{\#\left\{ (i_1 \cdots i_n) \in \{0,1\}^n : I(\mathcal{W}^{(i_1 \cdots i_n)}) \geqslant 1-\delta\right\} }{2^n} = \frac{I(\mathcal{W}) + 1}{2}.
\end{equation*}
\end{theorem}

\begin{proof}
Let $\{B_n : n\geq 1\}$ be a sequence of i.i.d., $\{0,1\}$-valued random variables with $P(B_n = 0) = P(B_n = 1) = 1/2$, as in Lemma~\ref{lemma:stochastic_proc_convergence}. Let $\{I_n : n\geq 0\}$ and $\{R_n : n\geq 0\}$ be the stochastic processes defined by  $I_n \eqdef I\left( {\cal W}^{(B_1\cdots B_{n})} \right)$ and $R_n\eqdef R\left( {\cal W}^{(B_1\cdots B_{n})} \right)$. By convention, ${\cal W}^{(\varnothing)} \eqdef {\cal W}$, thus $I_0 = I({\cal W})$ and $R_0 = R({\cal W})$. We prove that all the conditions of Lemma~\ref{lemma:stochastic_proc_convergence} hold for $I_n$ and $T_n\eqdef R_n$.

\begin{description}
    \item[(i.1)] Straightforward (with $[\iota_0, \iota_1] = [-1,1]$).
    
    \item[(i.2)] We must show that $I_n$ forms a martingale. In other words, that the channel combining and splitting transformation does not change the total coherent information, {\em i.e.}, $I\left({\cal W}^{(0)}\right) + I\left({\cal W}^{(1)}\right) = 2I\left({\cal W}\right)$. This follows from Lemma~\ref{lem:i-is-a-martingale} below, and Lemma~\ref{lem:i-and-r-for-mixture} (a).

    \item[(t.1)] Straightforward (with $[\theta_0, \theta_1] = [\frac{1}{2}, 2]$).
    
    \item[(t.2)] Here, we will show that $R_{n+1} = \frac{2}{5} + \frac{2}{5} R_n^2$, when $B_{n+1}=1$. It is enough to prove it for $n=0$ ({\em i.e.}, the first step of recursion), since in the general case the proof is obtained simply by replacing ${\cal W}$ with ${\cal W}^{(B_1\cdots B_n)}$. First, by using Lemma~\ref{lem:i-and-r-for-mixture} (b), and assuming $B_{1}=1$, we get $R_{1} \eqdef R\left( {\cal W}^{(1)} \right) =  \mbE_C R\left( {\cal W}_C^{(1)} \right) = \mbE_C R\left( \mathcal{W} \varoast_C \mathcal{W} \right)$, 
    where the last equality is simply a reminder of our notation ${\cal W}_C^{(1)} \eqdef \mathcal{W} \varoast_C \mathcal{W}$. We then prove that $ \mbE_C R\left(\mathcal{W} \varoast_C \mathcal{W}\right) = \frac{2}{5} + \frac{2}{5} R(\mathcal{W})^2$. This is where most of the action happens, and the proof is in Lemma~\ref{lem:d-good-channel}.

\item[(i\&t.1)]  For any $\varepsilon > 0$, there exists a $\delta > 0$ such that $I_n \in (-1+\varepsilon, 1-\varepsilon)$ implies that $R_n \in (\frac{1}{2}+\delta, 2-\delta)$. In other words, we need to show that if $R$ polarizes, then so does $I$. This holds for any choice of the Clifford unitary in the channel combining operation, and is proven in Lemma~\ref{lem:i-and-t-1}.
\end{description}

\noindent  By using Lemma~\ref{lemma:stochastic_proc_convergence}, we conclude that $I_\infty := \lim_{n\rightarrow\infty}I_n$ exists with probability $1$ and takes values in $\{-1, +1\}$, which implies the first limit, and $\mbE(I_\infty) := P(I_\infty = +1) - P(I_\infty = -1) = I(W)$, which implies the second limit (since also, $P(I_\infty = +1) + P(I_\infty = -1) = 1$).
\end{proof}

We now proceed with the lemmas. The following lemmas are stated in slightly more general settings, with the channel combining and splitting construction applied to two quantum channels ${\cal N}$ and ${\cal M}$, rather than to two copies of the same quantum channel ${\cal W}$. More precisely, the channels $\mathcal{N} \boxast_C \mathcal{M}$ and $\mathcal{N} \varoast_C \mathcal{M}$ are defined by using a similar construction to the one in Fig.~\ref{fig:channel-combine-split}, while applying ${\cal N}$ on the first  input (top), and ${\cal M}$ on the second  input (bottom). Thus, $(\mathcal{N} \boxast_C \mathcal{M})(\rho) = (\mathcal{N}\otimes \mathcal{M}) \left( C (\rho \otimes \frac{\ident}{2}) C^{\dagger} \right)$ and $(\mathcal{N} \varoast_C \mathcal{M})(\rho) = (\mathcal{N}\otimes \mathcal{M})\left( C(\Phi_{R_1 U_1} \otimes \rho) C^{\dagger} \right)$, where $\Phi_{R_1 U_1}$ is an EPR pair.

\begin{lemma} \label{lem:i-is-a-martingale}
    Given two channels $\mathcal{N}_{A'_1 \rightarrow B_1}$ and $\mathcal{M}_{A'_2 \rightarrow B_2}$ with qubit inputs, then 
    $$I(\mathcal{N} \varoast_C \mathcal{M}) + I(\mathcal{N} \boxast_C \mathcal{M}) = I(\mathcal{N}) + I(\mathcal{M}),$$ 
    and this holds for all choices of $C$.
\end{lemma}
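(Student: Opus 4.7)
The plan is to define the joint output state
\[
\tau_{A_1 A_2 B_1 B_2} \;\eqdef\; (\mathcal{N} \otimes \mathcal{M})\!\left( C\,(\Phi_{A_1 U_1} \otimes \Phi_{A_2 U_2})\, C^{\dagger}\right)
\]
and read off the three quantities $I(\mathcal{N}) + I(\mathcal{M})$, $I(\mathcal{N} \boxast_C \mathcal{M})$, and $I(\mathcal{N} \varoast_C \mathcal{M})$ as different conditional entropies on this single state, then close with the chain rule $H(A_1 A_2 | B_1 B_2) = H(A_1 | B_1 B_2) + H(A_2 | A_1 B_1 B_2)$.

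First I would verify the total: since $C$ is a unitary on $U_1 U_2$ and $\Phi_{A_1 U_1} \otimes \Phi_{A_2 U_2}$ is maximally entangled between $A_1 A_2$ and $U_1 U_2$, the state entering $\mathcal{N} \otimes \mathcal{M}$ is still a maximally entangled state between $A_1 A_2$ and $X_1 X_2$. Because any two such purifications are related by an isometry on the purifying system, the conditional entropy $H(A_1 A_2 | B_1 B_2)_\tau$ is the same as for the product purification $\Phi_{A_1 X_1} \otimes \Phi_{A_2 X_2}$, for which it splits as $H(A_1 | B_1) + H(A_2 | B_2)$. Hence $-H(A_1 A_2 | B_1 B_2)_\tau = I(\mathcal{N}) + I(\mathcal{M})$.

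Next I would identify the two pieces of the chain rule with the good and bad channel coherent informations. Tracing out $A_2$ replaces $\Phi_{A_2 U_2}$ by $\frac{\ident}{2}_{U_2}$, so $\tau_{A_1 B_1 B_2} = (\mathcal{N} \boxast_C \mathcal{M})(\Phi_{A_1 U_1})$, giving $-H(A_1 | B_1 B_2)_\tau = I(\mathcal{N} \boxast_C \mathcal{M})$. For the other piece, the crucial observation is that $\Phi_{A_1 U_1}$ plays the role of the EPR pair $\Phi_{R_1 U_1}$ in the definition of the good channel: relabelling $A_1$ as $R_1$, the full state $\tau$ coincides with $(\mathcal{N} \varoast_C \mathcal{M})(\Phi_{A_2 U_2})$ on $A_2 R_1 B_1 B_2$, so $-H(A_2 | A_1 B_1 B_2)_\tau = I(\mathcal{N} \varoast_C \mathcal{M})$. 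Combining via the chain rule yields the claimed equality, and since nothing in the argument used any special property of $C$ beyond unitarity, it holds for every Clifford $C$.

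The main (and really only) delicate step is the one I would take care to spell out: verifying that $(I_{A_1 A_2} \otimes C)(\Phi_{A_1 U_1} \otimes \Phi_{A_2 U_2})$ is a maximally entangled state between $A_1 A_2$ and $X_1 X_2$, and hence an admissible purification for computing $I(\mathcal{N} \otimes \mathcal{M})$. Everything else is a straightforward bookkeeping of which EPR pair is playing the role of which reference system in the definitions of $\varoast_C$ and $\boxast_C$, together with the standard chain rule.
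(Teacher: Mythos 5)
Your proof is correct and takes essentially the same approach as the paper: define the joint state, identify the two coherent informations with the two terms of the chain rule, and argue that $-H(A_1A_2|B_1B_2)$ equals $I(\mathcal{N})+I(\mathcal{M})$ because $C$ can be pushed onto the reference systems. The only cosmetic difference is that where you invoke the abstract uniqueness-of-purifications argument, the paper makes the unitary explicit via the ricochet identity $(Z\otimes\ident)\ket{\Phi}=(\ident\otimes Z^\top)\ket{\Phi}$, yielding $\rho = (C^\top\otimes\ident)(\mathcal{N}\otimes\mathcal{M})(\Phi\otimes\Phi)(\bar C\otimes\ident)$ directly.
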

\begin{proof}
    Consider the state 
    $\rho = (\mathcal{N} \otimes \mathcal{M})( C( \Phi_{A_1 A'_1} \otimes \Phi_{A_2 A'_2} ) C^{\dagger} )$ on systems $A_1 A_2 B_1 B_2$. We have that $I(\mathcal{N} \boxast_C \mathcal{M}) = -H(A_1|B_1 B_2)_{\rho}$ and $I(\mathcal{N} \varoast_C \mathcal{M}) = -H(A_2|A_1 B_1 B_2)_{\rho}$. Therefore, by the chain rule, 
\begin{align*}
I(\mathcal{N} \boxast_C \mathcal{M}) + I(\mathcal{N} \varoast_C \mathcal{M})  
      &= -H(A_1|B_1 B_2)_{\rho} - H(A_2|A_1 B_1 B_2)_{\rho}\\
      &= -H(A_1 A_2|B_1 B_2)_{\rho}. 
\end{align*}
Now, recall that the EPR pair has the property that $(Z \otimes \ident) \ket{\Phi} = (\ident \otimes Z^{\top}) \ket{\Phi}$ for any matrix $Z$,  where $Z^{\top}$ is the transpose of $Z$. Using this, we can move $C$ from the input systems $A'_1$ and $A'_2$ to the purifying systems $A_1 A_2$: $\rho = C^{\top}(\mathcal{N} \otimes \mathcal{M})(\Phi_{A_1 A'_1} \otimes \Phi_{A_2 A'_2}) \bar{C}$,  where $\bar{C}$ is the complex conjugate of $C$. Hence, we have 
\begin{align*}
    -H(A_1 A_2 | B_1 B_2)_{\rho} &= -H(A_1 A_2 | B_1 B_2)_{(\mathcal{N} \otimes \mathcal{M})(\Phi)}\\
    &= -H(A_1|B_1)_{\mathcal{N}(\Phi)} - H(A_2 | B_2)_{\mathcal{M}(\Phi)}\\
    &= I(\mathcal{N}) + I(\mathcal{M}),
\end{align*}
which completes the proof.
\end{proof}

\begin{lemma}\label{lem:d-good-channel}
Given two channels $\mathcal{N}_{A'_1 \rightarrow B_1}$ and $\mathcal{M}_{A'_2 \rightarrow B_2}$ with qubit inputs, then 
\[ \mbE_C R(\mathcal{N} \varoast_C \mathcal{M}) = \frac{2}{5} + \frac{2}{5} R(\mathcal{N}) R(\mathcal{M}), \]
where $C$ is the channel combining Clifford operator  and is chosen uniformly at random over the Clifford group.
\end{lemma}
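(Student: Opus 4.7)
The plan is to reduce $R(\mathcal{N}\varoast_C\mathcal{M})$ to a trace that is bilinear in two copies of $C$, and then use the $2$-design property of the two-qubit Clifford group to evaluate the average exactly.

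I would first derive a closed form for $R$ directly from the Petz form. Starting from $R(\mathcal{N})=2^{\hhalf(A|B)_{\tau_\mathcal{N}}}$ with Choi state $\tau_\mathcal{N}=\mathcal{N}(\Phi_{AA'})$, the supremum over $\sigma_B$ in the definition of $\hhalf$ is attained at $\sigma_B^{1/2}\propto\tr_A\tau_\mathcal{N}^{1/2}$ by Cauchy--Schwarz in the Hilbert--Schmidt inner product, yielding the closed form
\[
R(\mathcal{N})=\tr\!\bigl[(\tr_A\tau_\mathcal{N}^{1/2})^2\bigr].
\]
This identity will be applied both to the single-letter channels $\mathcal{N},\mathcal{M}$ and to the good channel itself.

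Next, using the EPR transpose identity $(I_{A_1A_2}\otimes C)\ket{\Phi\Phi}_{(A_1A_2)(U_1U_2)}=(C^{T}\otimes I)\ket{\Phi\Phi}$, the Choi state of $\mathcal{N}\varoast_C\mathcal{M}$ (input purification $A_2$, output $A_1Y_1Y_2$) becomes
\[
\rho_{A_1A_2Y_1Y_2}=(C^{T}\otimes I)(\tau_\mathcal{N}^{A_1Y_1}\otimes\tau_\mathcal{M}^{A_2Y_2})(\bar C\otimes I),
\]
with square root $\rho^{1/2}=(C^{T}\otimes I)(\tau_\mathcal{N}^{1/2}\otimes\tau_\mathcal{M}^{1/2})(\bar C\otimes I)$. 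Hence $R(\mathcal{N}\varoast_C\mathcal{M})=\tr[(\tr_{A_2}\rho^{1/2})^2]$, and I then apply the swap trick $\tr[M^2]=\tr[(M\otimes M)F]$ on two copies with $F=F_{A_1A_1'}F_{Y_1Y_1'}F_{Y_2Y_2'}$, lifting $\tr_{A_2}$ inside, so that after cycling, all four $C$-factors form a conjugation sandwich on the doubled $A$-system acting on $X:=F_{A_1A_1'}\otimes I_{A_2A_2'}$.

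Now I would average over $C$. Since the Clifford group is closed under transpose and is a unitary $2$-design on dimension $d=4$, averaging the inner sandwich equals the Haar average $\alpha\,I+\beta\,F_{(A_1A_2)(A_1'A_2')}$, with $\alpha,\beta$ determined by $\tr X$ and $\tr[X F_{\rm full}]$. Using $\tr F_{AA'}=2$ and $\tr I_{AA'}=4$, a direct count gives $\tr X=\tr[X F_{\rm full}]=8$, so the standard $2$-design formula $\alpha=\tfrac{d\tr X-\tr[XF]}{d(d^2-1)}$, $\beta=\tfrac{d\tr[XF]-\tr X}{d(d^2-1)}$ yields $\alpha=\beta=\tfrac{2}{5}$ (the coincidence $\alpha=\beta$ reflects the symmetry of $X$).

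Finally, the two resulting traces each factor into an $\mathcal{N}$-part and an $\mathcal{M}$-part on disjoint systems. The $\alpha I$ term leaves only $F_{Y_1Y_1'}F_{Y_2Y_2'}$, and the identity $\tr[(\tau^{1/2}\otimes\tau^{1/2})F_{YY'}]=\tr[(\tr_A\tau^{1/2})^2]$ (from Step~1) turns the factors into $R(\mathcal{N})R(\mathcal{M})$. The $\beta F$ term leaves the full swap on $(AY)\leftrightarrow(A'Y')$, so each factor reduces to $\tr[\tau]=1$, giving $1$ overall. Summing, $\mbE_C R(\mathcal{N}\varoast_C\mathcal{M})=\tfrac{2}{5}+\tfrac{2}{5}R(\mathcal{N})R(\mathcal{M})$. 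The main obstacle I anticipate is the careful bookkeeping of which system each $C$, each $\tau^{1/2}$, and each swap acts on when lifting the partial trace and isolating the correct $X$ for the twirl; once the diagram is set up correctly, the twirl and the residual trace computations are routine.
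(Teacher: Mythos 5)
Your proposal is correct, and while it rests on the same core mechanism as the paper --- a Clifford $2$-design twirl --- it takes a genuinely different and more self-contained route. The paper works with the complementary channel and the form $R(\mathcal{N}) = 2^{-\htwo(A|E)_{\mathcal{N}^c(\Phi)}}$, and then delegates the twirl to an existing decoupling calculation (Eq.~(3.32) of \cite{fred-these}), reading off the two coefficients $\alpha$ and $\beta$ from there; the fact that the combination works out to $\tfrac{2}{5}+\tfrac{2}{5}R(\mathcal{N})R(\mathcal{M})$ emerges after recombining those coefficients. Your version instead stays with the Petz form $R(\mathcal{N}) = 2^{\hhalf(A|B)}$, derives the closed formula $R(\mathcal{N})=\tr\bigl[(\tr_A\tau_{\mathcal{N}}^{1/2})^2\bigr]$ via Cauchy--Schwarz (which is the key new ingredient, since it makes $R$ manifestly a quadratic functional of $\tau^{1/2}$), and then does the swap trick and twirl from scratch. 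This has the advantage of avoiding the complementary channel entirely and not relying on an external reference; it also makes the numerical coefficients $\alpha=\beta=\tfrac{2}{5}$ transparent from the single count $\tr X=\tr[XF]=8$, whereas in the paper they appear as channel-dependent quantities that only simplify at the end. Two small points of care you already flag: the Clifford group must be closed under entrywise complex conjugation (it is, since it normalizes the Pauli group whose elements are real or purely imaginary), so the twirl over $\bar{C}\otimes\bar{C}$ really is a Clifford/Haar twirl; and in your final sentence for the $\alpha I$ term, the operator should be written $I_{AA'}\otimes F_{YY'}$ --- the identity on the doubled $A$-systems is what implements the partial trace $\tr_A$ before the $Y$-swap squares it.
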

\begin{proof}
 Let $\mathcal{N}^c_{A'_1 \rightarrow E_1}$ and $\mathcal{M}^c_{A'_2 \rightarrow E_2}$ be the complementary channels of $\mathcal{N}$ and $\mathcal{M}$ respectively. It can be shown that (see Lemma~\ref{lem:complementary_gd_bad} in Appendix~\ref{sec:proof-quantum-red-clifford}) 
 $$(\mathcal{N} \varoast_C \mathcal{M})^c(\rho) = (\mathcal{N}^c \otimes \mathcal{M}^c)\left( C \left( \frac{\ident_{A'_1}}{2} \otimes \rho \right) C^{\dagger} \right),$$
    and therefore $R(\mathcal{N} \varoast_C \mathcal{M}) = 2^{-\htwo(A_2|E_1 E_2)_{\rho}}$,  where $\rho_{A_2 E_1 E_2} = (\mathcal{N} \varoast \mathcal{M})^c(\Phi_{A_2 A'_2})$. Note that $\rho_{E_1 E_2} = \mathcal{N}^c\left(\frac{\ident}{2} \right)_{E_1} \otimes \mathcal{M}^c\left(\frac{\ident}{2} \right)_{E_2}$, which is independent of $C$. Now, to compute the expected value of this for a random choice of $C$, we proceed as follows:
\begin{align*} 
\mbE_C 2^{-\htwo(A_2|E_1 E_2)_{\rho}} &= \mbE_C \tr\left[ \left( \rho_{E_1 E_2}^{-\frac{1}{4}} \rho_{A_2 E_1 E_2} \rho_{E_1 E_2}^{-\frac{1}{4}} \right)^2 \right] \\
&= \mbE_C \tr\left[ \left( \rho_{E_1 E_2}^{-\frac{1}{4}} (\mathcal{N}^c \otimes \mathcal{M}^c)\left( C \left( \frac{\ident_{A'_1}}{2} \otimes \Phi_{A_2 A'_2} \right) C^{\dagger} \right) \rho_{E_1 E_2}^{-\frac{1}{4}} \right)^2 \right]
\end{align*}
    Now, note that this is basically the same calculation as in~\cite{fred-these}, at Equation (3.32) (there, $U$ is chosen according to the Haar measure over the full unitary group, but all that is required is a 2-design, and hence choosing a random Clifford yields the same result). However, since here we are dealing with small systems, we will not make the simplifications after (3.44) and (3.45) in \cite{fred-these} but will instead keep all the terms. We therefore get 
    $\mbE_C 2^{-\htwo(A_2|E_1 E_2)_{\rho}} = \alpha \tr\left[ \pi_{A_2}^2 \right] + \beta \tr\left[ \pi_{A'_1}^2 \otimes \Phi_{A_2 A'_2} \right]
        = \frac{1}{2} \alpha + \frac{1}{2} \beta$,
    where
    $\alpha = \frac{16}{15} - \frac{4}{15} 2^{-\htwo(A_1 A_2|E_1 E_2)_{\omega}}$, 
    $\beta = \frac{16}{15} 2^{-\htwo(A_1 A_2|E_1 E_2)_{\omega}} - \frac{4}{15}$, 
    and $\omega_{A_1 A_2 E_1 E_2} := (\mathcal{N}^c \otimes \mathcal{M}^c)(\Phi_{A_1 A'_1} \otimes \Phi_{A_2 A'_2})$.  Hence,
    \begin{align*}
        \mbE_C 2^{-\htwo(A_2|E_1 E_2)_{\rho}} &= \frac{6}{15} + \frac{6}{15} 2^{-\htwo(A_1 A_2|E_1 E_2)_{\omega}}\\
        &= \frac{6}{15} + \frac{6}{15}\, 2^{-\htwo(A_1 |E_1)_{\mathcal{N}^c\left(\Phi_{A_1 A'_1}\right)}} \,2^{-\htwo(A_2 |E_2)_{\mathcal{M}^c\left(\Phi_{A_2 A'_2}\right)}}\\
        &= \frac{2}{5} + \frac{2}{5} R(\mathcal{N}) R(\mathcal{M}).
    \end{align*}
    where we have used that the conditional sandwiched Rényi entropy of order 2 is additive with respect to tensor-product states,  which follows easily from Definition~\ref{def:renyi-2-entropy}.
\end{proof}

\begin{lemma}
    Let $\mathcal{N}_{A' \rightarrow B}$ be a channel with qubit input. Then,
    \begin{enumerate}
        \item $R(\mathcal{N}) \leqslant \frac{1}{2}+\delta \Rightarrow I(\mathcal{N}) \geqslant 1 - \log(1+2\delta)$.
        \item  $R(\mathcal{N}) \geqslant 2 - \delta \Rightarrow  I(\mathcal{N}) \leqslant -1 + \sqrt{2 \delta} + \left(1 + \sqrt{\delta/2}\right) h\left( \frac{\sqrt{\delta/2}}{1 + \sqrt{\delta/2}}\right)$,
    \end{enumerate}
    where $h(\cdot)$ denotes the binary entropy function.
    \label{lem:i-and-t-1}
\end{lemma}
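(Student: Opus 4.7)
I would prove the two bullets by very different routes, since the direction of the Rényi-vs-entropy ordering is favourable for one and unfavourable for the other.

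For the first bullet I would use the complementary-channel form $R(\mathcal{N}) = 2^{-\htwo(A|E)_{\mathcal{N}^c(\Phi)}}$ together with the standard fact that the sandwiched conditional Rényi entropy is non-increasing in its order, whence $H(A|E) \geqslant \htwo(A|E)$ for every state. Since $\mathcal{N}(\Phi)_{AB}$ and $\mathcal{N}^c(\Phi)_{AE}$ are the two marginals of a pure tripartite state, the duality $H(A|B) = -H(A|E)$ gives $I(\mathcal{N}) = -H(A|B)_{\mathcal{N}(\Phi)} = H(A|E)_{\mathcal{N}^c(\Phi)}$, so
\[ I(\mathcal{N}) \geqslant \htwo(A|E)_{\mathcal{N}^c(\Phi)} = -\log R(\mathcal{N}) \geqslant -\log(\tfrac{1}{2}+\delta) = 1 - \log(1+2\delta). \]

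For the second bullet the same monotonicity only lower-bounds $H(A|B)$, which is the wrong direction, so I would instead argue via closeness to an ``ideal'' state. Working with the primal form $R(\mathcal{N}) = 2^{\hhalf(A|B)_{\rho}}$ where $\rho = \mathcal{N}(\Phi)_{AB}$, the hypothesis $R \geqslant 2-\delta$ selects a (near) maximizer $\sigma_B^\ast$ with $\tr[\rho^{1/2}(\ident_A \otimes \sigma_B^{\ast\,1/2})] \geqslant \sqrt{2-\delta}$. Taking $\tau_{AB} := \tfrac{\ident_A}{2} \otimes \sigma_B^\ast$, the factor $(\ident_A/2)^{1/2} = \tfrac{1}{\sqrt 2}\ident_A$ turns this into the quantum-affinity bound $\tr[\rho^{1/2}\tau^{1/2}] \geqslant \sqrt{1 - \delta/2}$. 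The one-sided Powers-Stormer inequality $F(\rho,\tau) \geqslant \tr[\rho^{1/2}\tau^{1/2}]$ upgrades this to a fidelity bound $F(\rho,\tau)^2 \geqslant 1 - \delta/2$, and Fuchs-van de Graaf then yields
\[ \|\rho-\tau\|_1 \;\leqslant\; 2\sqrt{1-F(\rho,\tau)^2} \;\leqslant\; \sqrt{2\delta}. \]

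Since $\tau$ is a product state with $\tau_A = \ident/2$, we have $H(A|B)_{\tau} = 1$. Applying the Alicki-Fannes continuity estimate
\[ \bigl|H(A|B)_{\rho} - H(A|B)_{\tau}\bigr| \;\leqslant\; 4\,\|\rho-\tau\|_1 \log|A| + 2\,h(\|\rho-\tau\|_1) \]
with $\log|A|=1$ and the trace-distance bound above yields $H(A|B)_{\rho} \geqslant 1 - 4\sqrt{2\delta} - 2h(\sqrt{2\delta})$, hence $I(\mathcal{N}) = -H(A|B)_{\rho} \leqslant -1 + 4\sqrt{2\delta} + 2h(\sqrt{2\delta})$. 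The hard step is the chain in the second bullet: moving from a $\hhalf$ lower bound to a trace-distance upper bound and then to a von Neumann entropy bound, while hitting exactly the stated constants. The Powers-Stormer comparison between affinity and Uhlmann fidelity is what makes this chain close cleanly, and selecting the Alicki-Fannes variant indexed by the full $\|\cdot\|_1$ distance (rather than the half-norm one) is what produces precisely the stated $4\sqrt{2\delta} + 2h(\sqrt{2\delta})$.
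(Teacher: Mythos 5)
Your proof is correct and essentially matches the paper's. The first bullet is a mirror image of the paper's argument --- you invoke the monotonicity of the sandwiched order-2 conditional entropy on the complementary-channel side and then pass through the duality $H(A|B) = -H(A|E)$, whereas the paper directly applies $H(A|B)_\rho \leq \hhalf(A|B)_\rho$ on the primal side --- and the second bullet's chain (pick the optimizing $\sigma_B$, bound the affinity $\tr\bigl[\sqrt{\rho}\sqrt{\tfrac{\ident_A}{2}\otimes\sigma_B}\bigr]$ by the fidelity, apply Fuchs--van de Graaf, then Alicki--Fannes against the reference state $\tfrac{\ident_A}{2}\otimes\sigma_B$) coincides step for step with the paper's proof.
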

\begin{proof}
    We first prove point 1). Observe that for any state $\sigma_{AB}$, the inequality $H(A|B)_{\sigma} \leqslant \hhalf(A|B)_{\sigma}$ holds. Now, for $\rho_{AB} = \mathcal{N}(\Phi_{AA'})$, we have that
    \begin{equation*}
    \frac{1}{2} + \delta \geqslant R(\mathcal{N})
        = 2^{\hhalf(A|B)_{\rho}}
        \geqslant 2^{H(A|B)_{\rho}}
        = 2^{-I(\mathcal{N})},
    \end{equation*}
where we have used $\hhalf(A|B)_{\rho} \geqslant H(A|B)_{\rho}$, which follows from the monotonically decreasing property of the conditional Petz-Rényi entropy with respect to its order \cite[Theorem 7]{mdsft13}. It follows that $I(\mathcal{N}) \geqslant 1 - \log(1+2\delta)$.

\medskip We now turn to the second point. We have that
\begin{align*}
        2 - \delta \leqslant R(\mathcal{N}) &= \max_{\sigma_B} \tr\left[ \rho^{\frac{1}{2}}_{AB} \sigma^{\frac{1}{2}}_B \right]^2\\
        &= 2 \max_{\sigma_B} \tr\left[ \sqrt{\rho_{AB}} \sqrt{\frac{\ident_A}{2} \otimes \sigma_B} \right]^2\\
        &\leqslant 2 \max_{\sigma_B} \left\| \sqrt{\rho_{AB}} \sqrt{\frac{\ident_A}{2} \otimes \sigma_B} \right\|_1^2\\
        &= 2 \max_{\sigma_B} F\left( \rho_{AB}, \frac{\ident_A}{2} \otimes \sigma_B \right)^2.
    \end{align*}
    Now, using the Fuchs-van de Graaf inequalities~\cite{fg99}, we get that there exists a $\sigma_B$ such that
    \[ \left\| \rho_{AB} - \frac{\ident_A}{2} \otimes \sigma_B \right\|_1 \leqslant \sqrt{2\delta}. \]
    We are now in a position to use the  Alicki-Fannes-Winter \cite[Lemma 2]{winter2016tight} inequality, which states that
    \begin{align*}
        \left| H(A|B)_{\rho} - 1 \right| \leqslant \sqrt{2 \delta} + \left(1 + \sqrt{\delta/2}\right) h\left( \frac{\sqrt{\delta/2}}{1 + \sqrt{\delta/2}}\right).
    \end{align*}
    This concludes the proof of the lemma.
\end{proof}

\section{Quantum Polarization Using Only Nine Clifford Gates}
\label{sec:polarization_9_cliffords}

In this section, we prove that quantum polarization can be achieved while reducing the set of two-qubit Clifford gates used to randomize the channel combining operation, to a subset  of nine Clifford gates only. To do so, we need to find a subset of Clifford gates such that the condition $(t.2)$ from Lemma~\ref{lemma:stochastic_proc_convergence} is still fulfilled. 

\medskip Let $\mathcal{C}_n$ be the Clifford group on $n$ qubits. Clearly $\mathcal{C}_1 \otimes \mathcal{C}_1$ is a subgroup of $\mathcal{C}_2$, and we may define an equivalence relation on $\mathcal{C}_2$, whose equivalence classes are the left cosets of $\mathcal{C}_1 \otimes \mathcal{C}_1$. 
\begin{definition}
 We say that $C'$ and $C'' \in \mathcal{C}_2$ are equivalent, and denote it by $C'\sim C''$, if there exist $C_1, C_2  \in \mathcal{C}_1$ such that $C'' = C'(C_1 \otimes C_2)$ (see also Fig.~\ref{fig:equiv_cliffords}).
\end{definition}

\begin{figure}[!h]
\centering
\begin{tikzpicture}
\draw (0,0) node[left] {$U_2$} -- +(0.75,0);
\draw (0,1) node[left] {$U_1$} -- +(0.75,0);
\draw (0.75,-0.25) rectangle (1.25,0.25);
\draw (0.75,0.75) rectangle (1.25,1.25);
\draw (1,0) node {$C_2$};
\draw (1,1) node {$C_1$};
\draw (1.25,0) -- +(0.5,0);
\draw (1.25,1) -- +(0.5,0);
\draw (1.75,-0.25) rectangle (2.25,1.25);
\draw (2,0.5) node {$C'$};
\draw (2.25,0) -- +(0.75,0) node[right] {$X_2$};
\draw (2.25,1) -- +(0.75,0) node[right] {$X_1$};
\draw[dashed] (0.5,-0.5) rectangle (2.5,1.5);
\draw (1.5,-0.5) node[below] {$C''$};
\end{tikzpicture}
\caption{Equivalent two-qubit Clifford gates $C' \sim C''$}
\label{fig:equiv_cliffords}
\end{figure}
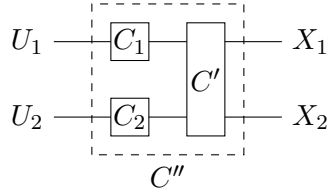

Now, the main observation is that two equivalent Clifford gates used to combine any two quantum channels with qubit inputs, yield the same Rényi-Bhattacharyya parameter of the bad/good channels.
This is stated in the following lemma, whose proof is provided in Appendix~\ref{sec:proof-quantum-red-clifford}.

\begin{lemma}\label{lemma:equiv_conditions}
Let $C', C'' \in \mathcal{C}_2$.  If $C'\sim C''$, then for any two quantum channels $\mathcal{M}$ and $\mathcal{N}$ with qubit inputs, we have:
\begin{equation*}
R(\mathcal{M} \boxast_{C'} \mathcal{N}) = R( \mathcal{M} \boxast_{C''} \mathcal{N}) 
\ \ \text{ and } \ \ 
R(\mathcal{M} \varoast_{C'}\mathcal{N}) =  R( \mathcal{M} \varoast_{C''} \mathcal{N}).
\end{equation*}
\end{lemma}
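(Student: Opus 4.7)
My plan is to reduce both claimed equalities to the invariance of $R$ under pre- and post-composition of a channel with unitary channels. Writing $C'' = C'(C_1 \otimes C_2)$, the first step is to show that the bad (resp.\ good) channel built from $C''$ differs from the one built from $C'$ only by unitary dressings on the input and, for the good channel, also on the output.

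For the bad channel, unfolding the definition gives
\[
(\mathcal{M} \boxast_{C''} \mathcal{N})(\rho) = (\mathcal{M} \otimes \mathcal{N})\bigl[C'(C_1 \otimes C_2)(\rho \otimes \tfrac{\ident}{2})(C_1^{\dagger} \otimes C_2^{\dagger})C'^{\dagger}\bigr].
\]
Because $C_2 \tfrac{\ident}{2} C_2^{\dagger} = \tfrac{\ident}{2}$, the $C_2$ factor is absorbed, leaving $(\mathcal{M} \boxast_{C'} \mathcal{N})(C_1 \rho C_1^{\dagger})$. Thus the bad channel for $C''$ equals the bad channel for $C'$ precomposed with the unitary channel $\mathcal{U}_{C_1}(\cdot) = C_1(\cdot)C_1^{\dagger}$. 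For the good channel, the analogous unfolding has $C_2$ acting on the input $\rho_{U_2}$ and $C_1$ acting on the $U_1$ half of $\Phi_{R_1 U_1}$. Using the EPR identity $(\ident \otimes U)\ket{\Phi} = (U^{\top} \otimes \ident)\ket{\Phi}$, the action of $C_1$ on $U_1$ is transported to $C_1^{\top}$ on the output $R_1$ (and $C_1^{\top}$ is unitary because $\bar{C_1} C_1^{\top} = (C_1 C_1^{\dagger})^{\top} = \ident$). Hence $\mathcal{M} \varoast_{C''} \mathcal{N} = \mathcal{V}_{R_1} \circ (\mathcal{M} \varoast_{C'} \mathcal{N}) \circ \mathcal{U}_{C_2}$ for a unitary channel $\mathcal{V}_{R_1}$ on the output system $R_1$.

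It then remains to verify that $R$ is invariant under pre-composition by a unitary on the input and post-composition by a unitary on the output. Using $\hhalf(A|B)_{\rho} = 2 \log \sup_{\sigma_B} \tr[\rho_{AB}^{1/2} \sigma_B^{1/2}]$: a unitary $V$ on $B$ sends $\rho_{AB}$ to $(\ident \otimes V)\rho_{AB}(\ident \otimes V^{\dagger})$, and by cyclicity of the trace together with the change of variable $\sigma_B \mapsto V^{\dagger} \sigma_B V$ in the supremum, the value is unchanged. A unitary $U$ preceding the channel on the input $A'$ can be moved via the EPR trick onto the purifying system $A$, producing a local unitary on $A$ of $\mathcal{N}(\Phi_{AA'})$, which again leaves $\hhalf(A|B)$ invariant by cyclicity. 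Combining these two invariances with the decompositions of the previous paragraph yields the two claimed equalities. The only thing to get right is the bookkeeping of which system each unitary acts on — in particular recognising the $R_1$-side unitary in the good channel as a post-processing on $B$ that is absorbed by the supremum over $\sigma_B$, rather than as an input-side operation.
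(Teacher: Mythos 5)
Your proof is correct, and the underlying idea is the one the paper uses — the only freedom introduced by replacing $C'$ with $C'' = C'(C_1\otimes C_2)$ is a local unitary, and the relevant R\'enyi quantity is unitarily invariant — but you execute it in the dual picture. You show the channel-level decompositions
\[
\mathcal{M}\boxast_{C''}\mathcal{N} = (\mathcal{M}\boxast_{C'}\mathcal{N})\circ\mathcal{U}_{C_1},
\qquad
\mathcal{M}\varoast_{C''}\mathcal{N} = \mathcal{V}_{R_1}\circ(\mathcal{M}\varoast_{C'}\mathcal{N})\circ\mathcal{U}_{C_2},
\]
and then verify directly that $R(\mathcal{N}) = 2^{\hhalf(A|B)}$ is unchanged under pre-composition by a unitary (EPR trick pushes it to a local unitary on the purifying system $A$) and under post-composition by a unitary (absorbed by the supremum over $\sigma_B$). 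The paper instead passes to the complementary channels $(\mathcal{M}\boxast_{C}\mathcal{N})^c$ and $(\mathcal{M}\varoast_{C}\mathcal{N})^c$, purifies $\tfrac{\ident}{2}$ with a reference $R_2$, proves that the resulting states for $C'$ and $C''$ differ only by local unitaries on the reference/purifying systems, and then invokes unitary invariance of the sandwiched R\'enyi divergence $\tilde{D}_2$ to conclude $\htwo(A|E)$ is unchanged. The two presentations are mirror images under the duality $\htwo(A|B)_\rho = -\hhalf(A|C)_\rho$; yours is arguably more elementary in that it never has to invoke the complementary channel or purify $\tfrac{\ident}{2}$, while the paper's keeps the argument in the same $\htwo(A|E)$ variables used throughout the polarization proof (e.g.\ Lemma~\ref{lem:d-good-channel}), which is what the authors meant by the complementary-channel form being ``more mathematically convenient.'' Both are complete and correct.
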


As a consequence, one may ensure polarization while restricting the set of Clifford gates to any set of representatives of the equivalence classes of the above equivalence relation (since such a restriction will not affect the $\mbE_C R(\mathcal{N} \varoast_C \mathcal{M})$ value, for any two quantum channels $\mathcal{M}$ and $\mathcal{N}$ with qubit inputs). Since $|\mathcal{C}_1| = 24$ and $|\mathcal{C}_2| = 11520$, it follows that there are exactly $11520 / (24\times 24) = 20$ equivalence classes.  A set of representatives of these $20$ equivalence classes can be chosen as follows\footnote{We used a computer program to determine such a set of representatives.}:

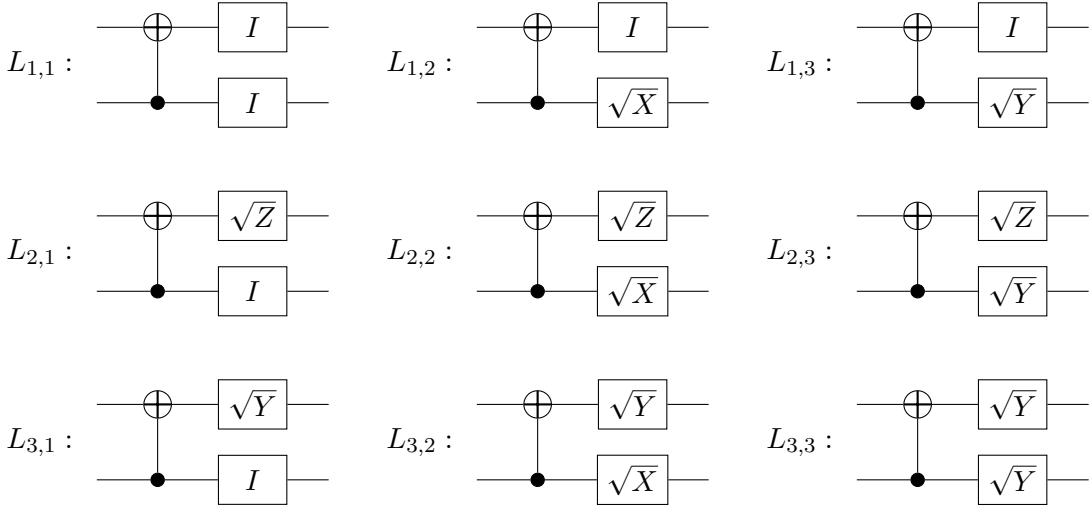
\begin{figure}[!t]
    \centering
    \begin{tikzpicture}
    \def\nodewidth{9mm}     
    \def\nodeheight{6.5mm}  
    \begin{scope}
    \draw
    (0, 0) node[draw, minimum width=\nodewidth, minimum height=\nodeheight](C1){$I$}
    (C1) ++(0, -1) node[draw, minimum width=\nodewidth, minimum height=\nodeheight](C2){$I$}
    (C1) ++(-1.25, 0) node[not](C3){}
    (C2) ++(-1.25, 0) node[phase](C4){}
    ;
    \draw
    (C3) to  (C1)
    (C4) to  (C2)
    (C3) to  (C4)
    (C3) to  ++(-0.8, 0)
    (C4) to  ++(-0.8, 0)
    (C2) to  ++(+1, 0)
    (C1) to  ++(+1, 0)
    (C1) ++(-2.8, -.5) node[](){$L_{1,1}:$}
    ;
    \draw
    (0, -2.5) node[draw, minimum width=\nodewidth, minimum height=\nodeheight](C1){$\sqrt{Z}$}
    (C1) ++(0, -1) node[draw, minimum width=\nodewidth, minimum height=\nodeheight](C2){$I$}
    (C1) ++(-1.25, 0) node[not](C3){}
    (C2) ++(-1.25, 0) node[phase](C4){}
    ;
    \draw
    (C3) to  (C1)
    (C4) to  (C2)
    (C3) to  (C4)
    (C3) to  ++(-0.8, 0)
    (C4) to  ++(-0.8, 0)
    (C2) to ++(+1, 0)
    (C1) to ++(+1, 0)
    (C1) ++(-2.8, -.5) node[](){$L_{2,1}:$}
    ;
    \draw
    (0, -5.0) node[draw, minimum width=\nodewidth, minimum height=\nodeheight](C1){$\sqrt{Y}$}
    (C1) ++(0, -1) node[draw, minimum width=\nodewidth, minimum height=\nodeheight](C2){$I$}
    (C1) ++(-1.25, 0) node[not](C3){}
    (C2) ++(-1.25, 0) node[phase](C4){}
    ;
    \draw
    (C3) to  (C1)
    (C4) to  (C2)
    (C3) to  (C4)
    (C3) to  ++(-0.8, 0)
    (C4) to  ++(-0.8, 0)
    (C2) to ++(+1, 0)
    (C1) to ++(+1, 0)
    (C1) ++(-2.8, -0.5) node[](){$L_{3,1}:$}
    ;
    \end{scope}
    \begin{scope}[xshift =5cm]
    \draw
    (0, 0) node[draw, minimum width=\nodewidth, minimum height=\nodeheight](C1){$I$}
    (C1) ++(0, -1) node[draw, minimum width=\nodewidth, minimum height=\nodeheight](C2){$\sqrt{X}$}
    (C1) ++(-1.25, 0) node[not](C3){}
    (C2) ++(-1.25, 0) node[phase](C4){}
    ;
    \draw
    (C3) to  (C1)
    (C4) to  (C2)
    (C3) to  (C4)
    (C3) to  ++(-0.8, 0)
    (C4) to  ++(-0.8, 0)
    (C2) to ++(+1, 0)
    (C1) to ++(+1, 0)
    (C1) ++(-2.8, -0.5) node[](){$L_{1,2}:$}
    ;
    \draw
    (0, -2.5) node[draw, minimum width=\nodewidth, minimum height=\nodeheight](C1){$\sqrt{Z}$}
    (C1) ++(0, -1) node[draw, minimum width=\nodewidth, minimum height=\nodeheight](C2){$\sqrt{X}$}
    (C1) ++(-1.25, 0) node[not](C3){}
    (C2) ++(-1.25, 0) node[phase](C4){}
    ;
    \draw
    (C3) to  (C1)
    (C4) to  (C2)
    (C3) to  (C4)
    (C3) to  ++(-0.8, 0)
    (C4) to  ++(-0.8, 0)
    (C2) to ++(+1, 0)
    (C1) to ++(+1, 0)
    (C1) ++(-2.8, -0.5) node[](){$L_{2,2}:$}
    ;
     \draw
    (0, -5) node[draw, minimum width=\nodewidth, minimum height=\nodeheight](C1){$\sqrt{Y}$}
    (C1) ++(0, -1) node[draw, minimum width=\nodewidth, minimum height=\nodeheight](C2){$\sqrt{X}$}
    (C1) ++(-1.25, 0) node[not](C3){}
    (C2) ++(-1.25, 0) node[phase](C4){}
    ;
    \draw
    (C3) to  (C1)
    (C4) to  (C2)
    (C3) to  (C4)
    (C3) to  ++(-0.8, 0)
    (C4) to  ++(-0.8, 0)
    (C2) to ++(+1, 0)
    (C1) to ++(+1, 0)
    (C1) ++(-2.8, -.5) node[](){$L_{3,2}:$}
    ;
    \end{scope}
    \begin{scope}[xshift=10cm]

     \draw
    (0, 0) node[draw, minimum width=\nodewidth, minimum height=\nodeheight](C1){$I$}
    (C1) ++(0, -1) node[draw, minimum width=\nodewidth, minimum height=\nodeheight](C2){$\sqrt{Y}$}
    (C1) ++(-1.25, 0) node[not](C3){}
    (C2) ++(-1.25, 0) node[phase](C4){}
    ;
    \draw
    (C3) to  (C1)
    (C4) to  (C2)
    (C3) to  (C4)
    (C3) to  ++(-0.8, 0)
    (C4) to  ++(-0.8, 0)
    (C2) to ++(+1, 0)
    (C1) to ++(+1, 0)
    (C1) ++(-2.8, -0.5) node[](){$L_{1,3}:$}
    ;
     \draw
    (0, -2.5) node[draw, minimum width=\nodewidth, minimum height=\nodeheight](C1){$\sqrt{Z}$}
    (C1) ++(0, -1) node[draw, minimum width=\nodewidth, minimum height=\nodeheight](C2){$\sqrt{Y}$}
    (C1) ++(-1.25, 0) node[not](C3){}
    (C2) ++(-1.25, 0) node[phase](C4){}
    ;
    \draw
    (C3) to  (C1)
    (C4) to  (C2)
    (C3) to  (C4)
    (C3) to  ++(-0.8, 0)
    (C4) to  ++(-0.8, 0)
    (C2) to ++(+1, 0)
    (C1) to ++(+1, 0)
    (C1) ++(-2.8, -.5) node[](){$L_{2,3}:$}
    ;
    \draw
    (0, -5) node[draw, minimum width=\nodewidth, minimum height=\nodeheight](C1){$\sqrt{Y}$}
    (C1) ++(0, -1) node[draw, minimum width=\nodewidth, minimum height=\nodeheight](C2){$\sqrt{Y}$}
    (C1) ++(-1.25, 0) node[not](C3){}
    (C2) ++(-1.25, 0) node[phase](C4){}
    ;
    \draw
    (C3) to  (C1)
    (C4) to  (C2)
    (C3) to  (C4)
    (C3) to  ++(-0.8, 0)
    (C4) to  ++(-0.8, 0)
    (C2) to ++(+1, 0)
    (C1) to ++(+1, 0)
    (C1) ++(-2.8, -.5) node[](){$L_{3,3}:$}
    ;
    \end{scope}
    \end{tikzpicture}
    \caption{The set $\mathcal{L} := \{L_{i,j} \mid 1 \leq i,j \leq 3\}$ containing nine Cliffords.}
    \label{fig:cliff_generators}
\end{figure}

\begin{itemize}[leftmargin=1.25\parindent]
\item For two of these equivalence classes, one may choose the identity gate $I$ and swap gate $S$, as representatives.

 \item For nine out of the remaining 18 equivalence classes, one may find representatives of the form $(C_1 \otimes C_2) \text{\sc cnot}_{21}$, where $\text{\sc cnot}_{21}$ denotes the controlled-NOT gate with  control on the second qubit and  target on the first qubit,  $C_1 \in \{I, \sqrt{Z}, \sqrt{Y}\}$,  $C_2 \in \{I, \sqrt{X}, \sqrt{Y}\}$, and $\sqrt{P} = \frac{(1-i)(\ident + i P)}{2}$, for any Pauli matrix $P \in \{X, Y, Z\}$. 
 We denote this set by $\mathcal{L}$,  which is further depicted in Fig.~\ref{fig:cliff_generators}.
  \begin{equation*}
\mathcal{L} := \left\{ (C_1 \otimes C_2) \text{\sc cnot}_{21} \mid C_1 \in \{I, \sqrt{Z}, \sqrt{Y}\},   C_2 \in \{I, \sqrt{X}, \sqrt{Y}\} \right\}.
 \end{equation*}
\end{itemize}
\begin{itemize}[leftmargin=1.25\parindent]
\item For the remaining nine equivalence classes, one may find representatives of the form $S L$, where $S$ is the swap gate and $L\in \mathcal{L}$. We denote this set by $ \mathcal{R}$,
\begin{equation*}
 \mathcal{R} := \left\{ SL \mid L\in \mathcal{L} \right\}.
 \end{equation*}
\end{itemize}

Now, we prove that two Clifford gates $C'$ and $C''$, such that $C'' = SC'$, used to combine two copies of a quantum channel $\mathcal{W}$ with qubit input, yield the same Rényi-Bhattacharyya parameter of the bad/good channels. Although this property is weaker than the one in Lemma~\ref{lemma:equiv_conditions}, which holds for any two quantum channels $\mathcal{M}$ and $\mathcal{N}$, it is sufficient for whatever we need here.

\begin{lemma} \label{lemma:swap_eq}
Let $C', C'' \in \mathcal{C}_2$, such that $C'' = SC'$, where $S$ is the swap gate. Then, for two copies of a quantum channel $\mathcal{W}$ with qubit input,
\begin{equation*}
R(\mathcal{W} \boxast_{C'} \mathcal{W}) = R( \mathcal{W} \boxast_{C''} \mathcal{W})
\ \ \text{ and } \ \ 
R(\mathcal{W} \varoast_{C'}\mathcal{W}) =  R( \mathcal{W} \varoast_{C''} \mathcal{W}).
\end{equation*}
\end{lemma}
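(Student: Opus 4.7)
The plan is to exploit the fact that two identical copies of $\mathcal{W}$ make $\mathcal{W}^{\otimes 2}$ symmetric under the swap: for any two-qubit input state $\sigma$,
\[ \mathcal{W}^{\otimes 2}\bigl(S\sigma S^\dagger\bigr) = S\,\mathcal{W}^{\otimes 2}(\sigma)\,S^\dagger, \]
where the $S$ on the left acts on the input systems $X_1 X_2$ of $\mathcal{W}^{\otimes 2}$ and the $S$ on the right acts on the output systems $Y_1 Y_2$. This intertwining relation is immediate from the definition of the tensor product of two copies of the \emph{same} channel, and it is the only structural fact about $\mathcal{W}$ that will be used.

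With this in hand, I would substitute $C'' = S C'$ directly into the definition of the bad channel and set $\sigma = C'(\rho \otimes \tfrac{\ident}{2})C'^{\dagger}$, obtaining
\[ (\mathcal{W}\boxast_{C''}\mathcal{W})(\rho) = \mathcal{W}^{\otimes 2}\bigl(S C'(\rho\otimes \tfrac{\ident}{2})C'^{\dagger} S^\dagger\bigr) = S_{Y_1 Y_2}\,(\mathcal{W}\boxast_{C'}\mathcal{W})(\rho)\,S_{Y_1 Y_2}^{\dagger}. \]
Thus the outputs of the two bad channels differ only by a unitary (the swap on $Y_1 Y_2$). The same computation works verbatim for the good channels, since the auxiliary system $R_1$ carrying half of the EPR pair $\Phi_{R_1 U_1}$ is untouched by $\mathcal{W}^{\otimes 2}$; only the $Y_1 Y_2$ portion of the joint output $R_1 Y_1 Y_2$ gets permuted.

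Finally, I would invoke the fact that $R(\mathcal{N})$ depends on $\mathcal{N}$ only up to a unitary on its output system. Using the Petz form $R(\mathcal{N}) = \sup_{\sigma_B}\tr\bigl[\mathcal{N}(\Phi_{AA'})^{1/2}\sigma_B^{1/2}\bigr]^2$, any output unitary $U$ on $B$ can be absorbed via the substitution $\sigma_B \mapsto U^{\dagger}\sigma_B U$, leaving the supremum unchanged. Applied with $U = S_{Y_1 Y_2}$ to each of the two identities above, this yields both claimed equalities.

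The argument is essentially a symmetry observation and there is no real obstacle; the one point that should be stated explicitly (and is the mildly subtle ingredient) is the invariance of $R$ under postcomposition with a unitary on the output, which is why the identification of $C''$ with $C'$ modulo an output swap is enough to conclude.
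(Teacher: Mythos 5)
Your proof is correct and follows essentially the same route as the paper: both arguments use that $\mathcal{W}^{\otimes 2}$ intertwines with the swap to rewrite $\mathcal{W}\boxast_{C''}\mathcal{W}$ (and likewise $\varoast$) as $\mathcal{W}\boxast_{C'}\mathcal{W}$ conjugated by a swap on the output, and then invoke the invariance of $R$ under a unitary applied to the channel output. The only difference is that the paper merely asserts that output-unitary invariance (their Eq.~\eqref{eq:u_same_r}) while you supply an explicit justification via the Petz form and the substitution $\sigma_B \mapsto U^\dagger \sigma_B U$, which is a welcome addition but not a different proof strategy.
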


\begin{proof}
First, we note that by applying a unitary on the output of any quantum channel does not change the Rényi-Bhattacharyya parameter. Precisely, let $\mathcal{N}_{A \rightarrow B}$ be any quantum channel, and $U\mathcal{N}_{A \rightarrow B} U^\dagger$ be the quantum channel\footnote{To see that $U\mathcal{N}_{A \rightarrow B} U^\dagger$ is a quantum channel, it is enough to notice that if $\mathcal{N}_{A \rightarrow B}$ is defined by Kraus operators $\{E_k\}$, then $U\mathcal{N}_{A \rightarrow B} U^\dagger$ is defined by Kraus operators $\{ U E_k U^\dagger\}$.} obtained by applying the unitary $U$ on the output system $B$, that is, $\left(U\mathcal{N}_{A \rightarrow B} U^\dagger\right)(\rho_A) := U\mathcal{N}_{A \rightarrow B}(\rho_A) U^\dagger$. Then,
\begin{equation}\label{eq:u_same_r}
R\left( U\mathcal{N}_{A \rightarrow B} U^\dagger \right) = R\left( \mathcal{N}_{A \rightarrow B}  \right).
\end{equation}

Going back to the proof of our Lemma, by the definition of $ \mathcal{W} \boxast_{C} \mathcal{W}$  and using that $S^\dagger = S$, we may write:
\begin{align*}
(\mathcal{W} \boxast_{C''} \mathcal{W}) (\rho) &= (\mathcal{W} \otimes \mathcal{W}) \left(C''(\rho \otimes \frac{\ident}{2})C''^\dagger\right) \\
     &= (\mathcal{W} \otimes \mathcal{W}) \left(SC'(\rho \otimes \frac{\ident}{2})C'^\dagger S\right). 
\end{align*}
Now, it is easily seen that the $\mathcal{W} \otimes \mathcal{W}$ channel is covariant with respect to the swap gate, {\em i.e.}, the swap gate commutes with the action of the channel. 
Hence we may further write:
\begin{align*}
(\mathcal{W} \boxast_{C''} \mathcal{W}) (\rho) &= S(\mathcal{W} \otimes \mathcal{W}) \left(C'(\rho \otimes \frac{\ident}{2})C'^\dagger \right) S \\
   &= S (\mathcal{W} \boxast_{C'} \mathcal{W}) (\rho) S \\
   &= \left(S (\mathcal{W} \boxast_{C'} \mathcal{W}) S\right) (\rho). 
\end{align*}
Hence, $\mathcal{W} \boxast_{C''} \mathcal{W} = S (\mathcal{W} \boxast_{C'} \mathcal{W}) S$, and using (\ref{eq:u_same_r}), with $\mathcal{N}:= \mathcal{W} \boxast_{C'} \mathcal{W}$ and $U:=S$, we get
\begin{equation*}
R\left( \mathcal{W} \boxast_{C'} \mathcal{W} \right) = R\left( \mathcal{W} \boxast_{C''} \mathcal{W} \right),
\end{equation*}
as desired. The equality $R(\mathcal{W} \varoast_{C'} \mathcal{W}) = R(\mathcal{W} \varoast_{C''} \mathcal{W})$ may be proven in a similar way. 
\end{proof}

The following lemma implies that polarization can be achieved by choosing the channel combining Clifford operator randomly from either $\mathcal{L}$ or $\mathcal{R}$. It is the analogue of the Lemma~\ref{lem:d-good-channel} used to check the $(t.2)$ condition in the proof of the polarization Theorem~\ref{thm:quantum_polarization}. 

\begin{lemma} \label{lemma:quantum_pol9}
Given two copies of a quantum channel $\mathcal{W}_{A'_1 \rightarrow B_1}$ with qubit input, we have
\begin{equation*}
\mbE_{C \in \mathcal{L}} R(\mathcal{W} \varoast_C \mathcal{W}) =  \mbE_{C \in \mathcal{R}} R(\mathcal{W} \varoast_C \mathcal{W})
= \frac{4}{9} - \frac{1}{9}R(\mathcal{W}) + \frac{4}{9}R(\mathcal{W})^2,
\end{equation*}
where $C$ is the channel combining Clifford operator and is chosen uniformly either from the set $\mathcal{L}$ or from the set $\mathcal{R}$,  each containing nine Clifford gates. 
\end{lemma}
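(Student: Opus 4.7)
The plan is to avoid computing $R(\mathcal{W} \varoast_L \mathcal{W})$ separately for each of the nine Cliffords $L \in \mathcal{L}$, and instead extract the expectation by comparing two averages over the full Clifford group $\mathcal{C}_2$: the one coming from Lemma~\ref{lem:d-good-channel}, and the one obtained by partitioning $\mathcal{C}_2$ into its twenty $\mathcal{C}_1 \otimes \mathcal{C}_1$-cosets. Since the paper has exhibited a complete set of coset representatives $\{I, S\} \cup \mathcal{L} \cup \mathcal{R}$, and since Lemma~\ref{lemma:equiv_conditions} tells us that $R(\mathcal{W} \varoast_C \mathcal{W})$ depends only on the coset of $C$, this comparison will pin down $\mbE_{C \in \mathcal{L}} R(\mathcal{W} \varoast_C \mathcal{W})$ up to a single constant, namely $R(\mathcal{W} \varoast_I \mathcal{W})$.

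First I would apply Lemma~\ref{lemma:swap_eq} to the pair $(I, S)$ and to every pair $(L, SL)$ with $L \in \mathcal{L}$ to obtain $R(\mathcal{W} \varoast_I \mathcal{W}) = R(\mathcal{W} \varoast_S \mathcal{W})$ and $\mbE_{C \in \mathcal{L}} R(\mathcal{W} \varoast_C \mathcal{W}) = \mbE_{C \in \mathcal{R}} R(\mathcal{W} \varoast_C \mathcal{W})$. Combined with Lemma~\ref{lemma:equiv_conditions}, the uniform average over the $20$ cosets then reduces to
\[
\mbE_{C \in \mathcal{C}_2} R(\mathcal{W} \varoast_C \mathcal{W}) = \frac{1}{10} R(\mathcal{W} \varoast_I \mathcal{W}) + \frac{9}{10} \mbE_{C \in \mathcal{L}} R(\mathcal{W} \varoast_C \mathcal{W}).
\]

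Second, I would compute $R(\mathcal{W} \varoast_I \mathcal{W}) = R(\mathcal{W})$ directly. Indeed, when $C = I$ the complementary channel factors as $(\mathcal{W} \varoast_I \mathcal{W})^c(\rho) = \mathcal{W}^c(\ident/2)_{E_1} \otimes \mathcal{W}^c(\rho)_{E_2}$, so the Choi state $(\mathcal{W} \varoast_I \mathcal{W})^c(\Phi_{A U_2}) = \mathcal{W}^c(\ident/2)_{E_1} \otimes (I \otimes \mathcal{W}^c)(\Phi_{A U_2})_{A E_2}$ carries $E_1$ as an independent spectator. A direct expansion of Definition~\ref{def:renyi-2-entropy} on any product state $\sigma_{E_1} \otimes \omega_{A E_2}$ yields $\htwo(A|E_1 E_2)_{\sigma \otimes \omega} = \htwo(A|E_2)_\omega$ (the $\sigma$-factor contributes $\tr[\sigma] = 1$), hence $R(\mathcal{W} \varoast_I \mathcal{W}) = 2^{-\htwo(A|E_2)_{\mathcal{W}^c(\Phi)}} = R(\mathcal{W})$.

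Finally, Lemma~\ref{lem:d-good-channel} applied with $\mathcal{N} = \mathcal{M} = \mathcal{W}$ gives $\mbE_{C \in \mathcal{C}_2} R(\mathcal{W} \varoast_C \mathcal{W}) = \frac{2}{5} + \frac{2}{5} R(\mathcal{W})^2$. Plugging this together with $R(\mathcal{W} \varoast_I \mathcal{W}) = R(\mathcal{W})$ into the displayed identity and solving for the unknown expectation yields exactly $\frac{4}{9} - \frac{1}{9} R(\mathcal{W}) + \frac{4}{9} R(\mathcal{W})^2$. There is no serious obstacle here: the only mildly non-routine step is the product-state reduction of $\htwo$ in the third paragraph, and everything else is bookkeeping around Lemmas \ref{lemma:equiv_conditions}, \ref{lemma:swap_eq}, and \ref{lem:d-good-channel}. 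The argument also explains conceptually why the value differs from the Haar/$2$-design value $\frac{2}{5} + \frac{2}{5} R(\mathcal{W})^2$: restricting to $\mathcal{L}$ (or $\mathcal{R}$) precisely discards the two cosets represented by $I$ and $S$, which contribute the ``no-polarization'' term $R(\mathcal{W})$.
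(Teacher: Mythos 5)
Your proof is correct and follows essentially the same route as the paper's: both decompose the Clifford average over the twenty $\mathcal{C}_1 \otimes \mathcal{C}_1$-cosets via Lemma~\ref{lemma:equiv_conditions}, use Lemma~\ref{lemma:swap_eq} to identify the $\mathcal{L}$ and $\mathcal{R}$ averages and the $I$ and $S$ cosets, and solve against the $2$-design value from Lemma~\ref{lem:d-good-channel}. The one small thing you add is an explicit verification that $R(\mathcal{W} \varoast_I \mathcal{W}) = R(\mathcal{W})$ via the product-state reduction of $\htwo$, which the paper states without justification; your argument for it is sound.
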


\begin{proof} 
Since $\mathcal{S}:=\{I, S\} \cup \mathcal{L} \cup \mathcal{R}$ is a set of representatives of the $20$ equivalence classes partitioning the Clifford group $\mathcal{C}_2$, we have:
\begin{equation*}
\mbE_{C \in \mathcal{S}} R(\mathcal{W} \varoast_C \mathcal{W}) = \mbE_{C \in \mathcal{C}_2} R(\mathcal{W} \varoast_C \mathcal{W}) = \frac{2}{5} + \frac{2}{5} R(\mathcal{W})^2,
\end{equation*}
where the first equality follows from Lemma~\ref{lemma:equiv_conditions}, and the second from Lemma~\ref{lem:d-good-channel}.
Now, using Lemma~\ref{lemma:swap_eq}, we have $R(\mathcal{W} \varoast_S \mathcal{W}) = R(\mathcal{W} \varoast_I \mathcal{W}) = R(\mathcal{W})$ and $\mbE_{C \in \mathcal{L}} R(\mathcal{W} \varoast_C \mathcal{W}) = \mbE_{C \in \mathcal{R}} R(\mathcal{W} \varoast_C \mathcal{W})$. Hence,
\begin{equation*}
\mbE_{C \in \mathcal{S}} R(\mathcal{W} \varoast_C \mathcal{W}) = \frac{2R(\mathcal{W}) + 9\mbE_{C \in \mathcal{L}} R(\mathcal{W} \varoast_C \mathcal{W})  + 9\mbE_{C \in \mathcal{R}} R(\mathcal{W} \varoast_C \mathcal{W})}{20},
\end{equation*}
and therefore
\begin{equation*}
\mbE_{C \in \mathcal{L}} R(\mathcal{W} \varoast_C \mathcal{W}) = \mbE_{C \in \mathcal{R}} R(\mathcal{W} \varoast_C \mathcal{W}) 
 = \frac{4}{9} - \frac{1}{9}R(\mathcal{W}) + \frac{4}{9}R(\mathcal{W})^2.
\end{equation*}
Finally, we also note that the above expected value is less than the one in Lemma~\ref{lem:d-good-channel}, namely $\mbE_{C \in \mathcal{C}_2} R(\mathcal{W} \varoast_C \mathcal{W}) = \frac{2}{5} + \frac{2}{5} R(\mathcal{W})^2$, since the expected value can only decrease by taking out the identity and swap gate from the set of representatives.
\end{proof}

\section{Quantum Polar Coding}
\label{sec:quantum_polar_coding}

\subsection{Quantum Polar Codes}

\begin{figure}[!thb]
\centering
\includegraphics[width=.6\linewidth]{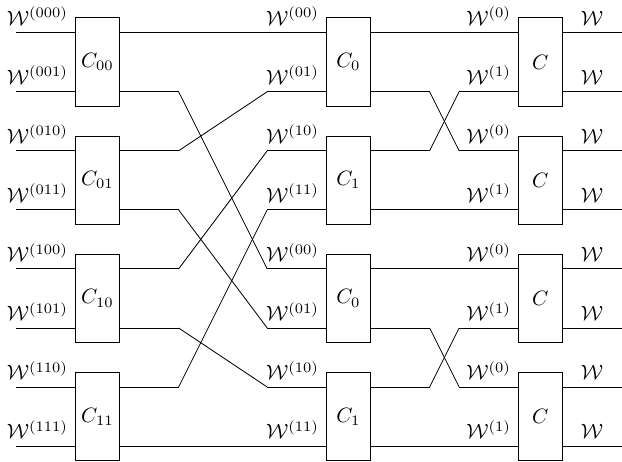}
\caption{Quantum polar code of length $N=8$}
\label{fig:polar_code_N8}
\end{figure}

Polar coding is a coding method that takes advantage of the channel polarization phenomenon \cite{arikan09}. To construct a quantum polar code of length $N=2^n$, $n>0$, we start with $N$ copies of the quantum channel $\mathcal{W}$, pair them in $N/2$ pairs, and apply the channel combining and splitting operation on each pair. The same channel combining Clifford gate is used for each of the $N/2$ pairs, which will be denoted by $C$. By doing so, we generate $N/2$ copies of the channel $\mathcal{W}^{(0)} :=  \mathcal{W} \boxast_{C} \mathcal{W}$ and $N/2$ copies of the channel $\mathcal{W}^{(1)} :=  \mathcal{W} \varoast_{C} \mathcal{W}$. Hence, for each $i_1=0,1$, we group together the $N/2$ copies of the $\mathcal{W}^{(i_1)}$ channel, pair them in $N/4$ pairs, and apply the channel combining and splitting operation on each pair, by using some channel combining Clifford gate denoted by $C_{i_1}$.
By performing $n$ {\em polarization steps}  (that is, applying the above construction recursively $n$ times), we generate quantum channels $\mathcal{W}^{(i_1\cdots i_n)}$, which can be recursively defined for $n > 0$, as follows:
\begin{equation}\label{eq:polar_coding_recursion}
\mathcal{W}^{(i_1\cdots i_n)} := \left\{ \begin{array}{@{}ll@{}}
   \mathcal{W}^{(i_1\cdots i_{n-1})} \boxast_{C_{i_1\cdots i_{n-1}}}  \mathcal{W}^{(i_1\cdots i_{n-1})}, & \text{if } i_n = 0 \\
   \mathcal{W}^{(i_1\cdots i_{n-1})} \varoast_{C_{i_1\cdots i_{n-1}}} \mathcal{W}^{(i_1\cdots i_{n-1})}, & \text{if } i_n = 1
\end{array} \right.
\end{equation}
where, for $n=1$, in the right hand side term of the above equality, we set by convention $\mathcal{W}^{(\varnothing)} := \mathcal{W}$ and $C_\varnothing := C$.
Note that, for the sake of simplicity, we have dropped the channel combining Clifford gate from the $\mathcal{W}^{(i_1\cdots i_n)}$ notation. The construction is illustrated in Fig.~\ref{fig:polar_code_N8}, for $N=8$. Horizontal ``wires'' represent qubits, and for each polarization step, we have indicated on each wire the virtual channel $\mathcal{W}^{(i_1i_2\cdots)}$ ``seen'' by the corresponding qubit state.

\medskip The above construction synthesizes a set of $N$  channels and, for any $i=0,\dots,N-1$, we shall further denote $\mathcal{W}^{(i)} := \mathcal{W}^{(i_1\cdots i_n)}$, where $i_1\cdots i_n$ is the binary decomposition of $i$. Let $\mathcal{I} \subseteq \{0,1,\dots,N-1 \}$ denote the set of good channels ({\em i.e.}, with coherent information close to $1$, or equivalently, Rényi-Bhattacharyya parameter close to  $1/2$), and let $\mathcal{J} := \{0,1,\dots,N-1 \} \setminus \mathcal{I}$. With a slight abuse of notation, we shall also denote by ${\cal I}$ and ${\cal J}$ two qudit systems, of dimension $2^{|{\cal I}|}$ and $2^{|{\cal J}|}$, respectively (it will be clear from the context whether the notation is meant to indicate a set of indices or a quantum system).

\smallskip A quantum state $\rho_{\cal I}$ on system ${\cal I}$ is encoded by supplying it as input to channels $i\in\mathcal{I}$, while supplying each channel $j\in\mathcal{J}$ with half of an EPR pair, shared between the sender and the receiver. Precisely, let $\Phi_{\cal JJ'}$ be a maximally entangled state, defined by 
\begin{equation}\label{eq:max_entangled_state}
\Phi_{\cal JJ'} = \otimes_{j\in \mathcal{J}} \Phi_{jj'},
\end{equation}
where indices $j$ and $j'$ indicate the $j$-th qubits of $\mathcal{J}$ and $\mathcal{J}'$ systems, respectively, and $\Phi_{jj'}$ is an EPR pair. Let also $G_q$ denote the quantum polar transform, that is the unitary operator defined by applying Clifford gates corresponding to the $n$ polarization steps. The encoded state, denoted $\varphi_{{\cal IJJ}'}$, is obtained by applying the $G_q\otimes I_{{\cal J}'}$ unitary on the $\mathcal{I}\mathcal{J}\mathcal{J}'$ system, hence:
\begin{equation}\label{eq:encoded_state}
\varphi_{{\cal IJJ}'} \eqdef (G_q\otimes I_{{\cal J}'}) (\rho_{\cal I} \otimes \Phi_{\cal JJ'}) (G_q^\dagger \otimes I_{{\cal J}'}).
\end{equation}
Since no errors occur on the $\mathcal{J}'$ system, the channel output state is given by:
\begin{equation}
\psi_{{\cal IJJ}'} \eqdef ({\cal W}^{\otimes N} \otimes I_{{\cal J}'}) (\varphi_{{\cal IJJ}'}).
\end{equation}

It is worth noticing that randomness is used only at the code construction stage (since Clifford gates used in the $n$ polarization steps are randomly chosen from some predetermined set of gates), but not at the encoding stage.
The constructed polar code allows communicating quantum information over a quantum channel ${\cal W}$ at a rate $|{\cal I}|/N$, which approaches $(1 + I({\cal W}))/2$ (that is, half the symmetric mutual information of the channel), as $N$ goes to infinity. The net communication rate, which we define as the quantum communication rate less the entanglement consumption rate, is given by $(|{\cal I}| - |{\cal J}|)/N$, and approaches $I(\mathcal{W})$, as $N$ goes to infinity.

\subsection{Quantum Polar Codes as Entanglement-Assisted Stabilizer Codes} 
\label{subsec:EASC}

Including all information qubits (${\cal I}$ system) and both systems of the preshared EPR pairs (${\cal J}$ and ${\cal J}'$ systems), the quantum polar code from the above section can be described as an entanglement assisted stabilizer code, in the sense of~\cite{hsieh-devetak-brun2}. Precisely, using the notation from the previous section, the quantum state $\rho_{\cal I} \otimes \Phi_{\cal JJ'}$ is stabilized by the set of Pauli operators 
\begin{equation}
\label{eq:s_i_j_jp}
{\cal S}_{{\cal IJJ}'} := \{ I_{\cal I}\otimes X_j X_{j'},\ I_{\cal I}\otimes Z_j Z_{j'} \mid j \in {\cal J}\},
\end{equation}
where $I_{\cal I}$ denotes the identity on the ${\cal I}$ system, and $X_j X_{j'}$ (respectively, $Z_j Z_{j'}$) denotes the tensor product of the Pauli-$X$ (respectively, Pauli-$Z$) operators of the $j$-th qubits of systems ${\cal J}$ and ${\cal J}'$\,\footnote{Note that the definition of ${\cal S}_{{\cal IJJ}'}$ in (\ref{eq:s_i_j_jp}) depends only on index $j\in{\cal J}$, since $j'\in{\cal J}'$ is the counterpart of $j$ (thus, uniquely determined by the latter).}. Conversely, any quantum state on the tripartite ${\cal IJJ}'$ system, which is stabilized by Pauli operators in ${\cal S}_{{\cal IJJ}'}$, is necessarily of the form $\rho_{\cal I} \otimes \Phi_{\cal JJ'}$ (by a dimension argument). Hence, encoded states ($\varphi_{{\cal IJJ}'}$ defined in (\ref{eq:encoded_state})) are stabilized by the set of Pauli operators obtained by {\em passing the elements of ${\cal S}_{{\cal IJJ}'}$ through the polar transform $G_q$}, that is,
\begin{equation}
\label{eq:stabilizer_set}
\bar{\cal S}_{{\cal IJJ}'} := (G_q \otimes I_{{\cal J}'}) {\cal S}_{{\cal IJJ}'} (G_q^\dagger \otimes I_{{\cal J}'}).
\end{equation} 

For stabilizer codes, the decoding problem for general quantum channels reduces to decoding Pauli errors only, after performing syndrome measurement, {\em i.e.}, measuring all the generators of the stabilizer group (in our case, the elements of  $\bar{\cal S}_{{\cal IJJ}'}$). Here, the implicit assumption is that syndrome measurement induces appropriate projections, such that it results in the standard Pauli error model.

\medskip For Pauli channels, we provide an efficient decoding algorithm (Section~\ref{sec:decod_pauli_channel}), achieving the symmetric coherent information of the channel. For general quantum channels, syndrome measurement coupled with the above decoding on the induced Pauli error model may yield a practical solution to the decoding problem. However, such a solution is not optimal, due to the loss of information incurred during syndrome measurement. Besides, the polar code should be fitted to (and thus exploit the polarization of) the induced Pauli error model, rather than the quantum channel itself. Devising an efficient  decoding algorithm capable of achieving the symmetric coherent information of general quantum channels is an open problem.

\section{Polarization of Pauli Channels}
\label{sec:polarization_of_pauli_channels}
This section further investigates the quantum polarization of Pauli channels. First, to a Pauli channel ${\cal N}$ we associate a classical symmetric channel ${\cal N}^\#$, with both input and output alphabets given by the quotient of the $1$-qubit Pauli group by its centralizer. We then show that the former polarizes quantumly if and only if the latter polarizes classically. We use this equivalence to provide an alternative proof of the quantum polarization for a Pauli channel, as well as fast polarization properties. We then devise an effective way to decode a quantum polar code on a Pauli channel, by decoding its classical counterpart.

\smallskip Let $P_n$ denote the Pauli group on $n$ qubits, and $\bar{P}_n = P_n/\{\pm 1,  \pm i\}$ the Abelian group obtained by taking the quotient of $P_n$ by its centralizer. We write $\bar{P}_1 = \{\sigma_i \mid i=0,\dots,3\}$, with $\sigma_0 = I$, $\sigma_1 = X$, $\sigma_2 = Y$, $\sigma_3 = Z$, and $\bar{P}_2 = \{\sigma_{i,j}\eqdef \sigma_i \otimes \sigma_j \mid i,j=0,\dots,3\} \simeq \bar{P}_1\times\bar{P}_1$.
For any two-qubit Clifford unitary $C$, we denote by $\Gamma(C)$, or simply $\Gamma$ when no confusion is possible, the conjugate action of $C$ on $\bar{P}_2$. Hence, $\Gamma$ is the automorphism of $\bar{P}_2$ (or equivalently $\bar{P}_1\times\bar{P}_1$), defined by $\Gamma(\sigma_{i,j}) = C \sigma_{i,j} C^\dagger$.

\smallskip Let ${\cal N}$ be a Pauli channel defined by\footnote{We use $\sigma_i^\dagger$ in the definition of the Pauli channel, to explicitly indicate that the definition does not depend on the representative of the equivalence class.} ${\cal N}(\rho) = \sum_{i=0}^3 p_i \sigma_i \rho \sigma_i^{\dagger}$, with $\sum_{i=0}^3 p_i = 1$. Its coherent information for a uniformly distributed input is given by $I({\cal N}) = 1 - h(\mathbf{p})$, where $h(\mathbf{p}) = -\sum_{i=0}^3 p_i\log(p_i)$ denotes the entropy of the probability vector $\mathbf{p} = (p_0,p_1,p_2,p_3)$. 

\begin{definition}[Classical counterpart of a Pauli channel] \label{def:classical_channel} Let ${\cal N}$ be a Pauli channel. The classical counterpart of ${\cal N}$, denoted by ${\cal N}^\#$, is the classical channel with input and output alphabet $\bar{P}_1$, and transition probabilities ${\cal N}^\#(\sigma_i \mid \sigma_j) = p_k$, where $k$ is such that $\sigma_i \sigma_j = \sigma_k$~\footnote{Here, equality is understood as equivalence classes in $\bar{P}_1$.}.
\end{definition}

\noindent Hence, ${\cal N}^\#$ is a memoryless symmetric channel, whose capacity is given by the mutual information for uniformly distributed input $\mathtt{I}({\cal N}^\#) = \frac{1}{2}(2 - h(\mathbf{p})) \in [0,1]$. It follows that
\begin{equation*}
\mathtt{I}({\cal N}^\#) = \displaystyle \frac{1 + I({\cal N})}{2}.
\end{equation*}
Note that the right hand side term in the above equation is half the mutual information of the Pauli channel ${\cal N}$, for a uniformly distributed input.

\smallskip It is worth noticing that the quantum channels synthesized during the quantum polarization of a Pauli channel are {\em identifiable} (see below) to classical mixtures of Pauli channels (this will be proved in Proposition~\ref{prop:cq_equiv}). A Classical Mixture of Pauli (CMP) channels is a quantum channel ${\cal N}(\rho) = \sum_{x\in X} \lambda_x \ket{x}\bra{x} \otimes {\cal N}_x(\rho)$, where $\{\ket{x}\}_{x\in X}$ is some orthonormal basis of an auxiliary system,  ${\cal N}_x$ are Pauli channels, and $\sum_{x\in X} \lambda_x = 1$. We further extend Definition~\ref{def:classical_channel} to the case of CMP channels, by defining the classical channel ${\cal N}^\#$ as the mixture of the channels ${\cal N}_x^\#$, where channel ${\cal N}_x^\#$ is used with probability $\lambda_x$. Hence, input and output alphabets of ${\cal N}^\#$  are $\bar{P}_1$ and $X \times \bar{P}_1$, respectively, with channel transition probabilities defined by ${\cal N}^\#(x, \sigma_i \mid \sigma_j) = \lambda_x\,{\cal N}_x(\sigma_i \mid \sigma_j)$.
It also follows that:
\begin{equation*}
\displaystyle \mathtt{I}({\cal N}^\#) = \sum_x \lambda_x \mathtt{I}({\cal N}_x^\#) =\sum_x \lambda_x \frac{1 + I({\cal N}_x)}{2} = \frac{1 + I({\cal N})}{2}.
\end{equation*}

Given two classical channels ${\cal U}$ and ${\cal V}$, we say they are equivalent, and denote it by ${\cal U} \equiv {\cal V}$, if they are defined by the same transition probability matrix, modulo a permutation of rows and columns. The following lemma states that the classical channel associated with a CMP channel does not depend on the basis. 

\begin{lemma}\label{lem:nclassical-indep-of-basis}
Let ${\cal N}(\rho) = \sum_{x\in X} \lambda_x \ket{x}\bra{x} \otimes {\cal N}_x(\rho)$ and ${\cal M}(\rho) = \sum_{y\in Y} \tau_y \ket{y}\bra{y} \otimes {\cal M}_y(\rho)$ be two CMP channels, where $\{\ket{x}\}_{x\in X}$ and $\{\ket{y}\}_{y\in Y}$ are orthonormal bases of the same auxiliary system. If ${\cal N} = {\cal M}$, then there exists a bijective mapping $\pi:X\rightarrow Y$, such that $\lambda_x = \tau_{\pi(x)}$ and ${\cal N}_x = {\cal M}_{\pi(x)}$. In particular, ${\cal N}^\# \equiv {\cal M}^\#$.
\end{lemma}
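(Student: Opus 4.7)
The plan is to exploit the Bell-diagonality of Pauli Choi states to turn the equality ${\cal N}={\cal M}$ into four equalities of operators on the auxiliary system, and then to read off the bijection $\pi$ from their simultaneous spectral decomposition. First, I would feed half of a maximally entangled state $\Phi_{AA'}$ into ${\cal N}$ and write down the Choi representation. Since every Pauli channel is Bell-diagonal on half of an EPR pair, $({\cal N}_x\otimes I)(\Phi_{AA'})=\sum_j p^{(x)}_j\ket{B_j}\bra{B_j}$, so
\[
  ({\cal N}\otimes I)(\Phi_{AA'})=\sum_{j=0}^{3} \sigma_j\otimes\ket{B_j}\bra{B_j},\qquad \sigma_j:=\sum_{x\in X}\lambda_x p^{(x)}_j\ket{x}\bra{x},
\]
and analogously $({\cal M}\otimes I)(\Phi_{AA'})=\sum_j \tilde\sigma_j\otimes\ket{B_j}\bra{B_j}$ with $\tilde\sigma_j:=\sum_y\tau_y q^{(y)}_j\ket{y}\bra{y}$. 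Because the four rank-one operators $\ket{B_j}\bra{B_j}$ are linearly independent, the hypothesis ${\cal N}={\cal M}$ collapses to the four identities $\sigma_j=\tilde\sigma_j$ for $j=0,1,2,3$.

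Next, I would perform a simultaneous spectral decomposition of $\sigma_0,\sigma_1,\sigma_2,\sigma_3$, which commute since they are already diagonal in either the $\{\ket{x}\}$ or the $\{\ket{y}\}$ basis. The resulting joint eigenspaces are intrinsic to the $\sigma_j$'s, hence identical when computed from either representation. Within the joint eigenspace $V_\mu$ labelled by a vector $\mu=(\mu_0,\ldots,\mu_3)$ of joint eigenvalues, the constraint $\sum_j p^{(x)}_j=1$ forces $\lambda_x=\sum_j\mu_j$ and $p^{(x)}_j=\mu_j/\lambda_x$, so both the weight $\lambda_x$ and the entire Pauli channel ${\cal N}_x$ are constant throughout $X_\mu:=\{x\in X: \ket{x}\in V_\mu\}$. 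Running the same argument on the $\{\ket{y}\}$ side yields the same constant weight and Pauli channel on $Y_\mu:=\{y\in Y: \ket{y}\in V_\mu\}$, and moreover $|X_\mu|=|Y_\mu|=\dim V_\mu$ since both sets index orthonormal bases of $V_\mu$.

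Finally, for each joint eigenvalue $\mu$ I would pick an arbitrary bijection $\pi_\mu:X_\mu\to Y_\mu$ and glue these into a global $\pi:X\to Y$; by construction $\lambda_x=\tau_{\pi(x)}$ and ${\cal N}_x={\cal M}_{\pi(x)}$. The classical-channel equivalence ${\cal N}^\#\equiv{\cal M}^\#$ then drops out: using Definition~\ref{def:classical_channel} and the extension to CMP channels recalled just before the lemma, the transition probability ${\cal N}^\#(x,\sigma_i\mid\sigma_j)=\lambda_x p^{(x)}_k$ with $\sigma_k=\sigma_i\sigma_j$ coincides with ${\cal M}^\#(\pi(x),\sigma_i\mid\sigma_j)$, so the two transition matrices differ only by the row permutation induced by $\pi$. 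The one delicate point is to rule out that a change of basis inside a degenerate joint eigenspace could secretly mix several distinct Pauli channels into a single CMP decomposition; this is precisely why one must work with the \emph{joint} spectral decomposition of all four $\sigma_j$ rather than with any single $\sigma_j$ or with $\rho_{\mathrm{aux}}=\sum_j\sigma_j$ alone — the joint eigenvalue $\mu$ determines $\lambda_x$ and the full probability vector $p^{(x)}$ simultaneously, which is what makes the matching unambiguous.
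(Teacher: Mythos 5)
Your proof is correct, and the paper itself does not include a proof of this lemma (it is stated without argument right before Lemma~\ref{lem:nclassical-well-defined}), so there is nothing to compare it against. Your route via the Choi state is a clean one: the Bell-diagonality of Pauli-channel Choi states factors the CMP Choi state into $\sum_{j}\sigma_j\otimes\ket{B_j}\bra{B_j}$, and since the $\ket{B_j}\bra{B_j}$ are linearly independent operators, equality of channels collapses to equality of the four commuting auxiliary-system operators $\sigma_j=\tilde\sigma_j$. Your joint-eigenspace argument is then exactly the right tool to handle degeneracies, and you are right that working with a single $\sigma_j$ (or with the marginal $\rho_{\mathrm{aux}}$ alone) would not suffice, since a degenerate block of a single $\sigma_j$ could silently mix distinct Pauli channels.

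One caveat worth recording, which is really a wrinkle in the lemma statement rather than a flaw in your argument: if $\lambda_x=0$ for some $x$, the joint eigenvalue of $\ket{x}$ is $\mu=0$, the relation $\lambda_x p^{(x)}_j=\mu_j$ constrains nothing about $p^{(x)}$, and $\mathcal{N}_x$ is not determined by $\mathcal{N}$; the conclusion $\mathcal{N}_x=\mathcal{M}_{\pi(x)}$ therefore cannot hold in general for zero-weight indices. One can still match the $V_0$ block between $X$ and $Y$ (giving $\lambda_x=\tau_{\pi(x)}=0$), and the principal conclusion $\mathcal{N}^\#\equiv\mathcal{M}^\#$ survives, since zero-weight rows of the transition matrix are identically zero in both channels. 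So either add the hypothesis $\lambda_x>0$ throughout, or note that the final sentence of the lemma is the part that holds unconditionally; with that understanding, your argument is complete.
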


Finally, we say that a quantum channel ${\cal N}_{U\rightarrow AX}$ is {\em identifiable} to a channel ${\cal N}'_{U\rightarrow A}$ if, for some unitary operator $C$ on the $AX$ system, we have that ${\cal N}(\rho) = C\left({\cal N}'(\rho) \otimes \frac{I_X}{|X|}\right)C^\dagger$, where $|X|$ denotes the dimension of the $X$ system.  In other words, ${\cal N}$ and ${\cal N}'$ are equal modulo the conjugate action of an unitary operator $C$, and possibly after discarding a ``useless'' output system $X$. If ${\cal N}_{U\rightarrow AX}$ is identifiable to a CMP channel  ${\cal N}'_{U\rightarrow A}$, we shall define ${\cal N}^\# \eqdef \left({\cal N}'\right)^\#$. 
It can be seen that ${\cal N}^\#$ is well defined up to equivalence of classical channels, that is, if  ${\cal N}_{U\rightarrow AX}$ is identifiable to another CMP channel  ${\cal N}''_{U\rightarrow A}$, then $\left({\cal N}'\right)^\# \equiv \left({\cal N}''\right)^\#$. This follows from the following lemma, proven in Appendix~\ref{sec:proof-nclassical-well-defined}.

\begin{lemma}\label{lem:nclassical-well-defined}
Let ${\cal N}'$ and ${\cal N}''$ be two CMP channels, such that  ${\cal N}'(\rho) \otimes \frac{I_X}{|X|} = C\left({\cal N}''(\rho) \otimes \frac{I_X}{|X|}\right)C^\dagger$, for some unitary $C$. Then $\left({\cal N}'\right)^\# \equiv \left({\cal N}''\right)^\#$.
\end{lemma}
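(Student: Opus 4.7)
My plan is to reduce Lemma~\ref{lem:nclassical-well-defined} to Lemma~\ref{lem:nclassical-indep-of-basis} by interpreting both sides of the hypothesis as CMP presentations of a single augmented channel. First I would define $\mathcal{M}(\rho) \eqdef {\cal N}'(\rho) \otimes I_X/|X|$, which by hypothesis equals $C({\cal N}''(\rho) \otimes I_X/|X|)C^\dagger$. Expanding $I_X/|X|$ as $\sum_{i} \ket{i}\bra{i}/|X|$ in the computational basis of $X$ turns this into a CMP presentation of $\mathcal{M}$ with enlarged classical register $X'\otimes X$,
\begin{equation*}
\mathcal{M}(\rho) = \sum_{x,i} \frac{\lambda_x}{|X|}\, \ket{x,i}\bra{x,i}_{X'X} \otimes {\cal N}'_x(\rho),
\end{equation*}
from which $\mathcal{M}^\#$ is precisely $({\cal N}')^\#$ post-processed by an independent uniform output on $X$. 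Performing the analogous expansion inside the $C$-conjugated expression (before applying $C$) identifies $\mathcal{M}^\#$ equally well as $({\cal N}'')^\#$ augmented by an independent uniform output on $X$. Since attaching an independent uniform register does not change a classical channel's equivalence class, the lemma reduces to showing that these two CMP presentations of $\mathcal{M}$ yield equivalent classical counterparts.

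For this reduction I would appeal to Lemma~\ref{lem:nclassical-indep-of-basis}, noting that the two presentations share the classical-register dimension $|X'|\cdot|X| = |Y''|\cdot|X|$ and represent the same quantum channel $\mathcal{M}$. The delicate point is that the $C$-conjugated presentation has its classical register sitting inside $C(Y''\otimes X)$, which is a priori a different tensor factor of the output $A\otimes X$ than $X'\otimes X$. To align them, I would observe that $\mathcal{M}(\rho)$ is fixed by both dephasings $\mathcal{D}_{X'X}$ and $C\,\mathcal{D}_{Y''X}\,C^\dagger$ for every input $\rho$ (each arising because the output of a CMP is necessarily block-diagonal in its classical register). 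This forces the two dephasings to share a common refinement on the image of $\mathcal{M}$, at which point Lemma~\ref{lem:nclassical-indep-of-basis} can be applied blockwise to extract a bijection $\pi : X' \to Y''$ with $\lambda_x = \tau_{\pi(x)}$ and ${\cal N}'_x = {\cal N}''_{\pi(x)}$, whence $({\cal N}')^\# \equiv ({\cal N}'')^\#$.

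The main obstacle I anticipate is precisely this alignment step, since a generic unitary $C$ on $A\otimes X$ need not preserve the tensor factorisation singling out the classical register. As a fallback I would work directly at the Choi-state level: the hypothesis forces $\rho^{{\cal N}'}\otimes I_X/|X| = (I_R\otimes C)(\rho^{{\cal N}''}\otimes I_X/|X|)(I_R\otimes C^\dagger)$, so the spectra $\{\lambda_x p'_{x,k}\}_{x,k}$ and $\{\tau_y p''_{y,k'}\}_{y,k'}$ agree as multisets; this can then be combined with the marginal-spectrum identities obtained by feeding the hypothesis Pauli-eigenstate inputs such as $\proj{0}$, $\proj{+}$, and $\proj{+i}$, which extract the partial sums $p_{x,0}+p_{x,3}$, $p_{x,0}+p_{x,1}$, and $p_{x,0}+p_{x,2}$ corresponding to the three non-trivial Pauli bases of a qubit. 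Because $\bar{P}_1$ has only four elements, this finite set of combinatorial identities is enough to match the multiset of pairs $\{(\lambda_x,\{p'_{x,k}\}_k)\}_x$ with $\{(\tau_y,\{p''_{y,k'}\}_{k'})\}_y$ up to Pauli-shift on each vector, which is exactly the invariant characterising the equivalence class of $({\cal N})^\#$ for a CMP ${\cal N}$.
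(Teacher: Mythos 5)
Your primary reduction to Lemma~\ref{lem:nclassical-indep-of-basis} does not go through, and you have in fact identified the obstacle yourself: that lemma requires the two CMP presentations to place their classical registers on the \emph{same} auxiliary subsystem, with only the orthonormal basis varying. After conjugation by $C$, the classical register of the $\mathcal{N}''$ presentation sits inside $C(Y''\otimes X)C^\dagger$, which for a generic unitary $C$ on the whole output is not a tensor factor aligned with $X'\otimes X$. The proposed repair via a ``common refinement of the two dephasings on the image of $\mathcal{M}$'' does not hold: two dephasings can fix every state in a given set without their projector families admitting a common refinement (both Pauli-$X$ and Pauli-$Z$ dephasing fix $\ident/2$, yet their eigenbases are mutually unbiased), and even a common refinement would not automatically respect the split of the output into a $\bar{P}_1$-output factor and a classical-label factor. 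This is a genuine gap, not a technicality to be smoothed over.

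Your fallback is closer in spirit to the paper's actual proof but stops well short of a complete argument. The paper restricts to Pauli channels (asserting the CMP case follows by adjoining the classical label as a further auxiliary register) and works with the eigenvalues $\alpha_k$ of the Pauli map on $\sigma_k$, that is $\alpha_k = 2(p_0 + p_k) - 1$, which is precisely what your Pauli-eigenstate probing would extract; Hilbert--Schmidt norm preservation under conjugation then gives $\lvert\alpha'_k\rvert = \lvert\alpha''_k\rvert$. The structural step you are missing is to expand $C = \sum_i \sigma_i \otimes C_i$, trace out $X$ to get $\mathcal{N}'(\rho) = \sum_{i,j}\gamma_{i,j}\,\sigma_i\mathcal{N}''(\rho)\sigma_j^\dagger$ with $\gamma_{i,j}=\tr(C_iC_j^\dagger)/|X|$, isolate the diagonal contribution to obtain $\bm{\alpha}'=\Lambda\bm{\alpha}''$ for a diagonal $\Lambda$ whose entries are Walsh--Hadamard combinations of the nonnegative $\gamma_i := \gamma_{i,i}$, and conclude $\bm{p}'=\Gamma\bm{p}''$ with $\Gamma = \tfrac{1}{4}A\Lambda A$ symmetric and doubly stochastic. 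A short case analysis on which diagonal entries of $\Lambda$ equal $\pm 1$ then forces $\Gamma$ to be a permutation of $\bar{P}_1$ (or forces the $\bm{p}$-vectors to coincide outright), whence $(\mathcal{N}')^\#\equiv(\mathcal{N}'')^\#$. Your multiset-matching argument bypasses this linear structure, and the invariant you propose (``multiset of $(\lambda_x, \bm{p}_x)$ up to independent Pauli shift on each $\bm{p}_x$'') is not obviously the correct characterisation of $\equiv$, because the row permutation of $\bar{P}_1$ in the definition of classical-channel equivalence is common to all output blocks, so the per-block shifts are not independent. Probing with Pauli inputs is the right instinct, but you need the $C=\sum_i\sigma_i\otimes C_i$ decomposition and the ensuing case analysis to close the argument.
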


\subsection{Classical Channel Combining and Splitting Operations}
\label{subsec:classical_combining_splitting}

{\bf Simplified notation:} To simplify notation, we shall identify $\left(\bar{P}_1, \times\right) \cong \left(\{0,1,2,3\}, \oplus\right)$, by identifying $\sigma_u \cong u$, $\forall u=0,\dots,3$, where the additive group operation $u\oplus v$ is given by the bitwise exclusive {\sc or} ({\sc xor}) between the binary representations of integers $u,v$. The classical counterpart ${\cal N}^\#$ of a Pauli channel ${\cal N}(\rho) = \sum_{u=0}^3 p_u \sigma_u \rho \sigma_u^{\dagger}$ (Definition~\ref{def:classical_channel}), is therefore identified  to a channel with input and output alphabet $\bar{P}_1 \cong \{0,1,2,3\}$, and transition probabilities ${\cal N}^\#(u \mid v) = p_{u\oplus v}$.

\medskip Let $N$ and $M$ be two classical channels, both with input alphabet  $\bar{P}_1 \cong \{0,1,2,3\}$, and output alphabets $A$ and $B$, respectively. Channel transition probabilities are denoted by $N(a\mid u)$ and $M(b \mid v)$, for $u,v\in \bar{P}_1$, $a\in A$, and $b \in B$. Let $\Gamma : \bar{P}_1\times\bar{P}_1 \rightarrow \bar{P}_1\times\bar{P}_1$ be any permutation, and  write $\Gamma = (\Gamma_1, \Gamma_2)$, with $\Gamma_i: \bar{P}_1\times\bar{P}_1 \rightarrow \bar{P}_1$, $i=1,2$. The {\em combined channel} $N\bowtie_\Gamma M$ is defined by:
\begin{equation}
(N\!\bowtie_\Gamma\! M) (a,b \!\mid\! u,v) = N(a \!\mid\! \Gamma_1(u,v))\,M(b \!\mid\! \Gamma_2(u,v))
\end{equation}
It is further {\em split} into two channels  $N \boxast_\Gamma M$ and $N \varoast_\Gamma M$, defined by:
\begin{align}
(N \boxast_\Gamma  M)(a, b \mid u)    &= \frac{1}{4} \sum_v (N \bowtie_\Gamma M) (a,b \mid u,v), \\
(N \varoast_\Gamma M)(a, b, u \mid v) &= \frac{1}{4} (N \bowtie_\Gamma M) (a,b \mid u,v).
\end{align}

Applying the above construction to classical counterparts of CMP channels, we have the following proposition, proven in Appendix~\ref{sec:proof_cq_equiv}. 

\begin{proposition} \label{prop:cq_equiv}
Let ${\cal N}_{U\rightarrow A}$ and ${\cal M}_{V\rightarrow B}$ be two CMP channels, and $C$ be any two-qubit Clifford unitary, acting on the two qubit system $UV$. Let  ${\cal N}^\#$ and ${\cal M}^\#$ denote the two classical counterparts of the above CMP channels, and $\Gamma \eqdef \Gamma(C)$ be the permutation induced by the conjugate action of $C$ on $\bar{P}_1 \times \bar{P}_1$.
Then ${\cal N}\boxast_C{\cal M}$ and ${\cal N}\varoast_C{\cal M}$ are  identifiable to CMP channels, thus $({\cal N}\boxast_C{\cal M})^\#$ and $({\cal N}\varoast_C{\cal M})^\#$ are well defined, and the following properties  hold:
\begin{itemize}
\item[$(i)$] $({\cal N}\boxast_C{\cal M})^\# \equiv {\cal N}^\#\boxast_\Gamma{\cal M}^\#$.
\item[$(ii)$] $({\cal N}\varoast_C{\cal M})^\# \equiv {\cal N}^\#\varoast_\Gamma{\cal M}^\#$.
\end{itemize}
\end{proposition}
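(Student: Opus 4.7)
The plan is to prove both equivalences by a direct Pauli-level computation in the pure Pauli case, and then lift to CMP channels by a mixture argument.

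\textbf{Reduction to pure Paulis.} For CMPs $\mathcal{N}(\rho) = \sum_x \lambda_x\,\ket{x}\!\bra{x}\otimes \mathcal{N}_x(\rho)$ and $\mathcal{M}(\rho) = \sum_y \tau_y\,\ket{y}\!\bra{y}\otimes \mathcal{M}_y(\rho)$, I would first check that all four operations $\boxast_C,\varoast_C,\boxast_\Gamma,\varoast_\Gamma$ distribute over classical mixtures, e.g.
\[ \mathcal{N}\boxast_C \mathcal{M} \;=\; \sum_{x,y}\lambda_x\tau_y\,\ket{x,y}\!\bra{x,y}\otimes \bigl(\mathcal{N}_x\boxast_C \mathcal{M}_y\bigr), \]
and analogously for the other three. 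This reduces the proposition to pure Pauli channels $\mathcal{N}(\rho)=\sum_u p_u\sigma_u\rho\sigma_u^\dagger$ and $\mathcal{M}(\rho)=\sum_v q_v\sigma_v\rho\sigma_v^\dagger$, while simultaneously showing that the combined channels are identifiable to CMPs, so that $(\mathcal{N}\boxast_C\mathcal{M})^\#$ and $(\mathcal{N}\varoast_C\mathcal{M})^\#$ are well defined.

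\textbf{Pure Pauli computation.} Using the Clifford covariance $C^\dagger \sigma_{u,v} C = \sigma_{\Gamma^{-1}(u,v)}$ to push the Pauli errors through $C$, together with $\sigma_v\,\mathbbm{1}\,\sigma_v^\dagger = \mathbbm{1}$, and then changing variables $(u,v)=\Gamma(s,t)$, I obtain
\[ (\mathcal{N}\boxast_C \mathcal{M})(\rho) \;=\; C\Bigl(\sum_{s}\tilde p_s\,\sigma_s\rho\sigma_s^\dagger\otimes\tfrac{\mathbbm{1}}{2}\Bigr)C^\dagger,\qquad \tilde p_s := \sum_t p_{\Gamma_1(s,t)}q_{\Gamma_2(s,t)}, \]
which identifies the bad channel with the Pauli channel of distribution $\tilde{\mathbf{p}}$, whose classical counterpart is $(s\mid r)\mapsto \tilde p_{s\oplus r}$. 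For the good channel, the analogous calculation with $\tilde C = I_{R_1}\otimes C$, combined with the Bell-basis identity $(I\otimes\sigma_{u'})\Phi(I\otimes\sigma_{u'}^\dagger) = \ket{\Phi_{u'}}\!\bra{\Phi_{u'}}$, gives
\[ (\mathcal{N}\varoast_C\mathcal{M})(\rho_V) \;=\; \tilde C\Bigl(\sum_{u'}\ket{\Phi_{u'}}\!\bra{\Phi_{u'}}\otimes \sum_{v'}p_{\Gamma_1(u',v')}q_{\Gamma_2(u',v')}\sigma_{v'}\rho_V\sigma_{v'}^\dagger\Bigr)\tilde C^\dagger. \]
Rotating the Bell basis to the computational basis by a Clifford exhibits this as a CMP $\sum_{u'}\tilde p_{u'}\,\ket{u'}\!\bra{u'}\otimes \mathcal{M}_{u'}$, where $\mathcal{M}_{u'}$ is the Pauli channel with probabilities $p_{\Gamma_1(u',v')}q_{\Gamma_2(u',v')}/\tilde p_{u'}$; its classical counterpart reads $(u',a\mid s)\mapsto p_{\Gamma_1(u',a\oplus s)}q_{\Gamma_2(u',a\oplus s)}$.

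\textbf{Matching with the classical combining rules.} I would then expand the definitions of $\mathcal{N}^\#\boxast_\Gamma\mathcal{M}^\#$ and $\mathcal{N}^\#\varoast_\Gamma\mathcal{M}^\#$ using $\mathcal{N}^\#(a\mid s)=p_{a\oplus s}$. For the bad case, a direct check shows that the $16$ outputs $(a,b)$ partition into $4$ cosets of $\Gamma(\{0\}\times\bar P_1)$ on which the conditional distribution is constant and equal to $\tfrac{1}{4}\tilde p_{\cdot}$, matching $(\mathcal{N}\boxast_C\mathcal{M})^\#$ up to a bijective relabeling. For the good case, the bijective change of output coordinates $(a,b,u)\mapsto(\beta_1,\alpha_2,u)$ with $(\alpha_1,\alpha_2)=\Gamma^{-1}(a,b)$ and $\beta_1=\alpha_1\oplus u$ separates the output into the pair $(\beta_1,\alpha_2)$, whose transition probability is exactly $p_{\Gamma_1(\beta_1,\alpha_2\oplus v)}q_{\Gamma_2(\beta_1,\alpha_2\oplus v)}$, times a uniform coordinate $u$ independent of the input. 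Lifting back via the mixture decomposition of Step~1 yields both claimed equivalences.

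\textbf{Main obstacle.} The delicate point is not the Pauli calculus itself but the handling of the equivalence $\equiv$: the classical combined channels live on strictly larger output alphabets than the quantum-derived counterparts, so one must collapse cosets (bad case) or drop a uniform independent coordinate (good case) before the two transition matrices can be identified. I would set this up carefully in the spirit of Lemma~\ref{lem:nclassical-well-defined}, by exhibiting the explicit permutation/merging on outputs that realizes the equivalence, and verifying that the merged outputs have identical conditional distributions across all inputs.
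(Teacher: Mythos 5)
Your proposal follows essentially the same route as the paper's proof in Appendix~\ref{sec:proof_cq_equiv}: reduce to pure Pauli channels, push the Pauli errors through $C$ via the Clifford covariance $C\sigma_{i,j}C^\dagger = \sigma_{\Gamma(i,j)}$, perform the change of variables $(i,j)\mapsto\Gamma(i,j)$ to obtain the distribution $\tilde p_s=\sum_t p_{\Gamma_1(s,t)}q_{\Gamma_2(s,t)}$ (the paper's $s_i$), identify the good channel via the Bell basis, and then match the classical side by applying $\Gamma^{-1}$ to the output and discarding the coordinate on which the transition probabilities do not depend. Your coset-collapse phrasing for the bad channel and the output change $(a,b,u)\mapsto(\alpha_1\oplus u,\alpha_2,u)$ for the good channel are exactly the paper's relabel-and-discard steps, so the argument is correct and not a new route.
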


A consequence of the above proposition is that a CMP channel polarizes under the recursive application of the channel combining and splitting rules, if and only if its classical counterpart does so. Moreover,  processes of both quantum and classical polarization yield the same set of indices for the good/bad channels.  More precisely, we have the following:

\begin{corollary}
Let  ${\cal W}$ be a CMP channel,  and  ${\cal W}^{(i_1\cdots i_{n})}$ be defined recursively as in (\ref{eq:polar_coding_recursion}), $\forall n > 0$, $\forall i_1\cdots i_n\in\{0,1\}^n$. Let ${\cal W}^\#$ be the classical counterpart of ${\cal W}$, and $({\cal W}^\#)^{(i_1\cdots i_{n})}$ be defined recursively, similar to (\ref{eq:polar_coding_recursion}), while replacing ${\cal W}$ by ${\cal W}^\#$, and Clifford unitaries $C_{i_1\cdots i_n}$ by the corresponding permutations $\Gamma_{i_1\cdots i_n} := \Gamma(C_{i_1\cdots i_n})$.
Then $ \left({\cal W}^{(i_1\cdots i_{n})}\right)^\# \equiv ({\cal W}^\#)^{(i_1\cdots i_{n})}$, $\forall n, \forall i_1\cdots i_{n} \in \{0, 1\}^n$. In particular:
\begin{equation*}
\displaystyle \mathtt{I}\left(({\cal W}^\#)^{(i_1\cdots i_{n})}\right) = \frac{1 + I\left({\cal W}^{(i_1\cdots i_{n})}\right)}{2}.
\end{equation*}
\end{corollary}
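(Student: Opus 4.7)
The plan is to prove the corollary by a straightforward induction on $n$, using Proposition~\ref{prop:cq_equiv} at each step as the engine. The induction must simultaneously carry two pieces of information: first, that ${\cal W}^{(i_1\dots i_n)}$ is identifiable to a CMP channel (so that its classical counterpart $({\cal W}^{(i_1\dots i_n)})^\#$ is well defined in the sense of Lemma~\ref{lem:nclassical-well-defined}), and second, that this classical counterpart is equivalent to the classically synthesized channel $({\cal W}^\#)^{(i_1\dots i_n)}$.

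For the base case $n=1$, the statement is exactly Proposition~\ref{prop:cq_equiv} applied to the pair $({\cal W}, {\cal W})$ and the Clifford $C_\varnothing = C$: it yields both identifiability of ${\cal W}^{(0)}$ and ${\cal W}^{(1)}$ to CMP channels, and the equivalences $({\cal W}^{(0)})^\# \equiv {\cal W}^\# \boxast_\Gamma {\cal W}^\#$ and $({\cal W}^{(1)})^\# \equiv {\cal W}^\# \varoast_\Gamma {\cal W}^\#$, with $\Gamma = \Gamma(C)$. For the inductive step, fix $i_1,\dots,i_n$, set ${\cal V} \eqdef {\cal W}^{(i_1\dots i_{n-1})}$, $C' \eqdef C_{i_1\dots i_{n-1}}$ and $\Gamma' \eqdef \Gamma(C')$. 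By the inductive hypothesis, ${\cal V}$ is identifiable to a CMP channel and ${\cal V}^\# \equiv ({\cal W}^\#)^{(i_1\dots i_{n-1})}$. Applying Proposition~\ref{prop:cq_equiv} to the pair $({\cal V}, {\cal V})$ combined via $C'$ gives that ${\cal W}^{(i_1\dots i_n)} = {\cal V}\star_{C'}{\cal V}$ (with $\star \in \{\boxast, \varoast\}$ selected by $i_n$) is identifiable to a CMP channel, and $({\cal W}^{(i_1\dots i_n)})^\# \equiv {\cal V}^\# \star_{\Gamma'} {\cal V}^\#$. Substituting the inductive hypothesis into the right-hand side yields $({\cal W}^{(i_1\dots i_n)})^\# \equiv ({\cal W}^\#)^{(i_1\dots i_{n-1})} \star_{\Gamma'} ({\cal W}^\#)^{(i_1\dots i_{n-1})} = ({\cal W}^\#)^{(i_1\dots i_n)}$.

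The only technical bookkeeping point is that the classical operations $\boxast_\Gamma$ and $\varoast_\Gamma$ respect channel equivalence: if ${\cal U}_1 \equiv {\cal U}_2$, then ${\cal U}_1 \star_\Gamma {\cal U}_1 \equiv {\cal U}_2 \star_\Gamma {\cal U}_2$. This is a direct verification from the definitions in Subsection~\ref{subsec:classical_combining_splitting}, since a relabeling of the input/output alphabets on each factor lifts, after composition with the fixed permutation $\Gamma$, to a permutation of the rows and columns of the combined/split transition probability matrix. I do not expect any genuine obstacle; the heavy lifting was already done in Proposition~\ref{prop:cq_equiv}, and the remainder is formal.

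Finally, the mutual information identity is a one-line consequence. Equivalent classical channels have identical transition probability matrices up to relabeling, so their mutual information for the uniform input distribution coincides; hence $\mathtt{I}(({\cal W}^\#)^{(i_1\dots i_n)}) = \mathtt{I}(({\cal W}^{(i_1\dots i_n)})^\#)$. Combining this with the relation $\mathtt{I}({\cal N}^\#) = \frac{1 + I({\cal N})}{2}$ established earlier in the section for any channel identifiable to a CMP channel, applied to ${\cal N} = {\cal W}^{(i_1\dots i_n)}$, gives the displayed formula and completes the proof.
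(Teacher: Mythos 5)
The paper states this corollary without proof, treating it as an immediate consequence of Proposition~\ref{prop:cq_equiv}, and your inductive structure is clearly the intended argument. However, your inductive step has a gap that you flag in part but do not fully close.

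In the inductive step you write ``applying Proposition~\ref{prop:cq_equiv} to the pair $(\mathcal{V},\mathcal{V})$ combined via $C'$,'' where $\mathcal{V}=\mathcal{W}^{(i_1\dots i_{n-1})}$. But Proposition~\ref{prop:cq_equiv} is stated for CMP channels, whereas your inductive hypothesis only gives that $\mathcal{V}$ is \emph{identifiable} to a CMP channel $\mathcal{V}'$, i.e.\ $\mathcal{V}(\rho)=U\big(\mathcal{V}'(\rho)\otimes\tfrac{I_X}{|X|}\big)U^{\dagger}$ for some unitary $U$. You therefore cannot invoke the proposition on $(\mathcal{V},\mathcal{V})$ directly. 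You correctly anticipate the analogous bookkeeping on the classical side (that $\boxast_\Gamma$ and $\varoast_\Gamma$ respect $\equiv$), but you do not notice that the parallel quantum statement is also needed: that $\mathcal{V}\star_{C'}\mathcal{V}$ is identifiable to $\mathcal{V}'\star_{C'}\mathcal{V}'$ for $\star\in\{\boxast,\varoast\}$. This does hold, and the reason is that the operations $\boxast_{C'}$ and $\varoast_{C'}$ act on channel \emph{inputs} only ($(\mathcal{N}\boxast_{C'}\mathcal{M})(\rho)=(\mathcal{N}\otimes\mathcal{M})\big(C'(\rho\otimes\tfrac{\ident}{2})C'^{\dagger}\big)$, and similarly for $\varoast_{C'}$), so a unitary conjugation together with a maximally-mixed ancilla appended to the \emph{output} of $\mathcal{N}$ commutes with them up to a system reordering, yielding identifiability of $\mathcal{V}\star_{C'}\mathcal{V}$ to $\mathcal{V}'\star_{C'}\mathcal{V}'$ with identifying unitary built from $U\otimes U$ and a swap. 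With this observation in place, you apply Proposition~\ref{prop:cq_equiv} to $(\mathcal{V}',\mathcal{V}')$, use transitivity of identifiability, and invoke Lemma~\ref{lem:nclassical-well-defined} to conclude that $(\mathcal{W}^{(i_1\dots i_n)})^{\#}$ is well defined and $\equiv\,(\mathcal{V}')^{\#}\star_{\Gamma'}(\mathcal{V}')^{\#}\equiv(\mathcal{W}^{\#})^{(i_1\dots i_n)}$.

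A second, smaller point: in the final mutual-information identity you use $\mathtt{I}(\mathcal{N}^{\#})=\tfrac{1+I(\mathcal{N})}{2}$ ``for any channel identifiable to a CMP channel.'' The paper establishes this formula for CMP channels and \emph{defines} $\mathcal{N}^{\#}\eqdef(\mathcal{N}')^{\#}$ when $\mathcal{N}$ is identifiable to a CMP channel $\mathcal{N}'$; to transfer the formula from $\mathcal{N}'$ to $\mathcal{N}$ you additionally need $I(\mathcal{N})=I(\mathcal{N}')$. This is true (a unitary on the output and a decoupled maximally-mixed ancilla leave the conditional entropy $H(A|B)$ in the definition of coherent information unchanged), but it should be stated rather than left implicit, since it is precisely the same kind of invariance-under-identifiability that you need to make the inductive step go through.
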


As we already know that the quantum transform polarizes, it follows that the classical transform does also polarize. Moreover, a direct proof of the classical polarization can be derived by verifying the conditions from Lemma~\ref{lemma:stochastic_proc_convergence}, with stochastic process $\{T_n : n\geq 0\}$  given by Bhattacharyya parameter $Z$ of the  classical channels synthesized during the recursive construction. We recall below the definition of the Bhattacharyya parameter for a classical channel $W$, as defined in~\cite{sta09}. We shall restrict our attention to classical channels with input alphabet $\bar{P}_1$.  

\begin{definition}[\cite{sta09}]\label{def:bhattacharyya}
Let $W$ be a classical channel, with input alphabet $\bar{P}_1 \cong (\{0,1,2,3\}, \oplus)$ and output alphabet $Y$. For $u,u',d\in \bar{P}_1$, we define
\begin{align*}
Z(W_{u, u^\prime}) &:= \sum_{y\in Y} \sqrt{W(y|u) W(y|u^\prime)}.  \\ 
Z_d(W) &:= \frac{1}{4}\sum_{u \in \bar{P}_1} Z(W_{u, u\oplus d}). 
\end{align*}
In particular, note that  $Z(W_{u, u}) = 1, \forall u\in \bar{P}_1$, and $Z_0(W) = 1$. The Bhattacharyya parameter of $W$, denoted $Z(W)$, is then defined as
\begin{equation*} 
 Z(W) := \frac{1}{3} \sum_{d\neq 0} Z_d(W) = \frac{1}{12} \sum_{u \neq u^\prime}  Z(W_{u, u^\prime}). 
 \end{equation*}
\end{definition}

Polarization of the classical channel ${\cal W}^\#$  follows then from the lemma below, whose proof is provided in Appendix~\ref{apnd:pol9}.

\begin{lemma}\label{lem:z-good-classical-channel}
Let  ${\cal W}$ be a CMP channel and ${\cal W}^\#$ its classical counterpart. Given two instances of the channel ${\cal W}^\#$, we have that
\begin{equation}
\mbE_{\Gamma \in \Gamma(\mathcal{L})} Z\left({\cal W}^\# \varoast_{\Gamma} {\cal W}^\# \right) = \mbE_{\Gamma \in \Gamma(\mathcal{R})} Z\left({\cal W}^\# \varoast_{\Gamma} {\cal W}^\#\right) 
=  \frac{1}{3}Z({\cal W}^\#) + \frac{2}{3}Z({\cal W}^\#)^2, \label{eq:pauli_pol_9_clif}
\end{equation}
where $\Gamma(\mathcal{L})$ and $\Gamma(\mathcal{R})$ denote the set of permutations generated on $\bar{P}_2$ by the conjugate action of Cliffords in $\mathcal{L}$ and $\mathcal{R}$, respectively. 
\end{lemma}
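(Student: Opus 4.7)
My plan is to transfer the quantum identity of Lemma~\ref{lemma:quantum_pol9} to the classical side via an explicit algebraic bridge between the Rényi--Bhattacharyya parameter $R$ and the classical Bhattacharyya parameter $Z$. The first step is to prove, for any Pauli channel $\mathcal{N}$ with probability vector $(p_0,p_1,p_2,p_3)$, the identity
\[ R(\mathcal{N}) = \tfrac{1}{2} + \tfrac{3}{2}\, Z(\mathcal{N}^\#). \]
Quantumly, $\mathcal{N}(\Phi_{AA'})$ is diagonal in the Bell basis $\{\ket{\phi_i}\}$ with eigenvalues $p_i$, so using Definition~\ref{def:renyi-half-entropy} together with $\bra{\phi_i}(I_A\otimes M)\ket{\phi_i}=\tfrac{1}{2}\tr M$ (the optimum being attained at $\sigma_B=I/2$) one computes $R(\mathcal{N}) = \tfrac{1}{2}\bigl(\sum_i \sqrt{p_i}\bigr)^2$. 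Classically, unpacking Definition~\ref{def:bhattacharyya} with $\mathcal{N}^\#(y\mid u)=p_{y\oplus u}$ gives $Z(\mathcal{N}^\#) = \tfrac{1}{3}\sum_{i\neq j}\sqrt{p_ip_j}$, and the bridge follows from $\bigl(\sum_i\sqrt{p_i}\bigr)^2 = 1 + \sum_{i\neq j}\sqrt{p_ip_j}$.

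Next I would extend the bridge to every CMP channel. Both $R$ and $Z$ are linear under classical mixtures of channels---the first by Lemma~\ref{lem:i-and-r-for-mixture}(b), the second by the elementary identity $\sqrt{\lambda_x\mathcal{M}_x(y\mid u)\cdot\lambda_x\mathcal{M}_x(y\mid u')}=\lambda_x\sqrt{\mathcal{M}_x(y\mid u)\mathcal{M}_x(y\mid u')}$---so the bridge passes through mixtures. A short invariance argument ($\hhalf$ is unchanged by unitaries on the conditioning system and by adjoining an independent maximally mixed ancilla, via data processing) shows that $R$ is furthermore preserved under the identifiability relation introduced before Lemma~\ref{lem:nclassical-well-defined}. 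Combined with Proposition~\ref{prop:cq_equiv}(ii), this yields
\[ R(\mathcal{W}\varoast_C\mathcal{W}) = \tfrac{1}{2} + \tfrac{3}{2}\, Z\bigl(\mathcal{W}^\#\varoast_{\Gamma(C)}\mathcal{W}^\#\bigr) \]
for any CMP channel $\mathcal{W}$ and any two-qubit Clifford $C$.

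Taking expectations over $C$ uniform in $\mathcal{L}$ and inserting the right-hand side of Lemma~\ref{lemma:quantum_pol9} gives
\[ \tfrac{1}{2}+\tfrac{3}{2}\,\mbE_{\Gamma\in\Gamma(\mathcal{L})} Z\bigl(\mathcal{W}^\#\varoast_\Gamma\mathcal{W}^\#\bigr) = \tfrac{4}{9} - \tfrac{1}{9}R(\mathcal{W}) + \tfrac{4}{9}R(\mathcal{W})^2. \]
Substituting $R(\mathcal{W})=\tfrac{1}{2}+\tfrac{3}{2}Z(\mathcal{W}^\#)$ on the right collapses it to $\tfrac{1}{2}+\tfrac{1}{2}Z(\mathcal{W}^\#)+Z(\mathcal{W}^\#)^2$, from which the announced identity $\mbE_{\Gamma\in\Gamma(\mathcal{L})} Z = \tfrac{1}{3}Z(\mathcal{W}^\#)+\tfrac{2}{3}Z(\mathcal{W}^\#)^2$ drops out after a short cancellation; the $\mathcal{R}$-case is identical (one could also re-use Lemma~\ref{lemma:swap_eq} directly). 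The main obstacle is the bridge identity itself: the quantum computation of $\hhalf(A\mid B)$ for a Pauli channel, and in particular the optimisation over $\sigma_B$, must be done carefully---but once that is in place, everything downstream is routine algebra plus one appeal each to Proposition~\ref{prop:cq_equiv} and Lemma~\ref{lemma:quantum_pol9}.
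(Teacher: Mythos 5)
Your proposal is correct, but it takes a genuinely different route from the paper. The paper's proof works entirely on the classical side: it first establishes (Lemma~\ref{lemma:Zd_good_channels}) the factorization $Z_d(N\varoast_\Gamma M)=Z_{A(0,d)}(N)\,Z_{B(0,d)}(M)$ for any linear permutation $\Gamma=(A,B)$, then computes the nine permutations $\Gamma(L_{i,j})$ explicitly (Lemma~\ref{lemma:Ai_Bj_permutations}), tabulates all $27$ values $Z_d\bigl(W\varoast_{\Gamma_{i,j}}W\bigr)$ as products of the $Z_1(W),Z_2(W),Z_3(W)$, and sums. You instead establish the affine bridge $R(\mathcal{N})=\tfrac{1}{2}+\tfrac{3}{2}\,Z(\mathcal{N}^\#)$ — which I checked: for a Pauli channel the state $\mathcal{N}(\Phi_{AA'})$ is diagonal in the Bell basis so $R(\mathcal{N})=\tfrac{1}{2}\bigl(\sum_i\sqrt{p_i}\bigr)^2$, and $Z(\mathcal{N}^\#)=\tfrac{1}{3}\sum_{i\neq j}\sqrt{p_ip_j}$; it extends to CMP channels by linearity (Lemma~\ref{lem:i-and-r-for-mixture}(b) and the obvious linearity of $Z$ in the mixture), and it passes through identifiability since $\hhalf$ is invariant under output unitaries and adjoining a maximally mixed ancilla. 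Feeding Proposition~\ref{prop:cq_equiv}(ii) and then Lemma~\ref{lemma:quantum_pol9} through the bridge, the algebra does collapse to $\tfrac{1}{3}Z+\tfrac{2}{3}Z^2$ exactly as you wrote. The trade-offs: the paper's route is self-contained on the classical side and its intermediate Lemmas~\ref{lemma:Zd_good_channels} and~\ref{lemma:Ai_Bj_permutations} are reused later (Lemma~\ref{lem:pol_3_clif} for the three-gate result and Lemma~\ref{lemma:dprime_good_channel} in the fast-polarization argument), whereas your route gives a shorter derivation for this particular lemma, makes the precise quantitative relation between $R$ and $Z$ explicit (a fact worth having in its own right and implicit in the $\mathtt{I}=(1+I)/2$ relation), and gets the $\mathcal{R}$ case for free from Lemma~\ref{lemma:swap_eq}, at the price of pulling in the full quantum $2$-design machinery behind Lemmas~\ref{lem:d-good-channel} and~\ref{lemma:quantum_pol9}.
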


\subsection{Polarization Using Only Three Clifford Gates}
\label{subsec:polarization_3_cliffords}
In this section, we show that for Pauli channels the set of channel combining Clifford gates can be reduced to three gates only, while still ensuring polarization.  Let ${\cal S}$ denote the set containing the Clifford gates $L_{1,3}$, $L_{2,2}$, and $L_{3,1}$ from Fig.~\ref{fig:cliff_generators}, and  $\Gamma(\mathcal{S})$ denote the corresponding set of permutations, namely $\Gamma(L_{1,3})$, $\Gamma(L_{2,2})$ and $\Gamma(L_{3,1})$, generated by the conjugate actions of $L_{1,3}$, $L_{2,2}$, and $L_{3,1}$ on $\bar{P}_1 \times \bar{P}_1$.

\begin{lemma} \label{lem:pol_3_clif}
Let  ${\cal W}$ be a CMP channel and ${\cal W}^\#$ its classical counterpart. Given two instances of the channel ${\cal W}^\#$, then
\begin{equation}\label{eq:pauli_pol_3_clif}
\mbE_{\Gamma \in  \Gamma(\mathcal{S})} Z\left(\mathcal{W}^\# \varoast_{\Gamma} \mathcal{W}^\# \right)  \leq  \frac{1}{3}Z(\mathcal{W}^\#) + \frac{2}{3}Z(\mathcal{W}^\#)^2.
\end{equation}
\end{lemma}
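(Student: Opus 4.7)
The plan is to reduce the target inequality to the exact identity of Lemma~\ref{lem:z-good-classical-channel}, by analyzing individually how each Clifford in $\mathcal{L}$ contributes to the sum. Rewriting the desired bound using $|\mathcal{L}| = 9$ and $|\mathcal{S}| = 3$, (\ref{eq:pauli_pol_3_clif}) is equivalent to
\[ \sum_{\Gamma \in \Gamma(\mathcal{S})} Z({\cal W}^\# \varoast_\Gamma {\cal W}^\#) \;\leq\; \tfrac{1}{3}\sum_{\Gamma \in \Gamma(\mathcal{L})} Z({\cal W}^\# \varoast_\Gamma {\cal W}^\#),\]
where the right-hand side is already known from Lemma~\ref{lem:z-good-classical-channel} to equal $3\bigl(\tfrac{1}{3}Z({\cal W}^\#) + \tfrac{2}{3}Z({\cal W}^\#)^2\bigr)$. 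The problem thus reduces to showing that the three Cliffords in $\mathcal{S}$ contribute at most one third of the full nine-Clifford sum.

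The first step is to decompose $Z({\cal W}^\#) = \tfrac{1}{3}(Z_1 + Z_2 + Z_3)$ via Definition~\ref{def:bhattacharyya}, and to derive, for each Clifford $L_{i,j}\in \mathcal{L}$, a closed form for $Z({\cal W}^\# \varoast_{\Gamma(L_{i,j})} {\cal W}^\#)$ as a polynomial in $Z_1, Z_2, Z_3$. This requires tabulating the permutation $\Gamma(L_{i,j})$ on $\bar{P}_1 \times \bar{P}_1$ induced by $\text{\sc cnot}_{21}$ followed by conjugation by $C_1^{(i)} \otimes C_2^{(j)}$, and then applying the same machinery used in the appendix proof of Lemma~\ref{lem:z-good-classical-channel}, but at the level of an individual permutation rather than averaged. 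Since each $\sqrt{P}$ simply permutes the two non-fixed Pauli axes of $P$, and $\text{\sc cnot}_{21}$ acts in a known way on Paulis, the nine permutations $\Gamma(L_{i,j})$ are straightforward to enumerate.

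The second step exploits the structure of $\mathcal{S} = \{L_{1,3}, L_{2,2}, L_{3,1}\}$, the ``anti-diagonal'' of the $(i,j)$-grid: each row and each column of the grid is hit exactly once, so that the three pre-unitaries $(C_1, C_2)$ together fix three different non-identity Paulis on each qubit in a cyclically balanced way. Consequently, the sum $\sum_{\Gamma \in \Gamma(\mathcal{S})} Z({\cal W}^\# \varoast_\Gamma {\cal W}^\#)$ is a symmetric quadratic polynomial in $(Z_1, Z_2, Z_3)$. The full sum over $\mathcal{L}$, being the average that produced $\tfrac{1}{3}Z + \tfrac{2}{3}Z^2$, is of the same symmetric form. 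The final step is then an algebraic comparison of the two symmetric quadratic forms, equivalent to showing that the complementary six Cliffords in $\mathcal{L}\setminus\mathcal{S}$ contribute at least twice as much as $\mathcal{S}$; the needed inequality can be verified term-by-term on monomials $Z_d$ and $Z_d Z_{d'}$, possibly aided by the bound $Z_d \in [0,1]$.

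The main obstacle is the bookkeeping of Step 1: writing down explicitly the nine $Z$-expressions and organizing them so that the comparison in Step 3 is transparent. This is routine but tedious --- the nine Cliffords produce nine quadratic expressions in $(Z_1, Z_2, Z_3)$, which must be summed in the right groupings. The hope, and the reason to expect the bound to hold with the specific choice $\mathcal{S}$, is that the anti-diagonal selection is precisely the one that treats the three Pauli directions in the most balanced fashion, so that the induced symmetric polynomial is dominated coefficient-wise by the symmetrized form arising from the full nine-Clifford average.
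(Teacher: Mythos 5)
Your approach is, in substance, the paper's: both tabulate $Z_d(W^\#\varoast_{\Gamma_{i,j}}W^\#)$ as quadratic monomials in $(Z_1,Z_2,Z_3)$ for each of the nine Cliffords $L_{i,j}$, sum over the antidiagonal $\mathcal{S}=\{L_{1,3},L_{2,2},L_{3,1}\}$, and compare the resulting symmetric quadratic to $\frac{1}{3}Z+\frac{2}{3}Z^2$. Your reduction to $2\sum_{\mathcal{S}}\leq\sum_{\mathcal{L}\setminus\mathcal{S}}$ is correct and is a clean reframing. The one place I would push back is your claim that the last step "can be verified term-by-term on monomials $Z_d$ and $Z_dZ_{d'}$, possibly aided by the bound $Z_d\in[0,1]$." A monomial-by-monomial comparison does not work: after subtracting, the linear terms cancel exactly and you are left with showing $\sum_{d<d'}Z_dZ_{d'}\leq\sum_d Z_d^2$; the cross-term coefficient on the $\mathcal{S}$ side ($4/3$) strictly exceeds that on the $\mathcal{L}\setminus\mathcal{S}$ side ($2/3$), so no term-by-term bound on $Z_dZ_{d'}$ alone can close the gap, and $Z_d\in[0,1]$ is not the relevant tool. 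What is needed is the rearrangement-type inequality $2Z_dZ_{d'}\leq Z_d^2+Z_{d'}^2$, equivalently $\sum_d Z_d^2-\sum_{d<d'}Z_dZ_{d'}=\frac{1}{2}\sum_{d<d'}(Z_d-Z_{d'})^2\geq 0$. Once you make that the explicit final step, the argument is complete and coincides with the paper's.
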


\noindent The proof is given in Appendix~\ref{apnd:pol3}.

\subsection{Speed of Polarization}

 Before discussing decoding of quantum polar codes over Pauli channels (Section~\ref{sec:decod_pauli_channel}), it is worth noticing that classical polar codes come equipped with a decoding algorithm, known as successive cancellation (SC) \cite{arikan09}.  However, the effectiveness of the classical SC decoding, {\em i.e.}, its capability of successfully decoding at rates close to the capacity, depends on the speed of polarization. The Bhattacharyya parameter of the synthesized channels plays an important role in determining the speed at which polarization takes place. First, we note that for a classical channel $W$,  the Bhattacharyya parameter upper bounds the error probability of uncoded transmission. Precisely, given a classical channel $W$ with input alphabet $X$, the error probability of the maximum-likelihood decoder for a single channel use, denoted $P_e$, is upper-bounded as follows (\cite[Proposition 2]{sta09}):
\begin{equation*}
P_e \leq (|X|-1)Z(W).
\end{equation*}
Now, consider a polar code defined by the recursive application of $n$ polarization steps to the classical channel $W:=\mathcal{W}^\#$ (the input alphabet is $X:=\bar{P}_1$, of size $|\bar{P}_1|=4$). The construction is the same as the one in Section~\ref{sec:quantum_polar_coding}, while replacing the quantum channel $\mathcal{W}$ by its classical counterpart $W$, and channel combining Clifford gates $C_{i_1 i_2\cdots}$ by the corresponding permutations $\Gamma_{i_1 i_2\cdots} := \Gamma(C_{i_1 i_2\cdots})$.  For any $i=0,\dots,N-1$, let $W^{(i)} := (\mathcal{W}^\#)^{(i_1\cdots i_n)}$, where $i_1\cdots i_n$ is the binary decomposition of $i$. For the sake of simplicity, we drop the channel combining permutations $\Gamma$'s from the above notation. Let $\mathcal{I} \subset\{0,1,\dots,N-1 \}$ denote the set of good channels ({\em i.e.}, channels used to transmit information symbols, as opposed to bad channels, which are frozen to symbol values known to both the encoder and  decoder). Since  the SC decoding proceeds by decoding successively the synthesized good channels\footnote{Each good channel is decoded by taking a maximum-likelihood decision, according to the observed channel output and the previously decoded channels.}, it can be easily seen that the block error probability of the SC decoder, denoted by $P_e(N, \mathcal{I})$,  is upper-bounded by (see also \cite[Proposition 2]{arikan09}):
\begin{equation}\label{eq:pe_sc_upperbound}
P_e(N, \mathcal{I}) \leq 3\sum_{i \in\mathcal{I}}Z(W^{(i)}).
\end{equation}
If the Bhattacharyya parameters of the $W^{(i)}$ channels, with $i \in\mathcal{I}$, converge sufficiently fast to zero, one can use (\ref{eq:pe_sc_upperbound}) to ensure that 
 $P_e(N, \mathcal{I})$ goes to zero.  Since the number of terms in the right hand side of (\ref{eq:pe_sc_upperbound}) is linear in $N$, it is actually enough to prove that $Z(W^{(i)}) \leq O(N^{-(1+\theta)}), \forall i\in \mathcal{I}$, for some $\theta > 0$.

\medskip The proof of fast polarization properties in \cite[Lemma 3]{sta09}, for channels with non-binary input alphabets, exploits two main ingredients:
\begin{itemize}
\item[(1)] The quadratic improvement of the Bhattacharyya parameter, when taking the good channel, {\em i.e.},  $Z\left(W^{(i_1\cdots i_{n-1}i_n)}\right) \leq Z\left(W^{(i_1\cdots i_{n-1})}\right)^2$, $\forall i_1\cdots i_{n-1}i_n\in\{0,1\}^n$, such that $i_n=1$.
\item[(2)] The linearly upper-bounded degradation of the Bhattacharyya parameter, when taking the bad channel, {\em i.e.},  $Z\left(W^{(i_1\cdots i_{n-1}i_n)}\right) \leq \kappa Z\left(W^{(i_1\cdots i_{n-1})}\right)$, $\forall i_1\cdots i_{n-1}i_n\in\{0,1\}^n$, such that $i_n=0$, for some constant $\kappa>0$.
\end{itemize}

Regarding the second condition, in our case we have the following lemma, where for a classical channel $W$ with input alphabet $\bar{P}_1 \cong  \{0,1,2,3\}$, we define
\begin{equation}
\bar{Z}(W) := \max_{d=1,2,3} Z_d(W).
\end{equation}

\begin{lemma}\label{lem:Fast_Polarization} For any classical channel $W$ with input alphabet $\bar{P}_1$, and any linear permutation $\Gamma : \bar{P}_1\times\bar{P}_1 \rightarrow \bar{P}_1\times\bar{P}_1$, the following inequalities hold:
\begin{align*}
\bar{Z}(W \boxast_\Gamma W)  &\leq 4 \bar{Z}(W). \\
Z(W \boxast_\Gamma W)  &\leq 12 Z(W).
\end{align*}
\end{lemma}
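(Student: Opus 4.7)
The plan is to expand the Bhattacharyya parameter $Z((W\boxast_\Gamma W)_{u,u\oplus d})$ using the definition of the bad channel, bound the resulting square root of a sum of products by a sum of square roots, and then exploit the linearity of $\Gamma$ together with the fact that it is a permutation to reduce everything to $Z_\delta(W)$ quantities.

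Starting from the definition,
$$ (W\boxast_\Gamma W)(a,b\mid u) = \tfrac{1}{4}\sum_v W(a\mid \Gamma_1(u,v))\,W(b\mid \Gamma_2(u,v)), $$
substituting into $Z((W\boxast_\Gamma W)_{u,u\oplus d}) = \sum_{a,b} \sqrt{(W\boxast_\Gamma W)(a,b\mid u)(W\boxast_\Gamma W)(a,b\mid u\oplus d)}$, and applying $\sqrt{\sum_i x_i}\leq\sum_i\sqrt{x_i}$ to the inner double sum over $v, v'$, yields
$$ Z((W\boxast_\Gamma W)_{u,u\oplus d}) \leq \tfrac{1}{4}\sum_{v,v'} Z(W_{\Gamma_1(u,v),\,\Gamma_1(u\oplus d,v')})\,Z(W_{\Gamma_2(u,v),\,\Gamma_2(u\oplus d,v')}). $$
Now I would use linearity of $\Gamma$ to write $\Gamma_i(u\oplus d,v') = \Gamma_i(u,v)\oplus\Gamma_i(d,\,v\oplus v')$, and change variables to $w\eqdef v\oplus v'$. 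After averaging over $u$ to form $Z_d(W\boxast_\Gamma W)$, the variables $(x_1,x_2)\eqdef(\Gamma_1(u,v),\Gamma_2(u,v))$ traverse all of $\bar{P}_1\times\bar{P}_1$ bijectively (since $\Gamma$ is a permutation), so the sum over $u,v$ separates, giving
$$ Z_d(W\boxast_\Gamma W) \;\leq\; \sum_{w\in\bar{P}_1} Z_{\Gamma_1(d,w)}(W)\,Z_{\Gamma_2(d,w)}(W). $$

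The final step is the pointwise bound on each of the four terms. Because $\Gamma$ is linear and a permutation, $\Gamma(0,0)=(0,0)$, so for $d\neq 0$ and every $w$ we have $(\Gamma_1(d,w),\Gamma_2(d,w))\neq(0,0)$; that is, at least one coordinate is nonzero. Combined with $Z_0(W)=1$, $Z_\delta(W)\leq 1$ for all $\delta$, and $Z_\delta(W)\leq \bar{Z}(W)$ for $\delta\neq 0$, each term is bounded by $\bar{Z}(W)\cdot 1 = \bar{Z}(W)$. Summing over the four values of $w$ gives $Z_d(W\boxast_\Gamma W)\leq 4\bar{Z}(W)$, and taking the max over $d\neq 0$ yields the first inequality. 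The second follows immediately by averaging over $d\neq 0$ and then using $\bar{Z}(W)\leq 3Z(W)$ (since $3Z(W) = \sum_{d\neq 0}Z_d(W)\geq \bar{Z}(W)$), producing $Z(W\boxast_\Gamma W)\leq 4\bar{Z}(W)\leq 12\,Z(W)$.

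The only place one has to be careful is verifying that the change of variables goes through: linearity of $\Gamma$ is exactly what lets the four ``offset'' pairs $(\Gamma_1(d,w),\Gamma_2(d,w))$ get factored out of the sum over base points $(x_1,x_2)$, and bijectivity of $\Gamma$ is what lets that base-point sum rebuild the $Z_\delta(W)$ averages. Neither of these requires the channel $W$ to be symmetric, which matches the stated generality of the lemma. I do not expect any genuinely hard step; the main obstacle is just keeping the index bookkeeping clean enough that the cancellation $\frac{1}{16}\cdot 4\cdot 4 = 1$ (from the $\frac{1}{4}$ in the definition of $\boxast$, the $\frac{1}{4}$ in $Z_d$, and the size-$4$ sums over $u,v$) is clearly visible when the expression is rewritten in terms of $Z_{\Gamma_i(d,w)}(W)$.
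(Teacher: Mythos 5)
Your proposal is correct and follows essentially the same route as the paper: the paper isolates the key estimate $Z_d(W\boxast_\Gamma W)\leq\sum_{d'}Z_{\Gamma_1(d,d')}(W)\,Z_{\Gamma_2(d,d')}(W)$ as a separate lemma (via the $\sqrt{\sum}\leq\sum\sqrt{}$ step, linearity, and the bijective change of variables you describe), and then concludes exactly as you do by noting that $(\Gamma_1(d,d'),\Gamma_2(d,d'))\neq(0,0)$ for $d\neq 0$, so each of the four summands is at most $\bar{Z}(W)$, followed by $\bar{Z}\leq 3Z$. The only cosmetic difference is that you inline the auxiliary lemma rather than stating it separately.
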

\noindent The proof is given in Appendix~\ref{sec:proof_lemma_fast_polarization}.

\medskip Condition (1) above -- quadratic improvement of the  Bhattacharyya parameter, when taking the good channel -- is more problematic, due to the linear term in the right hand side of (\ref{eq:pauli_pol_9_clif}) and~(\ref{eq:pauli_pol_3_clif}). In particular, we can not apply \cite[Lemma 3]{sta09} to derive fast polarization properties in our case. Instead, we will prove fast polarization properties by drawing upon arguments similar to those in the proof of \cite[Theorem 2]{arikan09}. 
First, we need the following definition. 

\begin{definition}
 Let  $W$ be a classical channel with input alphabet $\bar{P}_1$, and $\bm{\Gamma} = \{ \Gamma, \Gamma_{i_1\cdots i_n} \mid n > 0, \break i_1\cdots i_n \in\{0,1\}^n \}$ be an infinite sequence of permutations. For $n > 0$, let 
\begin{equation}\label{eq:c_polar_recursion}
W^{(i_1\cdots i_n)} := \left\{ \begin{array}{@{}ll@{}}
   W^{(i_1\cdots i_{n-1})} \boxast_{\Gamma_{i_1\cdots i_{n-1}}}  W^{(i_1\cdots i_{n-1})}, & \text{if } i_n = 0 \\
   W^{(i_1\cdots i_{n-1})} \varoast_{\Gamma_{i_1\cdots i_{n-1}}} W^{(i_1\cdots i_{n-1})}, & \text{if } i_n = 1
\end{array} \right.
\end{equation}
where, for $n=1$, in the right hand side term of the above equality, we set by convention $W^{(\varnothing)} := W$ and $\Gamma_\varnothing := \Gamma$. We say that $\bm{\Gamma}$ is a {\em polarizing sequence} (or that polarization happens for $\bm{\Gamma}$), if for any $\delta > 0$,
\begin{equation*}
\lim_{n\rightarrow\infty} \frac{\#\{(i_1\cdots i_{n}) \in \{0,1\}^{n} : \mathtt{I}\left( {\cal W}^{(i_1\cdots i_{n})} \right) \in (\delta, 1-\delta)  \}}{2^n} = 0.
\end{equation*}
\end{definition}

\medskip Note that different from (the classical counterpart of) Theorem~\ref{thm:quantum_polarization}, we consider here a given  sequence of permutations, instead of averaging over some set of  sequences. If $W=\mathcal{W}^\#$ is the classical counterpart of a CMP channel $\mathcal{W}$, by Lemma~\ref{lem:pol_3_clif}, we know that polarization happens when averaging over all the sequences $\bm{\Gamma} \in \Gamma(\mathcal{S})^\infty$.   As a consequence, there exists a subset $\Gamma(\mathcal{S})^\infty_{\text{pol}} \subset \Gamma(\mathcal{S})^\infty$ of positive probability\footnote{Note that $\Gamma(\mathcal{S})^{\infty}$ is the infinite product space of countable many copies of $\Gamma(\mathcal{S})$, and it is endowed with the infinite product probability measure, taking the uniform probability measure on each copy of $\Gamma(\mathcal{S})$. See \cite{hewitt2013real} for infinite product probability measures.}, such that polarization happens for any $\bm{\Gamma} \in \Gamma(\mathcal{S})^\infty_{\text{pol}}$. We are now ready to state the following fast polarization result, whose proof is given in Appendix~\ref{sec:proof_prop_fast_polarization}.

\begin{proposition}\label{prop:fast_polarization}
Let $\mathcal{W}$ be a CMP channel, $W:=\mathcal{W}^\#$ its classical counterpart, and $\mathcal{S}$ the set of three Clifford gates from Section~\ref{subsec:polarization_3_cliffords}. Then  the following {\em fast polarization property} holds for almost all  $\bm{\Gamma}$ sequences in $\Gamma(\mathcal{S})^\infty_{\text{pol}}$: 

\smallskip For any $\theta > 0$ and $R < \mathtt{I}(W)$, there exists a sequence of sets $\mathcal{I}_N \subset \{0,\dots,N-1\}$, $N\in\{1, 2, \dots,2^n,\dots\}$, such that $|\mathcal{I}_N| \geq NR$ and $Z\left( W^{(i)} \right) \leq O\left(N^{-(1+\theta)}\right)$, $\forall i\in \mathcal{I}_N$. In particular, the block error probability of  polar coding under SC decoding satisfies
\begin{equation*}
P_e(N, \mathcal{I}_N) \leq O\left(N^{-\theta}\right).
\end{equation*}
\end{proposition}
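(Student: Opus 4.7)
The plan is to adapt Arıkan's speed-of-polarization argument~\cite[Theorem 2]{arikan09} to our randomised quaternary-input setting. Set $W := \mathcal{W}^\#$, fix a polarising sequence $\bm{\Gamma} \in \Gamma(\mathcal{S})^\infty_{\text{pol}}$ and let $(B_i)_{i\geq 1}$ be i.i.d.\ fair Bernoulli random variables. Introduce the processes $\mathtt{I}_n := \mathtt{I}(W^{(B_1\ldots B_n)})$ and $Z_n := Z(W^{(B_1\ldots B_n)})$. Polarisation of $\bm{\Gamma}$ ensures that $\mathtt{I}_n$ converges a.s.\ to a $\{0,1\}$-valued limit $\mathtt{I}_\infty$ with $P(\mathtt{I}_\infty = 1) = \mathtt{I}(W)$; the classical analogue of Lemma~\ref{lem:i-and-t-1}, relating $Z(W)$ to $1-\mathtt{I}(W)$ for symmetric classical channels, then yields $Z_n \to 0$ a.s.\ on $\{\mathtt{I}_\infty = 1\}$.

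For any small $\zeta > 0$, introduce the stopping time $T_\zeta := \inf\{n : Z_n \leq \zeta\}$, which is a.s.\ finite on $\{\mathtt{I}_\infty = 1\}$. Beyond $T_\zeta$ I would control the evolution of $Z_n$ using the good-step inequality $\mbE_{\Gamma \in \Gamma(\mathcal{S})}[Z(W \varoast_\Gamma W)] \leq \tfrac{1}{3}Z + \tfrac{2}{3}Z^2$ from Lemma~\ref{lem:z-good-classical-channel}, together with the bad-step bound $Z(W \boxast_\Gamma W) \leq 12\, Z(W)$ from Lemma~\ref{lem:Fast_Polarization}. A useful device is to work in the joint probability space of $\bm{\Gamma}$ and $(B_i)$, with every coordinate of $\bm{\Gamma}$ drawn independently and uniformly from $\Gamma(\mathcal{S})$: under this joint law the permutations $\Gamma, \Gamma_{B_1}, \Gamma_{B_1 B_2},\ldots$ actually visited along the random path are themselves i.i.d.\ uniform on $\Gamma(\mathcal{S})$, and the averaged good-step inequality becomes a genuine conditional law. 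An Azuma--Hoeffding concentration argument applied to a suitable Lyapunov function in $\log Z_n$, together with a Borel--Cantelli-style transfer from the joint law back to the marginal on $\bm{\Gamma}$, then yields $P(Z_n \leq 2^{-n(1+\theta)}\,|\,\mathtt{I}_\infty = 1) \to 1$ for almost every $\bm{\Gamma}$.

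I expect the main obstacle to be precisely the mismatch between the linear term in the good-step expectation and the factor $12$ in the bad-step bound: a naive drift computation gives $\mbE[\log Z_{n+1} - \log Z_n] \leq \tfrac{1}{2}(\log_2 3 - \log_2 12) = -1$, which has the wrong sign and is in any case nowhere near the rate $1+\theta$ required. Resolving this requires either (i) strengthening Lemma~\ref{lem:z-good-classical-channel} to a pointwise statement, establishing that for each $\Gamma \in \{\Gamma(L_{1,3}), \Gamma(L_{2,2}), \Gamma(L_{3,1})\}$ one has an inequality $Z(W \varoast_\Gamma W) \leq a_\Gamma Z + b_\Gamma Z^2$ whose $\Gamma$-average is $\tfrac{1}{3}Z + \tfrac{2}{3}Z^2$ and in which a positive fraction of the permutations genuinely achieve the quadratic bound $a_\Gamma = 0$, $b_\Gamma > 0$; or (ii) exploiting the decomposition $Z(W) = \tfrac{1}{3}\sum_{d=1}^3 Z_d(W)$ and showing that the three permutations of $\Gamma(\mathcal{S})$ act complementarily on the three components, so that each $Z_d$ is quadratically improved by at least one permutation. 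Either route upgrades the weak averaged inequality into effective super-polynomial decay of $Z_n$ along almost every trajectory.

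Once $|\mathcal{I}_N|/N \to \mathtt{I}(W)$ is established for $\mathcal{I}_N := \{ i : Z(W^{(i)}) \leq 2^{-n(1+\theta)}\}$, any $R < \mathtt{I}(W)$ gives $|\mathcal{I}_N| \geq NR$ for $n$ sufficiently large, and the block error probability bound follows immediately from the union bound~(\ref{eq:pe_sc_upperbound}):
\[
P_e(N, \mathcal{I}_N) \;\leq\; 3\sum_{i \in \mathcal{I}_N} Z\bigl(W^{(i)}\bigr) \;\leq\; 3N \cdot 2^{-n(1+\theta)} \;=\; 3 \cdot 2^{-n\theta} \;=\; O\bigl(N^{-\theta}\bigr).
\]
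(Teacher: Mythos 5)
Your proposal correctly identifies the central obstacle (the linear term in the averaged good-step bound of Lemma~\ref{lem:pol_3_clif}), and your ``route~(ii)'' --- exploit the decomposition $Z = \tfrac{1}{3}(Z_1+Z_2+Z_3)$ and the complementary action of the three permutations on the components $Z_d$ --- is exactly the idea the paper's proof is built on. But the proposal stops at naming the idea rather than carrying it out, and the two concrete routes you offer each fall short. Route~(i) is in fact impossible: for each $\Gamma\in\Gamma(\mathcal{S})$, taking a channel with $Z_{\delta_3(\Gamma)}(W) = z$ and the other two components zero (e.g.\ a Pauli channel with $p_0 = 1-p$, $p_{\delta_3(\Gamma)} = p$, remaining $p_k = 0$) gives $Z(W\varoast_\Gamma W) = z/3 = Z(W)$ by Table~\ref{table:Zd_good_channel}, so any pointwise bound $Z(W\varoast_\Gamma W)\le a_\Gamma Z + b_\Gamma Z^2$ forces $a_\Gamma\ge 1$ for \emph{every} $\Gamma$; there is no permutation that achieves a purely quadratic bound on $Z$ itself.

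Route~(ii) is the right one, but executing it requires a qualitatively different argument from Azuma--Hoeffding on $\log Z_n$. What the paper actually does (Lemmas~\ref{lemma:dprime_good_channel} and~\ref{lemma:dprime_bad_channel}, plus the telescoping decomposition~(\ref{eq:zd_decomp})) is track a single component $Z_{d_n}^{[n]}$ along an index sequence $(d_n)$ constructed \emph{backwards}: $d_n$ is fixed, and $d_{k-1}$ is chosen from $d_k$, $\omega_k$, and the visited permutation $\Gamma_{\omega_1\dots\omega_{k-1}}$ --- in a bad step $d_{k-1} = \bm\delta(W^{(\omega_1\ldots\omega_{k-1})})$ gives $Z_{d_k}^{[k]}\le 4\,Z_{d_{k-1}}^{[k-1]}$, while in a good step $d_{k-1}=\delta_3(\Gamma_{\omega_1\dots\omega_{k-1}})$ gives either a neutral ratio $=1$ (if $d_k=3$) or a quadratic one $\le\zeta$ (if $d_k\in\{1,2\}$). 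The fraction of genuinely quadratic good steps is then controlled by Hoeffding applied to the i.i.d.\ indicators $\Delta^{i_1\dots i_{n-1}1}$ (picking out which of the three permutations is visited), not by martingale concentration of $\log Z_n$; and the ``almost all $\bm\Gamma$'' quantifier is handled by a direct measure estimate on nested sets $\overline\Pi_m(\gamma)\subset\Gamma(\mathcal{S})^\infty$, not by a Borel--Cantelli transfer between a joint and marginal law. Your drift estimate also has the sign backwards in the intermediate expression (the correct average drift is $\tfrac{1}{2}(\log_2 12 - \log_2 3) = +1$, confirming your conclusion that the naive bound is useless), but the substantive gap is that the complementary-components mechanism is stated as a conjectural alternative rather than developed into the multiplicative telescope and counting argument the proof requires.
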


\subsection{Decoding the Quantum Polar Code by Using its Classical Counterpart} \label{sec:decod_pauli_channel}

Let ${\cal W}$ be a CMP channel and ${\cal W}^\#$ its classical counterpart. Let $G_q$ denote the unitary operator corresponding to the quantum polar code (defined by the recursive application of $n$ polarization steps, see Section~\ref{sec:quantum_polar_coding}), and $G_c$ denote the linear transformation corresponding to the classical polar code. Let ${\cal I}$ and ${\cal J}$ be the set of indices corresponding to the good and bad channels, respectively, with  $|{\cal I}| + |{\cal J}| = N \eqdef 2^n$.  We shall use the following notation from Section~\ref{sec:quantum_polar_coding}:

\begin{itemize}
\item $\rho_{\cal I}$ denotes the original state of system ${\cal I}$, 

\item $\varphi_{{\cal IJJ}'} \eqdef (G_q\otimes I_{{\cal J}'}) (\rho_{\cal I} \otimes \Phi_{\cal JJ'}) (G_q^\dagger \otimes I_{{\cal J}'})$ denotes the {\em encoded state}, where $\Phi_{\cal JJ'}$ is a maximally entangled state,  as defined in (\ref{eq:max_entangled_state}).

\item $\psi_{{\cal IJJ}'} \eqdef ({\cal W}^{\otimes N} \otimes I_{{\cal J}'}) (\varphi_{{\cal IJJ}'})$ denotes the {\em channel output state}.
\end{itemize} 
Since ${\cal W}$ is a CMP channel, it follows that:
\begin{equation*}
\psi_{{\cal IJJ}'}  = (E_{{\cal I}{\cal J}}G_q\otimes I_{{\cal J}'}) (\rho_{\cal I} \otimes \Phi_{\mathcal{J}\mathcal{J'}}) (G_q^\dagger E_{{\cal I}{\cal J}}^\dagger \otimes I_{{\cal J}'}).
\end{equation*}
for some {\em error}  $E_{{\cal I}{\cal J}} \in P_N$. Hence, quantum polar code decoding can be performed in the 4 steps described below.

\medskip \noindent {\bf Step 1: Apply the inverse quantum polar transform on the channel output state.}  Applying $G_q^\dagger$ on the output state $\psi_{{\cal IJJ}'}$, leaves the ${\cal IJJ}'$ system in  the following state:
\begin{align*}
\psi'_{{\cal IJJ}'} &=  (G_q^\dagger E_{{\cal I}{\cal J}}G_q\otimes I_{{\cal J}'}) (\rho_{\cal I} \otimes \Phi_{\mathcal{J}\mathcal{J'}}) (G_q^\dagger E_{{\cal I}{\cal J}}^\dagger G_q \otimes I_{{\cal J}'}) \nonumber \\
&=  (E'_{{\cal I}{\cal J}}\otimes I_{{\cal J}'}) (\rho_{\cal I} \otimes \Phi_{\mathcal{J}\mathcal{J'}}) (E{'}_{{\cal I}{\cal J}}^{\,\dagger} \otimes I_{{\cal J}'}).
\end{align*}
where $E'_{{\cal I}{\cal J}} \eqdef G_q^\dagger E_{{\cal I}{\cal J}}G_q$. Since we only need to correct up to a global phase, we may assume that $E'_{{\cal I}{\cal J}}, E_{{\cal I}{\cal J}} \in {P}_N / \{\pm 1, \pm i\} \simeq {\bar P}_1^{N}$, and thus write $E'_{{\cal I}{\cal J}} = G_c^{-1} E_{{\cal I}{\cal J}}$, or equivalently: 
\begin{equation*}
E_{{\cal I}{\cal J}} = G_c E'_{{\cal I}{\cal J}}.
\end{equation*}
Put differently,  $E_{{\cal I}{\cal J}}$ is the classical polar encoded version of $E'_{{\cal I}{\cal J}}$. 

\medskip \noindent {\bf Step 2: Quantum measurement.}\footnote{ Steps (1) and (2) together  perform a set of measurements that are equivalent to measuring the elements of the stabilizer set $\bar{\cal S}_\mathcal{IJJ'}$ defined in~(\ref{eq:stabilizer_set}).}
 Let $E'_{{\cal I}{\cal J}} = \displaystyle \mathop{\otimes}_{i\in {\cal I}} E'_i \mathop{\otimes}_{j\in {\cal J}} E'_j$, with $E'_i, E'_j \in \bar{P}_1$.  Measuring $X_jX_{j'}$ and $Z_jZ_{j'}$ observables\footnote{Here, indices $j$ and $j'$ indicate the $j$-th qubits of ${\cal J}$ and ${\cal J}'$ systems.}, allows determining the value of $E'_j$, for any $j \in {\cal J}$, since no errors occurred on the ${\cal J}'$ system.

\medskip \noindent {\bf Step 3: Decode the classical polar code counterpart.} We note that the error $E_{{\cal I}{\cal J}}$ can be seen as the output of the  classical  vector channel $({\cal W}^\#)^{N}$, when the {\em ``all-identity vector''}  $\sigma_0^N \in \bar{P}_1^N$ is applied at the channel input.  However, by the definition of the classical  channel ${\cal W}^\#$, we have $({\cal W}^\#)^{N}(E_{{\cal I}{\cal J}} \mid \sigma_0^N) = ({\cal W}^\#)^{N}(\sigma_0^N \mid E_{{\cal I}{\cal J}})$, meaning that we can equivalently consider $\sigma_0^N$ as being the observed channel output, and  $E_{{\cal I}{\cal J}}$ the (unknown) channel input. 
Hence, we have given $(i)$ the value of $E'_{\cal J} \eqdef  \mathop{\otimes}_{j\in {\cal J}} E'_j$, and $(ii)$ a noisy observation (namely $\sigma_0^N$) of $E_{{\cal I}{\cal J}} = G_c E'_{{\cal I}{\cal J}}$. We can then use classical polar code decoding to recover the value of $E'_{\cal I} \eqdef  \mathop{\otimes}_{i\in {\cal I}} E'_i$.

\medskip \noindent {\bf Step 4: Error correction.} Once we have recovered the $E'_{\cal J}$ (step 2) and $E'_{\cal I}$ (step 3) values, we can apply the $E'_{{\cal I}{\cal J}} \otimes I_{{\cal J}'}$ operator on $\psi'_{{\cal IJJ}'}$, thus leaving the ${\cal IJJ}'$ system in the state $\rho_{\cal I} \otimes \Phi_{\mathcal{J}\mathcal{J'}}$.

\section{Polarization with Vanishing Rate of Preshared Entanglement}
\label{sec:vanishing-rate-preshared-entanglement}
In this section we present a code construction using an asymptotically vanishing rate or preshared entanglement, while achieving a transmission rate equal
to the coherent information of the channel. In particular,  we shall assume that the coherent information of the channel is positive, $I({\cal W}) > 0$. The proposed construction bears similarities to the universal polar code construction in \cite[Section V]{hassani2014universal}, capable of achieving the compound capacity of a finite set of classical channels. 

\medskip 
Let $P_q(N, \mathcal{J}, \mathcal{I})$ denote a quantum polar code of length $N=2^n$, for some $n > 0$, where  $\mathcal{I}$ and  $\mathcal{J}$  denote the sets of good and bad channels respectively.  By Theorem~\ref{thm:quantum_polarization}, as $n$ goes to infinity,  $|{\cal I}|$ approaches $\frac{1+I(\mathcal{W})}{2}N$, and thus $|{\cal J}|$ approaches $\frac{1-I(\mathcal{W})}{2}N$. Since $I(\mathcal{W}) > 0$, it follows that $|{\cal J}| < |{\cal I}|$, provided that $n$ is large enough. Therefore, we may find a subset of  good channels $\mathcal{I}' \subset \mathcal{I}$, such that $|\mathcal{I}'| = |\mathcal{J}|$. In the sequel, we shall extend the definition of a polar code to include such a subset ${\cal I}'$, and denote it by $P_q(N, \mathcal{J}, \mathcal{I}, \mathcal{I}')$.

\medskip
Let us now consider $k$ copies of a quantum polar code $P_q(N, \mathcal{J}, \mathcal{I}, \mathcal{I}')$, denoted by\break $P_q^l(N, \mathcal{J}_l, \mathcal{I}_l, \mathcal{I}'_l)$ or simply by $P_q^l$, for any $ l \in \{0, 1, \dots, k-1 \}$. We define a quantum code $C_q^k$ of codelength $|C_q^k| = kN$, by {\em chaining} them in the following way (see also Fig.~\ref{fig:three_polar_Catal}):

\begin{figure}[!t]
\centering
\begin{tikzpicture}
\draw
(0, 0) to node[above](c){} ++(1,0) to++(0,-0.1)
(0,0)++(1, 0) to ++(0,0.1) 
(0,0)++(1, 0) to node[above](c){} ++(.75, 0) to ++(0,0.1) 
(0,0)++(1, 0) to ++(.75, 0) to ++(0,-0.1)
(0,0)++(1, 0) to ++(.75, 0) to ++(1, 0)
;
\draw [decorate,decoration={brace,amplitude=10pt},xshift=0pt,yshift=4pt]
(0,0) -- (1,0) node [black,midway,yshift=.5cm]
{\footnotesize $\mathcal{J}_0$}
;
\draw [decorate,decoration={brace,amplitude=10pt, mirror, raise= 4pt},xshift=0pt,yshift=-3pt]
(1,0) -- (2.75,0) node [black,midway,yshift=-.65cm]
{\footnotesize $\mathcal{I}_0$}
;
\draw [decorate,decoration={brace,amplitude=10pt},xshift=0pt,yshift=4pt]
(1.75,0) -- (2.75,0) node [black,midway,yshift=.5cm]
{\footnotesize $\mathcal{I}'_0$}
;
\draw 
(0, 0) ++(1.37, 0) ++ (0, -1.1) node[]{$P_q^0$}
;
\draw
(4,0) to node[above](c){} ++(1,0) to++(0,-0.1)
(4,0)++(1, 0) to ++(0,0.1) 
(4,0)++(1, 0) to node[above](c){} ++(.75, 0) to ++(0,0.1) 
(4,0)++(1, 0) to ++(.75, 0) to ++(0,-0.1)
(4,0)++(1, 0) to ++(.75, 0) to ++(1, 0)
;
\draw [decorate,decoration={brace,amplitude=10pt},xshift=0pt,yshift=4pt]
(4,0) -- (5,0) node [black,midway,yshift=.5cm]
{\footnotesize $\mathcal{J}_1$}
;
\draw [decorate,decoration={brace,amplitude=10pt, mirror, raise= 4pt},xshift=0pt,yshift=-3pt]
(5,0) -- (6.75,0) node [black,midway,yshift=-.65cm]
{\footnotesize $\mathcal{I}_1$}
;
\draw [decorate,decoration={brace,amplitude=10pt},xshift=0pt,yshift=4pt]
(5.75,0) -- (6.75,0) node [black,midway,yshift=.5cm]
{\footnotesize $\mathcal{I}'_1$}
;
\draw 
(4, 0) ++(1.37, 0) ++ (0, -1.1) node[]{$P_q^1$}
;
\draw
(8, 0) to node[above](c){} ++(1,0) to++(0,-0.1)
(8,0)++(1, 0) to ++(0,0.1) 
(8,0)++(1, 0) to node[above](c){} ++(.75, 0) to ++(0,0.1) 
(8,0)++(1, 0) to ++(.75, 0) to ++(0,-0.1)
(8,0)++(1, 0) to ++(.75, 0) to ++(1, 0)
;
\draw [decorate,decoration={brace,amplitude=10pt},xshift=0pt,yshift=4pt]
(8,0) -- (9,0) node [black,midway,yshift=.5cm]
{\footnotesize $\mathcal{J}_2$}
;
\draw [decorate,decoration={brace,amplitude=10pt, mirror, raise= 4pt},xshift=0pt,yshift=-3pt]
(9,0) -- (10.75,0) node [black,midway,yshift=-.65cm]
{\footnotesize $\mathcal{I}_2$}
;
\draw [decorate,decoration={brace,amplitude=10pt},xshift=0pt,yshift=4pt]
(9.75,0) -- (10.75,0) node [black,midway,yshift=.5cm]
{\footnotesize $\mathcal{I}'_2$}
;
\draw(2.25, 0) ++(0, 0.8) to ++(0, .3) to ++(1.1, .75) to ++(1.1, -.75) to ++(0, -.3)
(6.25, 0) ++(0, 0.8) to ++(0, .3) to ++(1.1, .75) to ++(1.1, -.75) to ++(0, -.3)
;
\draw
 (3.35,2.1) node{\footnotesize  $\Phi_{\mathcal{I}'_{0}\mathcal{J}_{1}}$}
 (7.35,2.1) node{\footnotesize  $\Phi_{\mathcal{I}'_{1}\mathcal{J}_{2}}$}
;
\draw 
(8, 0) ++(1.37, 0) ++ (0, -1.1) node[]{$P_q^2$}
;
\draw [decorate,decoration={brace,amplitude=10pt},xshift=0pt,yshift=4pt]
(-2.25,0) -- (-1.25,0) node [black,midway,yshift=.5cm]
{\footnotesize $\mathcal{J}'_0$}
;
\draw [dashed]
(-2.25,0) to (-1.25,0)
;
\draw
(-2.25,-0.1) to (-2.25,0.1)
(-1.25,-0.1) to (-1.25,0.1)
;
\draw [dashed] (-1.75, 0) ++(0, 0.8) to ++(0, .3) to ++(1.1, .75) 
;
\draw
(-0.65, 1.85) to ++(1.1, -.75) to ++(0, -.3)
;
\draw
 (-0.65,2.1) node{\footnotesize  $\Phi_{\mathcal{J}'_{0}\mathcal{J}_{0}}$}
;
\draw [dashed]
 (-0.65,1) to (-0.65,-1.5)
;
\draw
 (-0.65,-1.1) node[left]{\footnotesize Receiver}
 (-0.65,-1.1) node[right]{\footnotesize Sender}
;
\end{tikzpicture}
\caption{$C_q^3$: Chaining construction with $k=3$ copies of a quantum polar codes $P_q$}
\label{fig:three_polar_Catal}
\end{figure}
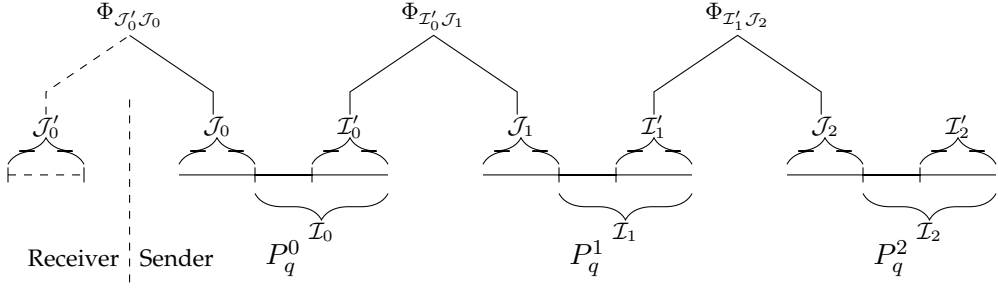

\begin{itemize}
\item[$(i)$] For system $\mathcal{J}_0$, the input quantum state before encoding is half of a maximally entangled state $\Phi_{\mathcal{J}_0\mathcal{J}'_0}$, where system $\mathcal{J}'_0$ is part of channel output. This is the only preshared entanglement between the sender and the receiver. 

\item[$(ii)$] For systems $\mathcal{I}'_{l-1}$ and $\mathcal{J}_{l}$, with $l \neq 0$, the input quantum state before encoding is a maximally entangled state $\Phi_{\mathcal{I}'_{l-1}\mathcal{J}_{l}}$. 

\item[$(iii)$] Systems $ \mathcal{I}_l \setminus \mathcal{I}'_l$, for $l \neq k - 1$, and $\mathcal{I}_{k-1}$ are information systems, meaning that the corresponding quantum state is the one that needs to be transmitted from the sender to the receiver.
\end{itemize}

It can be easily seen that the transmission (coding) rate of the proposed scheme is given by
\begin{equation*}
R := \frac{\sum_{l=0}^{k-2} |\mathcal{I}_l \setminus \mathcal{I}'_l| + |\mathcal{I}_{k-1}|}{kN}  \xrightarrow[n\rightarrow \infty]{} 
     \frac{(k-1)I(\mathcal{W}) + \frac{1+I(\mathcal{W})}{2}}{k}  \xrightarrow[k\rightarrow \infty]{} I(\mathcal{W}),
\end{equation*}
while the rate of preshared entanglement is given by
\begin{equation*}
E := \frac{|\mathcal{J}_0|}{kN} \xrightarrow[n\rightarrow \infty]{} \frac{1-I(\mathcal{W})}{2k} \xrightarrow[k\rightarrow \infty]{} 0.
\end{equation*}

\noindent \textbf{Decoding $C_q^k$:}  We shall assume that we are given an effective decoding algorithm of the quantum polar code $P_q$, capable of achieving the symmetric coherent information of the channel. We note that  this is indeed the case for Pauli channels (Section~\ref{sec:decod_pauli_channel}), but it is an open problem for general quantum channels.  In this case,  $C_q^k$ can be decoded sequentially, by decoding first $P_q^0$, then  $P_q^1$, $P_q^2$, and so on. Indeed, after decoding $P_q^0$, thus in particular correcting the  state of the $\mathcal{I}'_0$ system,  the EPR pairs $\Phi_{\mathcal{I}'_0\mathcal{J}_1}$ will play the role of the preshared entanglement required to decode $P_q^1$. Therefore, $P_q^1$ can be decoded  once $P_q^0$ has been decoded, and similarly, $P_q^l$ can be decoded after $P_q^{l - 1}$ has been decoded, for any $ l \in \{2, \dots, k - 1 \}$. 

\medskip \noindent {\bf Entanglement as a catalyst:}  Finally, the above coding scheme can be slightly modified, such that preshared entanglement  between the sender and the receiver is not consumed. In the above construction, we have considered that for the last $P_q^{k-1}$ polar code, the $\mathcal{I}'_{k-1}$ system is an information system, {\em i.e.}, used to transmit quantum information from the sender to the receiver (system $\mathcal{I}'_{2}$ in Fig.~\ref{fig:three_polar_Catal}). Let us now assume that the input quantum state to the $\mathcal{I}'_{k-1}$ system is half of a maximally entangled state $\Phi_{\mathcal{I}'_{k-1} \mathcal{J}_{k}}$, where quantum system $\mathcal{J}_{k}$ is held by the sender. When the receiver completes decoding of the $C_q^k$ code, it restores the initial state of the $\mathcal{I}'_{k-1}$, thus resulting in a maximally entangled state $\Phi_{\mathcal{I}'_{k-1} \mathcal{J}_{k}}$ shared between the sender ($\mathcal{J}_{k}$ system) and the receiver ($\mathcal{I}'_{k-1}$ system). Hence, the initial preshared entanglement $\Phi_{\mathcal{J}_0\mathcal{J}'_0}$ acts as a catalyst, in that it produces a new state $\Phi_{\mathcal{I}'_{k-1} \mathcal{J}_{k}}$ shared between the sender and the receiver, which can be used for the next transmission. 

\section{Conclusion and Perspectives}
\label{sec:conclusion}
In this paper, we demonstrated an entanglement assisted polarization phenomenon that happens at the quantum level, and which relies on a channel combining construction using randomized two-qubit Clifford gates instead of the CNOT gate. In the case of Pauli channels, we have proven that  the  quantum polarization is equivalent to a classical  polarization for an associated non-binary channel which allows us to have an efficient decoding scheme. We also proved a fast polarization property in this case. Finally, we presented a quantum polar code chaining construction, for which the required entanglement assistance is negligible with respect to the code length.

\smallskip The quantum polar code construction proposed in this paper makes an efficient use of preshared entanglement, achieving a quantum communication rate equal to half the symmetric mutual information of the quantum channel. This differentiates our construction from the CSS-based one.  In terms of net communication rate, both constructions exhibit the same asymptotic performance, achieving the symmetric coherent information of the quantum channel.  However, for finite code-lengths, we expect that our scheme may have some benefits, due to more degrees of freedom in the construction of the polar code. We expect that an appropriate choice of the Clifford unitaries in the channel combining step may increase not only the polarization speed, but also the minimum distance of the code. Increasing the minimum distance is relevant in case that list decoding \cite{tal2015list} is used, which may significantly improve the error correction capability for finite code-lengths. Finite-length specific issues ({\em e.g.}, devising optimization algorithms for code construction, or considering decoding algorithms other than successive-cancellation) may be the subject of future research.

\smallskip Besides, the quantum polarization phenomenon presented here opens new perspectives that complement or extend the classical CSS-based viewpoint, and which are discussed below. 

\smallskip 1) One of the main characteristics of our approach is to yield a family of quantum codes, whose construction does not rely exclusively on the stabilizer formalism. This invites the study of quantum decoding algorithms that might achieve the symmetric coherent information for non Pauli channels (as explained in Section~\ref{subsec:EASC}, stabilizer decoding is not optimal for non-Pauli channels). Explicit capacity-achieving decoders for non Pauli channels have been recently proposed in \cite{renes2017belief}, based on a belief-propagation algorithm that passes quantum messages and is capable of decoding the classical–quantum channel with pure state outputs. Since the successive-cancellation decoding of polar codes is essentially a belief-propagation algorithm, it would be  interesting to devise  similar approaches for the family of quantum polar codes proposed here.

\smallskip 2) Another question with potentially wide-reaching implications is related to the use of quantum polar codes in the context of fault-tolerant computing. The preshared entanglement requirement amounts to having a reliable quantum memory to store one half of each EPR pair. In this context, it would be interesting to study the impact of a ``noisy'' channel combining step on the quantum polarization.

\smallskip 3)  Finally, while this paper focused on the polarization of qubit-input channels, our construction admits a simple and natural  generalization to the case of qudit-input channels.  Indeed, the definitions of symmetric coherent information and Rényi-Bhattacharyya parameter are valid for qudit channels.  Therefore, to prove quantum polarization for qudit channels, we again need to show that all the constraints in Lemma~\ref{lemma:stochastic_proc_convergence} are satisfied. The most challenging constraint in Lemma~\ref{lemma:stochastic_proc_convergence} is point (t.2),  which in the qubit case follows from Lemma~\ref{lem:d-good-channel}. It is not too difficult to see that Lemma~\ref{lem:d-good-channel} can be generalized to the qudit case with a few adjustments depending on the system dimension, reusing again the computations in \cite{fred-these}, and given that the channel combining operation is randomly chosen from a unitary 2-design.


\clearpage
\appendix

\section{Proof of Lemma~\ref{lemma:equiv_conditions}}
\label{sec:proof-quantum-red-clifford}

Throughout this section, $\mathcal{N} := \mathcal{N}_{A_1' \to B_1}$ and $\mathcal{M} := \mathcal{M}_{A_2' \to B_2}$ denote two quantum channels with input qubit systems $A_1'$ and $A_2'$. We denote by $\mathcal{N}_{A_1' \to E_1}^c$ and $\mathcal{M}_{A_2' \to E_2}^c$ their complementary channels, and by $\Phi_{A_1 A_1'}$ and $\Phi_{A_2' A_2}$ two EPR pairs on $2$-qubit systems $A_1A_1'$ and $A_2' A_2$. First, we will need the following lemma, which basically states that taking the complementary of the bad/good channels inverts the bad and good channel constructions, as well as the input system on which we input the half of the EPR pair. 
\begin{lemma} \label{lem:complementary_gd_bad}
Let $C := C_{A'_1 A'_2}$ be a $2$-qubit Clifford unitary acting on the $A'_1 A'_2$ system.  Then the bad channel's complementary  $(\mathcal{N} \boxast_C \mathcal{M})^c$ is a channel from $A_1'$ to $E_1E_2A_2$, the good channel's complementary  $(\mathcal{N} \varoast_C \mathcal{M})^c$ is a channel from $A_2'$ to $E_1E_2$, and the following equalities hold:
%
\begin{itemize}
\item[$(a)$] $(\mathcal{N} \boxast_C \mathcal{M})_{A_1' \to E_1E_2A_2}^c(\rho_{A_1'}) = \mathcal{N}_{A_1' \to E_1}^c \otimes \mathcal{M}_{A_2' \to E_2}^c \left(C \left(\rho_{ A_1'} \otimes \Phi_{A_2' A_2}\right)C^\dagger \right)$,
\item[$(b)$] $(\mathcal{N} \varoast_C \mathcal{M})_{A_2' \to E_1E_2}^c(\rho_{A_2'})= \mathcal{N}_{A_1' \to E_1}^c \otimes \mathcal{M}_{A_2'\to E_2}^c \left(C \left(\tfrac{\ident_{A_1'}}{2} \otimes \rho_{A_2'} \right)C^\dagger \right)$.
\end{itemize}
\end{lemma}

\begin{proof}
Let $U_{A_1'\to B_1E_1}$ and $V_{A_2'\to B_2E_2}$ be Stinespring dilations of $\mathcal{N}_{A_1'\to B_1}$ and $\mathcal{M}_{A_2'\to B_2}$, respectively. Define isometries $W_{A_1'\to B_1B_2E_1E_2A_2}$ and $W_{A_2'\to B_1B_2E_1E_2 A_1}'$, such that
\begin{align*} 
W (\rho_{A_1'}) W^\dagger  &\eqdef U \otimes V \left(C \left( \rho_{A_1'} \otimes \Phi_{A_2'A_2} \right)C^\dagger \right) U^\dagger \otimes V^\dagger, \\
W' (\rho_{A_2'}) W'^\dagger  &\eqdef U \otimes V \left(C \left( \Phi_{A_1A_1'} \otimes \rho_{A_2'} \right)C^\dagger \right) U^\dagger \otimes V^\dagger.
\end{align*}
It is easy to see that 
\begin{align*}
\tr_{E_1E_2A_2} (W (\rho_{A_1'}) W^\dagger) &= \mathcal{N} \otimes \mathcal{M} \left(C \left(\rho_{A_1'} \otimes \tfrac{\ident_{A_2'}}{2} \right)C^\dagger \right)  = (\mathcal{N} \boxast_C \mathcal{M})(\rho_{A_1'}),\\
\tr_{E_1E_2} (W' (\rho_{A_2'}) W'^\dagger) &= \mathcal{N} \otimes \mathcal{M} \left(C \left(\Phi_{A_1A_1'} \otimes \rho_{A_2'} \right)C^\dagger \right) = (\mathcal{N} \varoast_C \mathcal{M})(\rho_{A_2'}).
\end{align*} 
 Thus, isometries $W_{A_1'\to B_1B_2E_1E_2A_2}$ and $W_{A_2'\to B_1B_2E_1E_2A_1}'$ are  Stinespring dilations of $(\mathcal{N} \boxast_C \mathcal{M})$ and $(\mathcal{N} \varoast_C \mathcal{M})$, respectively (see also~\cite[Theorem 1]{holevo07}). Therefore, by tracing out the channel outputs of  $(\mathcal{N} \boxast_C \mathcal{M})$ and  $(\mathcal{N} \varoast_C \mathcal{M})$, we get their respective complementary channels:
\begin{align*}
(\mathcal{N} \boxast_C \mathcal{M})^c(\rho_{A_1'}) &= \tr_{B_1B_2} (W (\rho_{A_1'}) W^\dagger) = \mathcal{N}^c \otimes \mathcal{M}^c \left(C \left(\rho_{A_1'} \otimes \Phi_{ A_2'A_2}\right)C^\dagger \right), \\
(\mathcal{N} \varoast_C \mathcal{M})^c(\rho_{A_2'}) &= \tr_{A_1B_1B_2} (W' (\rho_{A_2'}) W'^\dagger) = \mathcal{N}^c \otimes \mathcal{M}^c \left(C \left(\tfrac{\ident_{A_1'}}{2} \otimes \rho_{A_2'}\right)C^\dagger\right). 
\end{align*}
This concludes the proof of the lemma.
\end{proof}

To prove Lemma~\ref{lemma:equiv_conditions}, we need to show that two equivalent Clifford unitaries $C'  \sim C'' \in \mathcal{C}_2$ yield bad/good quantum channels with identical Rényi-Bhattacharyya parameter, when used for combining (then splitting) two quantum channels $\mathcal{N}$ and $\mathcal{M}$. Here, we first show this for the bad channel, and then for the good channel.

\subsection{Bad Channel} \label{subsec:red_cliff_bad}

\noindent We prove first the following lemma.
\begin{lemma}\label{prop:equi_cliff_bad}
Given $C_{A_1'A_2'}'' = C_{A_1'A_2'}' (C_{A_1'}^1 \otimes C_{A_2'}^2)$, let 
\begin{equation*}
\rho' \eqdef (\mathcal{N} \boxast_{C'} \mathcal{M})^c_{A_1' \to E_1E_2A_2}(\Phi_{A_1 A_1'}) 
\ \ \ \text{ and } \ \ \ 
\rho'' \eqdef  (\mathcal{N} \boxast_{C''} \mathcal{M})^c_{A_1' \to E_1E_2A_2}(\Phi_{A_1 A_1'}).
\end{equation*}
  Then,
 \begin{equation}
\rho'' = C_{A_1}^{1^{\top}} \otimes C_{A_2}^{2^{\top}} (\rho' ) \bar{C}^1_{A_1} \otimes \bar{C}^2_{A_2}. \nonumber
\end{equation}
\end{lemma}

\begin{proof}
We have:
\begin{align*}
\rho'' &=  \mathcal{N}_{A_1' \rightarrow E_1}^c \otimes \mathcal{M}_{A_2' \rightarrow E_2}^c \left( C_{A_1'A_2'}' (C_{A_1'}^1 \otimes C_{A_2'}^2) \left(\Phi_{A_1A_1'} \otimes \Phi_{A_2' A_2} \right) ({C_{A_1'}^1}^\dagger \otimes {C_{A_2'}^2}^\dagger ) C_{A_1'A_2'}'^\dagger \right) \\
&=  \mathcal{N}_{A_1' \rightarrow E_1}^c \otimes \mathcal{M}_{A_2' \rightarrow E_2}^c \left( C_{A_1'A_2'}' (C_{A_1}^{1^\top}\otimes C_{A_2}^{2^\top}) \left(\Phi_{A_1A_1'} \otimes \Phi_{A_2 'A_2}\right) (\bar{C}_{A_1}^1 \otimes \bar{C}_{A_2}^{2}) C_{A_1'A_2'}'^\dagger \right)   \\
&=  C_{A_1}^{1^\top}\otimes C_{A_2}^{2^\top} \left(\mathcal{N}_{A_1' \rightarrow E_1}^c \otimes \mathcal{M}_{A_2' \rightarrow E_2}^c \left( C_{A_1'A_2'}'  \left(\Phi_{A_1A_1'} \otimes  \Phi_{A_2 'A_2}\right) C_{A_1'A_2'}'^\dagger \right) \right)\bar{C}_{A_1}^1 \otimes \bar{C}_{A_2}^{2}  \\
&=  C_{A_1}^{1^\top} \otimes C_{A_2}^{2^\top} (\rho') \bar{C}_{A_1}^1 \otimes \bar{C}_{A_2}^{2}, 
\end{align*}
where the first equality follows from part (a) of Lemma~\ref{lem:complementary_gd_bad} and second equality follows from the relation $(\ident \otimes Z) \ket{\Phi} = (Z^\top \otimes \ident) \ket{\Phi}$, for any matrix $Z$.
\end{proof}

\medskip \noindent 
\textbf{Proof of $R( \mathcal{N} \boxast_{C''} \mathcal{M} )  =  R( \mathcal{N} \boxast_{C'} \mathcal{M} )$.} By the definition of the Rényi-Bhattacharyya parameter, see~(\ref{def:renyi_bhat}), we have that
\begin{equation*}
R( \mathcal{N} \boxast_{C''} \mathcal{M} ) = 2^{-\tilde{H}_2^{\downarrow}(A_1|E_1E_2A_2)_{\rho''}},
\end{equation*}
where $ \htwo(A|B)_\rho = -\Tilde{D}_2(\rho_{AB} || I \otimes \rho_B ),$ with $ \Tilde{D}_2(\rho || \sigma)  = \log\tr\left[(\sigma^{-\frac{1}{4}} \rho \sigma^{-\frac{1}{4}})^2\right]$~\cite{mdsft13, tbh14}. We have the following  unitary equivalence for $\Tilde{D}_2(\rho || \sigma)$,
\begin{equation} \label{eq:uni_equiv}
  \Tilde{D}_2(\rho || \sigma) = \Tilde{D}_2(U \rho U^\dagger || U \sigma U^\dagger ).
 \end{equation}
Hence,
\begin{align*}
\htwo(A_1|E_1E_2A_2)_{\rho''}  &= - \Tilde{D}_2(\rho''|| \ident \otimes \text{Tr}_{A_1}(\rho''))\\
&= - \Tilde{D}_2(C_{A_1}^{1^\top} \otimes C_{A_2}^{2^\top} (\rho') \bar{C}_{A_1}^1 \otimes \bar{C}_{A_2}^{2} || \ident \otimes (C_{A_2}^{2^\top} \text{Tr}_{A_1}(\rho') \bar{C}_{A_2}^{2}) )\\
&= -  \Tilde{D}_2(\rho' || \ident \otimes \text{Tr}_{A_1}(\rho') ) \\
&=  \htwo(A_1|E_1E_2A_2)_{\rho'}, 
\end{align*}
where the second equality follows from Lemma~\ref{prop:equi_cliff_bad} and $\tr_{A_1}(\rho'') = C_{A_2}^{2^\top} \tr_{A_1}(\rho') \bar{C}_{A_2}^{2}$, and the third equality follows from the unitary equivalence above. Therefore,  we get $R( \mathcal{N} \boxast_{C''} \mathcal{M} ) =  R( \mathcal{N} \boxast_{C'} \mathcal{M} )$, as desired.
\ \hfill $\blacksquare$

\subsection{Good Channel} \label{subsec:red_cliff_good}

\begin{lemma} \label{prop:equi_cliff_good}   
Given $C_{A_1'A_2'}'' = C_{A_1'A_2'}' (C_{A_1'}^1 \otimes C_{A_2'}^2)$, let  
\begin{equation*}  
\rho'  \eqdef (\mathcal{N} \varoast_{C'} \mathcal{M})^c_{A_2' \to E_1E_2} (\Phi_{A_2 A_2'})
\ \ \ \text{ and } \ \ \  
\rho'' \eqdef (\mathcal{N} \varoast_{C''} \mathcal{M})^c_{A_2' \to E_1E_2}(\Phi_{A_2 A_2'}). 
\end{equation*} 
Then
 $$ \rho'' = C_{A_2}^{2^\top} \rho' \bar{C}_{A_2}^{2}.$$  
\end{lemma}

\begin{proof} 
Similar to the proof of Lemma~\ref{prop:equi_cliff_bad}.
\end{proof}

\medskip \noindent 
\textbf{Proof of $R( \mathcal{N} \varoast_{C''} \mathcal{M} )  =  R( \mathcal{N} \varoast_{C'} \mathcal{M} )$.} 
Using Lemma~\ref{prop:equi_cliff_good}, it can be proved similar to the proof for bad channel in Subsection~\ref{subsec:red_cliff_bad} that $R( \mathcal{N} \varoast_{C''} \mathcal{M} )  =  R( \mathcal{N} \varoast_{C'} \mathcal{M} )$.
\hfill $\blacksquare$

\section{Proof of Lemma~\ref{lem:nclassical-well-defined}}\label{sec:proof-nclassical-well-defined}

We have to prove that if ${\cal N}'$ and ${\cal N}''$ are CMP channels, such that 
\begin{equation}\label{eq:identifiable_channels}
{\cal N}'(\rho) \otimes \frac{I_X}{|X|} = C \left({\cal N}''(\rho) \otimes \frac{I_X}{|X|}\right)C^\dagger,
\end{equation} for some unitary $C$, then $\left({\cal N}'\right)^\# \equiv \left({\cal N}''\right)^\#$. We restrict ourselves to the case when ${\cal N}'$ and ${\cal N}''$ are Pauli channels, since the case of CMP channels follows in a similar manner, by introducing an auxiliary system providing a classical description of the Pauli channel being used. Hence, we may write ${\cal N}'(\rho) = \sum_{i=0}^3 p'_i \sigma_i \rho \sigma_i^{\dagger}$ and ${\cal N}''(\rho) = \sum_{i=0}^3 p''_i \sigma_i \rho \sigma_i^{\dagger}$, with $\sum_{i=0}^{3}p'_i = \sum_{i=0}^{3}p''_i = 1$. It follows that ${\cal N}'(\sigma_k) = \alpha'_k \sigma_k$ and ${\cal N}''(\sigma_k) = \alpha''_k \sigma_k$,  where $\alpha'_0 = \alpha''_0 = 1$,  and for $k=1,2,3$, $\alpha'_k = p'_0+p'_k - p'_{k_1} - p'_{k_2}$, $\alpha''_k = p''_0+p''_k - p''_{k_1} - p''_{k_2}$, with $\{k_1, k_2\} = \{1,2,3\}\setminus \{k\}$. Using bold notation for vectors $\bm{p}' \eqdef (p'_0,p'_1, p'_2,p'_3)$, and similarly  $\bm{p}'', \bm{\alpha}', \bm{\alpha}''$, the above equalities rewrite as
\begin{equation}\label{eq:a-matrix}
\bm{\alpha}' = A \bm{p}' \mbox{ and } \bm{\alpha}'' = A \bm{p}'', 
\end{equation}
$$\text{ where } A \eqdef \left(\begin{array}{rrrr}
  1  &  1  &  1  &  1 \\
  1  &  1  & -1  & -1 \\
  1  & -1  &  1  & -1 \\
  1  & -1  & -1  &  1 
\end{array}\right)$$
Now, replacing $\rho$ by $\sigma_k$ in (\ref{eq:identifiable_channels}), we have that 
\begin{equation}
\alpha'_k \sigma_k \otimes I_X = C \left( \alpha''_k \sigma_k \otimes I_X \right) C^\dagger.
\end{equation}
Since the conjugate action of the unitary $C$ preserves the Hilbert–Schmidt norm of an operator, it follows that 
$\left\| \alpha'_k \sigma_k \otimes I_X \right\|_{\text{HS}} = \left\|  \alpha''_k \sigma_k \otimes I_X \right\|_{\text{HS}}$, and therefore $|\alpha'_k| = |\alpha''_k|$.

\medskip \noindent {\bf Case 1:} We first assume that $\alpha'_k = \alpha''_k, \forall k=1,2,3$. In this case, using (\ref{eq:a-matrix}), it follows that $\bm{p}' = \bm{p}''$, and therefore $\left({\cal N}'\right)^\# = \left({\cal N}''\right)^\#$.

\medskip \noindent {\bf Case 2:} We consider now the case when $\alpha'_k \neq \alpha''_k$, for some $k=1,2,3$. To address this case, we start by writing $C = \sum_{i=0}^3 \sigma_i \otimes C_i$, where $C_i$ are linear operators on the system $X$. Hence, (\ref{eq:identifiable_channels}) rewrites as 
\begin{equation}
{\cal N}'(\rho) \otimes \frac{I_X}{|X|} = \sum_{i,j}\left(\sigma_i {\cal N}''(\rho)\sigma_j^\dagger\right) \otimes \frac{C_i C_j^\dagger}{|X|}. 
\end{equation}
Tracing out the $X$ system, we have 
\begin{equation}\label{eq:nprime}
{\cal N}'(\rho) = \sum_{i,j} \gamma_{i,j} \sigma_i {\cal N}''(\rho)\sigma_j^\dagger, \ \ \mbox{ where } \gamma_{i,j} = \frac{\tr(C_i C_j^\dagger)}{|X|}. 
\end{equation}
We define $\gamma_i \eqdef \gamma_{i,i}$, and from~(\ref{eq:nprime}) it follows that $\gamma_i \eqdef \gamma_{i,i} \in \mathbb{R}_+$. Replacing $\rho = \sigma_k$ in (\ref{eq:nprime}), we have that for all $k=0,\dots,3$,
\begin{equation}
\alpha'_k \sigma_k  = \alpha''_k\sum_{i }\gamma_{i}\sigma_i \sigma_k \sigma_i^\dagger + \alpha''_k\sum_{i,j, i \neq j}\gamma_{i,j}\sigma_i \sigma_k \sigma_j^\dagger.
\end{equation}
The left hand side of the above equation has only $\sigma_k$ term, so only $\sigma_k$ on the right hand side should survive as Pauli matrices form an orthogonal basis. It follows that either $\alpha'_k = \alpha''_k = 0$, or the terms of the second sum in the right hand side of the above equation necessarily cancel each other. In both cases, we have that
 \begin{align}
 \alpha'_k \sigma_k   = &\ \alpha''_k\sum_{i }\gamma_{i}\sigma_i \sigma_k \sigma_i^\dagger  = \alpha''_k \lambda_k \sigma_k, \\
 \mbox{and thus, }  \qquad \alpha'_k  = &\ \lambda_k\alpha''_k, \label{eq:lambda}\\
 \mbox{where, }  \qquad \lambda_0 \eqdef &\ \gamma_0 + \gamma_1 + \gamma_2 + \gamma_3 \label{eq:lambda_0}\\
 \lambda_1 \eqdef &\ \gamma_0 + \gamma_1 - \gamma_2 - \gamma_3 \label{eq:lambda_1}\\
 \lambda_2 \eqdef &\ \gamma_0 - \gamma_1 + \gamma_2 - \gamma_3 \label{eq:lambda_2}\\
 \lambda_3 \eqdef &\ \gamma_0 - \gamma_1 - \gamma_2 + \gamma_3 \label{eq:lambda_3}
 \end{align}
We also note that $\lambda_0 = 1$, since $\alpha'_0 = \alpha''_0 = 1$. We further rewrite (\ref{eq:lambda}) as
\begin{equation}\label{eq:lambda-rewrite}
\bm{\alpha}' = \Lambda \bm{\alpha}''.
\end{equation}
where $\Lambda = \mbox{diag}(\lambda_0, \lambda_1, \lambda_2, \lambda_3)$ is the square diagonal matrix with $\lambda_i$'s on the main diagonal. 
Plugging (\ref{eq:a-matrix}) into (\ref{eq:lambda-rewrite}), and using $A^2 = 4I$, we get
\begin{equation}
\bm{p}' = \frac{1}{4}A \Lambda A \bm{p}''  = \Gamma \bm{p}'', \label{eq:Gamma_matrix} 
\end{equation}
$$\text{ where } \Gamma \eqdef \frac{1}{4}A \Lambda A = 
\left( \begin{array}{rrrr}
\gamma_0  & \gamma_1  & \gamma_2  & \gamma_3  \\
\gamma_1  & \gamma_0  & \gamma_3  & \gamma_2  \\
\gamma_2  & \gamma_3  & \gamma_0  & \gamma_1  \\
\gamma_3  & \gamma_2  & \gamma_1  & \gamma_0  
\end{array}\right) $$

 We now come back to our assumption, namely $\alpha'_k \neq \alpha''_k$, for some $k=1,2,3$. Without loss of generality, we may assume that $\alpha'_1 \neq \alpha''_1$. Since $|\alpha'_1| = |\alpha''_1|$ and $\alpha'_1 = \lambda_1 \alpha''_1$, it follows that $\lambda_1 = -1$. Then, using~(\ref{eq:lambda_0}) and (\ref{eq:lambda_1}), we have that $2(\gamma_0 + \gamma_1) = \lambda_0 + \lambda_1 = 0$, which implies 
\begin{equation}\label{eq:zero-gammas}
\gamma_0 = \gamma_1 = 0,
\end{equation} since they are non-negative.  
We proceed now with several sub-cases:

\smallskip
\hangindent=\parindent
\noindent {\em Case 2.1:} either $\alpha'_2 \neq \alpha''_2$ or $\alpha'_3 \neq \alpha''_3$. Similarly to the derivation of (\ref{eq:zero-gammas}), we get either $\gamma_2 = 0$ (in which case  $\gamma_3 = 1$) or $\gamma_3 = 0$ (in which case  $\gamma_2 = 1$). In either case $\Lambda$ is a permutation matrix, which implies that $\left({\cal N}'\right)^\# \equiv \left({\cal N}''\right)^\#$, as desired.

\smallskip
\hangindent=\parindent
\noindent {\em Case 2.2:} $\alpha'_2 = \alpha''_2$ and $\alpha'_3 = \alpha''_3$, and either $\alpha'_2 = \alpha''_2 \neq 0$ or $\alpha'_3 = \alpha''_3 \neq 0$. Let us assume that $\alpha'_2 = \alpha''_2 \neq 0$. In this case, using~(\ref{eq:lambda}), we have that $\lambda_2 = 1$, and from~(\ref{eq:lambda_2}) it follows that $\gamma_2 - \gamma_3 = 1$. This implies  $\gamma_2 = 1$ and  $\gamma_3 = 0$, therefore $\Lambda$ is a permutation matrix, and thus $\left({\cal N}'\right)^\# \equiv \left({\cal N}''\right)^\#$, as desired.

\smallskip
\hangindent=\parindent
\noindent {\em Case 2.3:} $\alpha'_2 = \alpha''_2 = 0$ and $\alpha'_3 = \alpha''_3 = 0$. Using $\alpha'_k = 2(p'_0 + p'_k) - 1, \forall k\neq 0$, we get $p'_2 = p'_3 = \frac{1}{2} - p'_0$, and similarly $p''_2 = p''_3 = \frac{1}{2} - p''_0$. Moreover, using (\ref{eq:Gamma_matrix}) and the fact that $\gamma_2 + \gamma_3 = 1$, we get $p'_0 = p'_1 =  p''_2 = p''_3$ and $p'_2 = p'_3 =  p''_0 = p''_1$. This implies that $\left({\cal N}'\right)^\# \equiv \left({\cal N}''\right)^\#$, as desired.

\smallskip \noindent This concludes the second case, and finishes the proof.
\hfill $\qed$

\section{Proof of Proposition~\ref{prop:cq_equiv}}\label{sec:proof_cq_equiv}

Using the notation from Section~\ref{subsec:classical_combining_splitting}, we shall identify $\left(\bar{P}_1, \times\right) \cong \left(\{0,1,2,3\}, \oplus\right)$, where $\sigma_i~\cong~i$,  $\forall i =0,\dots,3$, and thus assume that the classical channel ${\cal N}^\#$ -- associated with a Pauli channel ${\cal N}(\rho) = \sum_{i=0}^3 p_i \sigma_i \rho \sigma_i^{\dagger}$ -- has input and output alphabet $\{0,1,2,3\}$, with transition probabilities defined by ${\cal N}^\#(i \mid j) = p_{i\oplus j}$. Moreover, the automorphism $\Gamma = \Gamma(C)$ induced by the conjugate action of a two-qubit Clifford  unitary $C$ on $\bar{P}_1 \times \bar{P}_1$, is identified to a linear permutation $\Gamma: \{0,1,2,3\}^2 \rightarrow  \{0,1,2,3\}^2$, such that $C\sigma_{i,j}C^\dagger = \sigma_{\Gamma(i,j)}$.  We shall also write $\Gamma = (\Gamma_1, \Gamma_2)$, with $\Gamma_i : \{0,1,2,3\}^2 \rightarrow \{0,1,2,3\}$,  $i=1,2$.

\medskip It can be easily seen that it is enough to prove the statement of Proposition~\ref{prop:cq_equiv} for the case when ${\cal N}$ and ${\cal M}$ are Pauli channels.  Let 
${\cal N}(\rho) = \sum_{i=0}^3 p_i \sigma_i \rho \sigma_i^{\dagger}$ and ${\cal M}(\rho) = \sum_{j=0}^3 q_j \sigma_j \rho \sigma_j^{\dagger}$.

\medskip We start by proving $(i)$.
\begin{align*}
({\cal N}\boxast{\cal M})(\rho_U) &=  ({\cal N}\otimes{\cal M})\left( C \left( \rho_U \otimes \frac{I_V}{2} \right) C^\dagger \right) \\
&= \sum_{i,j} p_i q_j \sigma_{i,j} C \left( \rho_U \otimes \frac{I_V}{2} \right) C^\dagger \sigma_{i,j}^\dagger \\
&= \sum_{i,j}r_{i,j}  C \sigma_{\Gamma^{-1}(i,j)} \left( \rho_U \otimes \frac{I_V}{2} \right) \sigma_{\Gamma^{-1}(i,j)}^\dagger C^\dagger, \mbox{ where } \ r_{i,j} \eqdef p_i q_j\\
&=  C \left( \sum_{i,j}r_{\Gamma(i,j)}  \sigma_{i,j} \left( \rho_U \otimes \frac{I_V}{2} \right) \sigma_{i,j}^\dagger \right) C^\dagger \\ 
&=  C \left( \sum_{i,j}r_{\Gamma(i,j)}  \sigma_{i,j} \left( \rho_U \otimes \frac{I_V}{2} \right) \sigma_{i,j}^\dagger \right) C^\dagger\\
&=  C \left( \sum_{i,j}r_{\Gamma(i,j)}  \sigma_{i} \rho_U \sigma_{i}^\dagger \otimes \frac{I_V}{2} \right)   C^\dagger \\
&=  C \left( \sum_{i} s_i\,  \sigma_{i} \rho_U \sigma_{i}^\dagger \otimes \frac{I_V}{2} \right)   C^\dagger, \mbox{ where } s_i \eqdef \sum_j r_{\Gamma(i,j)} 
\end{align*}
where the fourth equality follows from the variable change $(i,j) \mapsto \Gamma(i,j)$. 
Omitting the conjugate action of the unitary $C$ and discarding the $V$ system, we may further identify:
\begin{equation*}
({\cal N}\boxast{\cal M})(\rho_U) = \sum_{i} s_i  \sigma_{i} \rho_U \sigma_{i}^\dagger.
\end{equation*}
Hence, the associated classical channel $({\cal N}\boxast{\cal M})^\#$ is defined by the probability vector $\mathbf{s} = (s_0,s_1,s_2,s_3)$, meaning that 
\begin{equation}\label{eq:quantum_bad_channel}
({\cal N}\boxast{\cal M})^\#(i \mid j) = s_{i\oplus j}. 
\end{equation}

On the other hand, we have:
\begin{align*}
({\cal N}^\#\boxast{\cal M}^\#)(a, b \mid u) 
&=  \frac{1}{4} \sum_v {\cal N}^\#(a \mid {\Gamma_1(u,v)}) {\cal M}^\#(b \mid {\Gamma_2(u,v)}) \\
&=  \frac{1}{4} \sum_v p_{a \oplus \Gamma_1(u,v)} q_{b \oplus \Gamma_2(u,v)}.
\end{align*}
Applying $\Gamma^{-1}$ on the channel output, we may identify ${\cal N}^\#\boxast{\cal M}^\#$ to a channel with output
$(a',b') = \Gamma^{-1}(a,b)$,  and transition probabilities given by:
\begin{align*}
({\cal N}^\#\boxast{\cal M}^\#)(a', b' \mid u) 
&= \frac{1}{4} \sum_v p_{\Gamma_1(a',b') \oplus \Gamma_1(u,v)} q_{\Gamma_2(a',b') \oplus \Gamma_2(u,v)} \\
&= \frac{1}{4} \sum_v p_{\Gamma_1((a',b') \oplus (u,v))} q_{\Gamma_2((a',b') \oplus (u,v))} \\
&= \frac{1}{4} \sum_v p_{\Gamma_1(a'\oplus u, b'\oplus v)} q_{\Gamma_2(a' \oplus u, b' \oplus v)} \\
&= \frac{1}{4}  \sum_v p_{\Gamma_1(a'\oplus u, v)} q_{\Gamma_2(a' \oplus u, v)} \\
&= \frac{1}{4}  \sum_v r_{\Gamma(a'\oplus u, v)} \\
&= \frac{1}{4}  s_{a'\oplus u}.
\end{align*}
We can then discard the $b'$ output, since the channel transition probabilities do not depend on it, which gives a channel defined by transition probabilities:
\begin{equation}\label{eq:classical_bad_channel}
({\cal N}^\#\boxast{\cal M}^\#)(a' \mid u) = s_{a'\oplus u}.
\end{equation} 
Finally, using (\ref{eq:quantum_bad_channel}) and~(\ref{eq:classical_bad_channel}), and noticing that omitting the conjugate action of the unitary $C$ and discarding the $V$ system in the derivation of (\ref{eq:quantum_bad_channel}) is equivalent to applying $\Gamma^{-1}$ on the channel output and discarding the $b'$  output in the derivation of (\ref{eq:classical_bad_channel}), we conclude that $({\cal N}\boxast{\cal M})^\# \equiv {\cal N}^\#\boxast{\cal M}^\#$

\medskip We prove now the $(ii)$ statement. Similar to the derivations used for $(i)$, we get:
\begin{align}
({\cal N}\varoast{\cal M})(\rho_V) &=   C \left( \sum_{i,j}r_{\Gamma(i,j)}  \sigma_{i,j} \left( \Phi_{U'U} \otimes \rho_V \right) \sigma_{i,j}^\dagger \right) C^\dagger \nonumber\\
&=  C \left( \sum_{i,j}r_{\Gamma(i,j)}  \left( (I_{U'}\otimes \sigma_i)(\Phi_{U'U})(I_{U'}\otimes \sigma_i^\dagger)\right) \otimes (\sigma_j \rho_V \sigma_j^\dagger) \right) C^\dagger \label{eq:xxx}
\end{align}
Omitting the conjugate action of the unitary $C$, and expressing $(I_{U'}\otimes \sigma_i)(\Phi_{U'U})(I_{U'}\otimes \sigma_i^\dagger)$ in the Bell basis, 
$\{\ket{i}\}_{i=0,\dots,3} \eqdef \{ \frac{\ket{00}+\ket{11}}{\sqrt{2}}, \frac{\ket{01}+\ket{10}}{\sqrt{2}}, \frac{\ket{01}-\ket{10}}{\sqrt{2}}, \frac{\ket{00}-\ket{11}}{\sqrt{2}} \}$, we get:
\begin{equation*}
({\cal N}\varoast{\cal M})(\rho_V) =  \sum_{i,j}r_{\Gamma(i,j)}  \ket{i}\bra{i} \otimes (\sigma_j \rho_V \sigma_j^\dagger).
\end{equation*}
Let $\lambda_i \eqdef \sum_j r_{\Gamma(i,j)}$ and $s_{i,j} \eqdef r_{\Gamma(i,j)} / \lambda_i$ (with $s_{i,j} \eqdef 0$ if $\lambda_i=0$). Denoting by ${\cal S}_i$ the Pauli channel defined by ${\cal S}(\rho)_i = \sum_j s_{i,j} \sigma_j \rho_V \sigma_j^\dagger$, we may rewrite:
\begin{equation*}
({\cal N}\varoast{\cal M})(\rho_V) =  \sum_{i} \lambda_{i}  \ket{i}\bra{i} \otimes {\cal S}_i(\rho_V). 
\end{equation*}
Hence, $({\cal N}\varoast{\cal M})^\#$ is the mixture of the channels ${\cal S}_i^\#$, with ${\cal S}_i^\#$ being used with probability $\lambda_i$, whose transition probabilities are given by:
\begin{equation}\label{eq:quantum_good_channel}
({\cal N}\varoast{\cal M})^\#(i, j\mid k) = \lambda_i s_{i, j \oplus k} = r_{\Gamma(i, j\oplus k)}.
\end{equation}

On the other hand, we have:
\begin{align*}
({\cal N}^\#\varoast{\cal M}^\#)(a, b, u \mid v) 
&=  \frac{1}{4}  {\cal N}^\#(a \mid {\Gamma_1(u,v)}) {\cal M}^\#(b \mid {\Gamma_2(u,v)})\\
&=  \frac{1}{4}  p_{a \oplus \Gamma_1(u,v)} q_{b \oplus \Gamma_2(u,v)}.
\end{align*}
We apply $\Gamma^{-1}$ on the $(a,b)$ output of the channel, which is equivalent to omitting the conjugate action of the unitary $C$ in (\ref{eq:xxx}), and then identify ${\cal N}^\#\varoast{\cal M}^\#$ to a channel with output
$(a',b', u)$, where $(a',b') = \Gamma^{-1}(a,b)$,  and transition probabilities:
\begin{align*}
({\cal N}^\#\varoast{\cal M}^\#)(a', b', u \mid v) 
&= \frac{1}{4}  p_{\Gamma_1(a',b') \oplus \Gamma_1(u,v)} q_{\Gamma_2(a',b') \oplus \Gamma_2(u,v)} \\
&= \frac{1}{4}  p_{\Gamma_1(a'\oplus u, b'\oplus v)} q_{\Gamma_2(a' \oplus u, b' \oplus v)} \\
&=  \frac{1}{4}  r_{\Gamma(a'\oplus u, b'\oplus v)}.
\end{align*}
We further perform a change of variable, replacing $(a', u)$ by $(a'\oplus u, u)$, which makes the above transition probability independent of $u$. We may then discard the $u$ output, and thus identify  ${\cal N}^\#\varoast{\cal M}^\#$ to a channel with output
$(a',b')$  and transition probabilities:
\begin{equation}\label{eq:classical_good_channel}
({\cal N}^\#\varoast{\cal M}^\#)(a', b' \mid v) =   r_{\Gamma(a', b'\oplus v)} .
\end{equation}
Finally, using (\ref{eq:quantum_good_channel}) and~(\ref{eq:classical_good_channel}), we conclude that $({\cal N}\varoast{\cal M})^\# \equiv {\cal N}^\#\varoast{\cal M}^\#$.
\ \hfill $\qed$

\section{Proof of Lemma~\ref{lem:z-good-classical-channel}} \label{apnd:pol9}

We prove first the following lemma.
\begin{lemma}\label{lemma:Zd_good_channels}
For any classical channels $N, M$, with input alphabet $\bar{P}_1 \cong \left( \{0,1,2,3\}, \oplus \right)$, and any linear permutation $\Gamma  = (A, B) : \bar{P}_1\times\bar{P}_1 \rightarrow \bar{P}_1\times\bar{P}_1$, the following equality holds for any $d \in \bar{P}_1$:
\begin{align*}
Z_d(N \varoast_\Gamma M) &= Z_{A(0,d)}(N) Z_{B(0,d)}(M).
\end{align*}
\end{lemma}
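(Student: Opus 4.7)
The plan is to unfold the definitions, use linearity of $\Gamma = (A,B)$ to split the square roots, and then exploit the fact that $\Gamma$ is a bijection on $\bar{P}_1 \times \bar{P}_1$ to change variables and factor the double sum.

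First I would expand $Z_d(N \varoast_\Gamma M) = \frac{1}{4}\sum_v Z\!\left((N \varoast_\Gamma M)_{v, v\oplus d}\right)$ using the definition of the good classical channel:
\[
(N \varoast_\Gamma M)(a,b,u \mid v) = \tfrac{1}{4}\, N(a \mid A(u,v))\, M(b \mid B(u,v)).
\]
Substituting into the Bhattacharyya pairwise quantity gives
\[
Z\!\left((N \varoast_\Gamma M)_{v, v\oplus d}\right) = \tfrac{1}{4}\sum_{a,b,u} \sqrt{N(a\mid A(u,v))\,N(a\mid A(u,v\oplus d))}\,\sqrt{M(b\mid B(u,v))\,M(b\mid B(u,v\oplus d))},
\]
where I used that the normalization constants $\tfrac{1}{4}$ from the good-channel definition combine to give $\tfrac{1}{4}$ outside the square root.

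Next I would invoke the linearity of $A$ and $B$ to rewrite $A(u,v\oplus d) = A(u,v)\oplus A(0,d)$ and $B(u,v\oplus d) = B(u,v)\oplus B(0,d)$. The inner sum over $a$ then becomes exactly $Z(N_{A(u,v),\,A(u,v)\oplus A(0,d)})$, and similarly for the $b$-sum. This yields
\[
Z\!\left((N \varoast_\Gamma M)_{v, v\oplus d}\right) = \tfrac{1}{4}\sum_{u} Z(N_{A(u,v),\,A(u,v)\oplus A(0,d)})\, Z(M_{B(u,v),\,B(u,v)\oplus B(0,d)}).
\]

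Finally, averaging over $v$ and performing the change of variables $(x,y) := (A(u,v), B(u,v))$, which is a bijection on $\bar{P}_1 \times \bar{P}_1$ because $\Gamma$ is a permutation, the double sum decouples:
\[
Z_d(N \varoast_\Gamma M) = \tfrac{1}{16}\sum_{x,y} Z(N_{x,\,x\oplus A(0,d)})\, Z(M_{y,\,y\oplus B(0,d)}) = Z_{A(0,d)}(N)\, Z_{B(0,d)}(M),
\]
as claimed. I do not expect a real obstacle here: the only subtle step is recognizing that linearity of $\Gamma$ makes the shift $d$ act on $(A,B)$ independently of $(u,v)$, and that the bijectivity of $\Gamma$ is precisely what allows the factorization after the change of variables.
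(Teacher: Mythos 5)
Your proof is correct and takes essentially the same approach as the paper: expand the pairwise Bhattacharyya quantity using the definition of the good channel, use linearity of $\Gamma$ to turn $A(u,v\oplus d)$ into $A(u,v)\oplus A(0,d)$ (and similarly for $B$), then change variables $(u,v)\mapsto(A(u,v),B(u,v))$ using the fact that $\Gamma$ is a permutation, and factor the resulting double sum. The normalization bookkeeping also checks out.
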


\begin{proof}
According to Definition~\ref{def:bhattacharyya}, for the channel $ N \varoast_{\Gamma} M$, we have:
\begin{multline*}
Z\left((N \varoast_{\Gamma} M)_{v,\, v'}\right) = \sum_{u, y_1, y_2}  \sqrt{(N \varoast_{\Gamma} M)(y_1, y_2, u \mid v) \; (N \varoast_{\Gamma} M)(y_1, y_2, u \mid v')}  \\
\begin{aligned}
&= \frac{1}{4} \sum_{u, y_1, y_2}  \sqrt{N(y_1 \mid A(u,v)) \; M(y_2 \mid B(u,v)) \; N(y_1 \mid A(u,v'))\; M(y_2 \mid B(u,v'))}  \\
&= \frac{1}{4} \sum_{u, y_1, y_2}  \sqrt{N(y_1 \mid A(u,v))\; N(y_1 \mid A(u,v')) \; M(y_2 \mid B(u,v)) \; M(y_2 \mid B(u,v'))}  \\
&= \frac{1}{4} \sum_{u} \, Z\left(N_{A(u,v),\, A(u,v')}\right) \, Z\left(M_{B(u,v),\, B(u,v')}\right).
\end{aligned}
\end{multline*}
Therefore,
\begin{align*}
Z_d\left(N \varoast_{\Gamma} M\right) &= \frac{1}{4}\sum_{v} Z\left((N \varoast_{\Gamma} M)_{v,\, v\oplus d}\right) \\
&= \frac{1}{16} \sum_{u,v} Z\left(N_{A(u,v),\, A(u,v\oplus d)}\right) Z\left(M_{B(u,v),\, B(u,v\oplus d)}\right) \\
&= \frac{1}{16} \sum_{u,v} Z\left(N_{A(u,v),\, A(u,v)\oplus A(0, d)}\right) Z\left(M_{B(u,v),\, B(u,v)\oplus B(0, d)}\right) \\ 
&=  \frac{1}{16} \sum_{a} Z\left(N_{a,\, a\oplus A(0, d)}\right)\sum_{b}  Z\left(M_{b,\, b\oplus B(0, d)}\right)  \\ 
&= Z_{A(0,d)}(N) Z_{B(0,d)}(M),
\end{align*}
where the third equality follows from the linearity of the permutation $\Gamma = (A,B)$, and the fourth equality follows from the change of basis for the summation from $(u,v)$ to $(a,b) := (A(u,v), B(u,v))$.
\end{proof}

\begin{figure*}[!t]
    \centering
	\resizebox{\linewidth}{!}{%
    \begin{tikzpicture}
    \def\nodewidth{9mm}     
    \def\nodeheight{6.5mm}  
    \begin{scope}
    \draw
    (0, 0) node[draw, minimum width=\nodewidth, minimum height=\nodeheight](C1){$I$}
    (C1) ++(0, -1) node[draw, minimum width=\nodewidth, minimum height=\nodeheight](C2){$I$}
    (C1) ++(-1.25, 0) node[not](C3){}
    (C2) ++(-1.25, 0) node[phase](C4){}
    ;
    \draw
    (C3) to  (C1)
    (C4) to  (C2)
    (C3) to  (C4)
    (C3) to  ++(-0.8, 0) node[left]{$u_1, u_2$}
    (C4) to  ++(-0.8, 0) node[left]{$v_1, v_2$}
    (C2) to  ++(+1, 0) node[right]{$u_1 \oplus v_1, u_1 \oplus v_2$}
    (C1) to  ++(+1, 0) node[right]{$u_1, u_2\oplus v_1\oplus v_2$}
    (C1) ++(-3.7, -.5) node[](){$\Gamma_{1,1}:$}
    ;
    \draw
    (0, -2.5) node[draw, minimum width=\nodewidth, minimum height=\nodeheight](C1){$\sqrt{Z}$}
    (C1) ++(0, -1) node[draw, minimum width=\nodewidth, minimum height=\nodeheight](C2){$I$}
    (C1) ++(-1.25, 0) node[not](C3){}
    (C2) ++(-1.25, 0) node[phase](C4){}
    ;
    \draw
    (C3) to  (C1)
    (C4) to  (C2)
    (C3) to  (C4)
    (C3) to  ++(-0.8, 0) node[left]{$u_1, u_2$}
    (C4) to  ++(-0.8, 0) node[left]{$v_1, v_2$}
    (C2) to ++(+1, 0) node[right]{$u_1 \oplus v_1, u_1 \oplus v_2$}
    (C1) to ++(+1, 0) node[right]{$u_2\oplus v_1\oplus v_2, u_1$}
    (C1) ++(-3.7, -.5) node[](){$\Gamma_{2,1}:$}
    ;
    \draw
    (0, -5.0) node[draw, minimum width=\nodewidth, minimum height=\nodeheight](C1){$\sqrt{Y}$}
    (C1) ++(0, -1) node[draw, minimum width=\nodewidth, minimum height=\nodeheight](C2){$I$}
    (C1) ++(-1.25, 0) node[not](C3){}
    (C2) ++(-1.25, 0) node[phase](C4){}
    ;
    \draw
    (C3) to  (C1)
    (C4) to  (C2)
    (C3) to  (C4)
    (C3) to  ++(-0.8, 0) node[left]{$u_1, u_2$}
    (C4) to  ++(-0.8, 0) node[left]{$v_1, v_2$}
    (C2) to ++(+1, 0)  node[right]{$u_1 \oplus v_1, u_1 \oplus v_2$}
    (C1) to ++(+1, 0)  node[right]{$\begin{array}{r}u_1 \oplus u_2\oplus v_1\oplus v_2, \hspace*{5mm}\, \\u_2\oplus v_1\oplus v_2\end{array}$}
    (C1) ++(-3.7, -0.5) node[](){$\Gamma_{3,1}:$}
    ;
    \end{scope}
    \begin{scope}[xshift =10cm]
    \draw
    (0, 0) node[draw, minimum width=\nodewidth, minimum height=\nodeheight](C1){$I$}
    (C1) ++(0, -1) node[draw, minimum width=\nodewidth, minimum height=\nodeheight](C2){$\sqrt{X}$}
    (C1) ++(-1.25, 0) node[not](C3){}
    (C2) ++(-1.25, 0) node[phase](C4){}
    ;
    \draw
    (C3) to  (C1)
    (C4) to  (C2)
    (C3) to  (C4)
    (C3) to  ++(-0.8, 0) node[left]{$u_1, u_2$}
    (C4) to  ++(-0.8, 0) node[left]{$v_1, v_2$}
    (C2) to ++(+1, 0) node[right]{$u_1 \oplus v_1, v_1 \oplus v_2$}
    (C1) to ++(+1, 0) node[right]{$u_1, u_2\oplus v_1\oplus v_2$}
    (C1) ++(-3.7, -0.5) node[](){$\Gamma_{1,2}:$}
    ;
    \draw
    (0, -2.5) node[draw, minimum width=\nodewidth, minimum height=\nodeheight](C1){$\sqrt{Z}$}
    (C1) ++(0, -1) node[draw, minimum width=\nodewidth, minimum height=\nodeheight](C2){$\sqrt{X}$}
    (C1) ++(-1.25, 0) node[not](C3){}
    (C2) ++(-1.25, 0) node[phase](C4){}
    ;
    \draw
    (C3) to  (C1)
    (C4) to  (C2)
    (C3) to  (C4)
    (C3) to  ++(-0.8, 0) node[left]{$u_1, u_2$}
    (C4) to  ++(-0.8, 0) node[left]{$v_1, v_2$}
    (C2) to ++(+1, 0) node[right]{$u_1 \oplus v_1, v_1 \oplus v_2$}
    (C1) to ++(+1, 0)  node[right]{$u_2\oplus v_1\oplus v_2, u_1$}
    (C1) ++(-3.7, -0.5) node[](){$\Gamma_{2,2}:$}
    ;
     \draw
    (0, -5) node[draw, minimum width=\nodewidth, minimum height=\nodeheight](C1){$\sqrt{Y}$}
    (C1) ++(0, -1) node[draw, minimum width=\nodewidth, minimum height=\nodeheight](C2){$\sqrt{X}$}
    (C1) ++(-1.25, 0) node[not](C3){}
    (C2) ++(-1.25, 0) node[phase](C4){}
    ;
    \draw
    (C3) to  (C1)
    (C4) to  (C2)
    (C3) to  (C4)
    (C3) to  ++(-0.8, 0) node[left]{$u_1, u_2$}
    (C4) to  ++(-0.8, 0) node[left]{$v_1, v_2$}
    (C2) to ++(+1, 0)  node[right]{$u_1 \oplus v_1, v_1 \oplus v_2$}
    (C1) to ++(+1, 0)  node[right]{$\begin{array}{r}u_1 \oplus u_2\oplus v_1\oplus v_2, \hspace*{5mm}\, \\u_2\oplus v_1\oplus v_2\end{array}$}
    (C1) ++(-3.7, -.5) node[](){$\Gamma_{3,2}:$}
    ;
    \end{scope}
    \begin{scope}[yshift =-7.5cm]
     \draw
    (0, 0) node[draw, minimum width=\nodewidth, minimum height=\nodeheight](C1){$I$}
    (C1) ++(0, -1) node[draw, minimum width=\nodewidth, minimum height=\nodeheight](C2){$\sqrt{Y}$}
    (C1) ++(-1.25, 0) node[not](C3){}
    (C2) ++(-1.25, 0) node[phase](C4){}
    ;
    \draw
    (C3) to  (C1)
    (C4) to  (C2)
    (C3) to  (C4)
    (C3) to  ++(-0.8, 0) node[left]{$u_1, u_2$}
    (C4) to  ++(-0.8, 0) node[left]{$v_1, v_2$}
    (C2) to ++(+1, 0) node[right]{$v_1\oplus v_2, u_1 \oplus v_2$}
    (C1) to ++(+1, 0) node[right]{$u_1, u_2\oplus v_1\oplus v_2$}
    (C1) ++(-3.7, -0.5) node[](){$\Gamma_{1,3}:$}
    ;
    \end{scope}
    \begin{scope}[xshift = 10cm, yshift =-5cm]
     \draw
    (0, -2.5) node[draw, minimum width=\nodewidth, minimum height=\nodeheight](C1){$\sqrt{Z}$}
    (C1) ++(0, -1) node[draw, minimum width=\nodewidth, minimum height=\nodeheight](C2){$\sqrt{Y}$}
    (C1) ++(-1.25, 0) node[not](C3){}
    (C2) ++(-1.25, 0) node[phase](C4){}
    ;
    \draw
    (C3) to  (C1)
    (C4) to  (C2)
    (C3) to  (C4)
    (C3) to  ++(-0.8, 0) node[left]{$u_1, u_2$}
    (C4) to  ++(-0.8, 0) node[left]{$v_1, v_2$}
    (C2) to ++(+1, 0)  node[right]{$v_1\oplus v_2, u_1 \oplus v_2$}
    (C1) to ++(+1, 0)  node[right]{$u_2\oplus v_1\oplus v_2, u_1$}
    (C1) ++(-3.7, -.5) node[](){$\Gamma_{2,3}:$}
    ;
    \end{scope}
    \begin{scope}[xshift = 5cm, yshift =-5cm]
    \draw
    (0, -5) node[draw, minimum width=\nodewidth, minimum height=\nodeheight](C1){$\sqrt{Y}$}
    (C1) ++(0, -1) node[draw, minimum width=\nodewidth, minimum height=\nodeheight](C2){$\sqrt{Y}$}
    (C1) ++(-1.25, 0) node[not](C3){}
    (C2) ++(-1.25, 0) node[phase](C4){}
    ;
    \draw
    (C3) to  (C1)
    (C4) to  (C2)
    (C3) to  (C4)
    (C3) to  ++(-0.8, 0) node[left]{$u_1, u_2$}
    (C4) to  ++(-0.8, 0) node[left]{$v_1, v_2$}
    (C2) to ++(+1, 0)  node[right]{$v_1\oplus v_2, u_1 \oplus v_2$}
    (C1) to ++(+1, 0)  node[right]{$u_1 \oplus u_2\oplus v_1\oplus v_2, u_2\oplus v_1\oplus v_2$}
    (C1) ++(-3.7, -.5) node[](){$\Gamma_{3,3}:$}
    ;
    \end{scope}
    \end{tikzpicture}
  	}
    \caption{Elements of the set $\Gamma(\mathcal{L})$}
    \label{fig:perm_left_set}
\end{figure*}
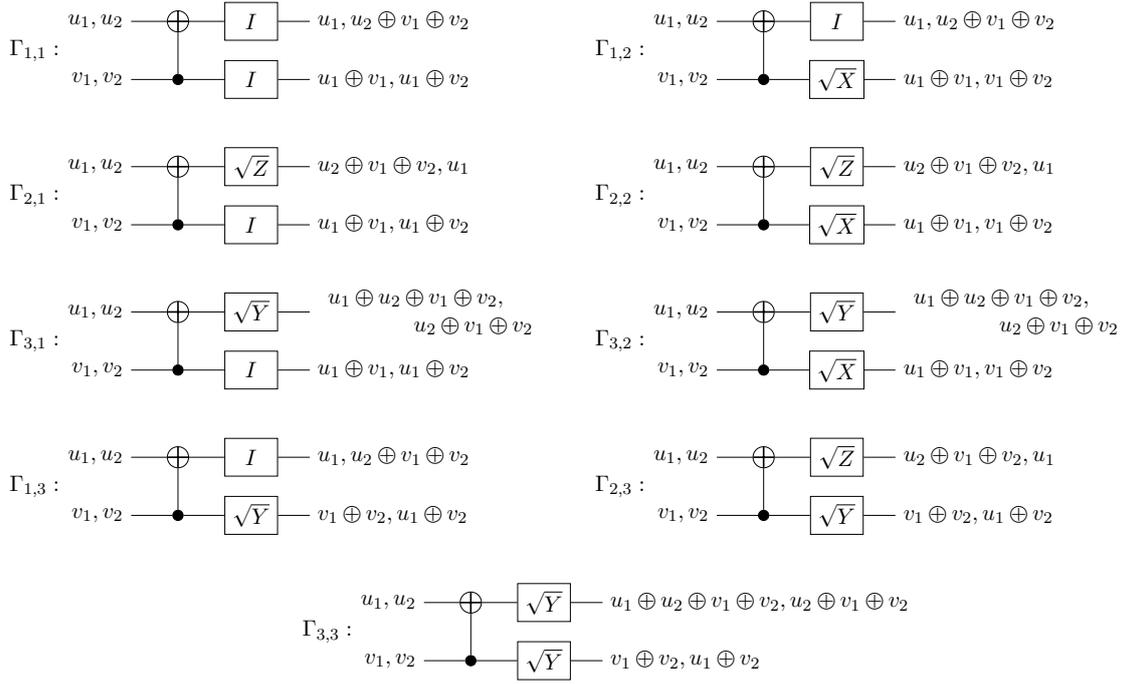

Throughout the remaining of this section, we shall denote by $u:=[u_1, u_2]$ the binary representation of a given $u\in \bar{P}_1 \cong \{0,1,2,3\}$, where $u_1, u_2 \in \{0,1\}$ and $u_2$ is the least significant bit.

\begin{lemma}\label{lemma:Ai_Bj_permutations}
Let $\Gamma_{i,j} := \Gamma(L_{i,j}) : \bar{P}_1\times\bar{P}_1 \rightarrow \bar{P}_1\times\bar{P}_1$ be the permutation defined by the conjugate action of $L_{i,j} \in {\cal L}$, where ${\cal L}$ is the set of two-qubit Clifford gates defined in Section~\ref{sec:polarization_9_cliffords} (Fig.~\ref{fig:cliff_generators}). Then $\Gamma_{i,j} = (A_i, B_j), \forall 1 \leq i,j \leq 3$, with  $A_i, B_j : \bar{P}_1\times\bar{P}_1 \rightarrow \bar{P}_1$ given by:
\begin{align*} 
 A_1(u,v) &= [u_1, u_2\oplus v_1\oplus v_2],   & B_1(u,v) &= [u_1 \oplus v_1, u_1 \oplus v_2] \nonumber \\
 A_2(u,v) &= [u_2\oplus v_1\oplus v_2, u_1 ],  & B_2(u,v) &= [u_1 \oplus v_1, v_1 \oplus v_2] \nonumber \\
 A_3(u,v) &= [u_1 \oplus u_2\oplus v_1\oplus v_2, u_2\oplus v_1\oplus v_2] & B_3(u,v) &= [v_1\oplus v_2, u_1 \oplus v_2] \nonumber
\end{align*}
where $u$ and $v$ inputs are represented in binary form, $u:=[u_1, u_2]$ and $v := [v_1, v_2]$, with $u_1, u_2, v_1, v_2 \in\{0,1\}$ ($\Gamma_{i,j}$ permutations are also depicted in Fig.~\ref{fig:perm_left_set}).
\end{lemma}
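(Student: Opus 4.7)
The plan is a direct, computational verification, exploiting the fact that $L_{i,j} = (C_1 \otimes C_2)\,\text{\sc cnot}_{21}$ so that the induced permutation factors as $\Gamma_{i,j} = (\Gamma(C_1) \times \Gamma(C_2)) \circ \Gamma(\text{\sc cnot}_{21})$; it therefore suffices to compute each ingredient's action on a generic $\sigma_u \otimes \sigma_v$ and then compose. First I would pin down the binary encoding $\bar{P}_1 \cong \{0,1\}^2$ compatible with the identification $(\bar{P}_1,\times)\cong(\{0,1,2,3\},\oplus)$ from Section~\ref{subsec:classical_combining_splitting}. Reading off $\sigma_0=I$, $\sigma_1=X$, $\sigma_2=Y$, $\sigma_3=Z$ in the convention $u=2u_1+u_2$ gives $I=[0,0]$, $X=[0,1]$, $Y=[1,0]$, $Z=[1,1]$, which one verifies is consistent with XOR-multiplication by checking the three nontrivial products $X\cdot Y\sim Z$, $X\cdot Z\sim Y$, $Y\cdot Z\sim X$.

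Next I would tabulate the single-qubit Clifford actions. Using $\sqrt{P}=\frac{(1-i)(\ident+iP)}{2}$ and the Pauli anticommutation rules, a short direct calculation gives, modulo irrelevant global phases,
\begin{equation*}
\sqrt{Z}\,X\,\sqrt{Z}^\dagger \sim Y,\quad \sqrt{Z}\,Y\,\sqrt{Z}^\dagger \sim X,\quad \sqrt{X}\,Y\,\sqrt{X}^\dagger \sim Z,\quad \sqrt{X}\,Z\,\sqrt{X}^\dagger \sim Y,\quad \sqrt{Y}\,X\,\sqrt{Y}^\dagger \sim Z,\quad \sqrt{Y}\,Z\,\sqrt{Y}^\dagger \sim X,
\end{equation*}
together with the trivial fixed points $\sqrt{P}\,P\,\sqrt{P}^\dagger = P$. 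Translating into the binary encoding, the three Cliffords act as linear permutations of $\bar{P}_1\cong\{0,1\}^2$:
\begin{equation*}
\sqrt{Z}\colon [u_1,u_2]\mapsto[u_2,u_1],\qquad \sqrt{X}\colon [u_1,u_2]\mapsto[u_1,\,u_1\oplus u_2],\qquad \sqrt{Y}\colon [u_1,u_2]\mapsto[u_1\oplus u_2,\,u_2].
\end{equation*}

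I would then dispatch the base case $L_{1,1}=\text{\sc cnot}_{21}$ using the standard Heisenberg propagation rules for the CNOT with control on qubit~2 and target on qubit~1 (target $X$ and control $Z$ are invariant; control $X$ and target $Z$ copy across). Decomposing $\sigma_u\otimes\sigma_v$ into $X$ and $Z$ factors via the encoding above, applying these rules, and reading off the result in $[u_1,u_2]$ form yields exactly $A_1(u,v)=[u_1,\,u_2\oplus v_1\oplus v_2]$ and $B_1(u,v)=[u_1\oplus v_1,\,u_1\oplus v_2]$.

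Finally, the remaining eight cases follow by post-composition with the appropriate single-qubit map from the table. For the first-qubit component, post-composing $A_1$ with the $\sqrt{Z}$ bit-swap yields $A_2$, and with $\sqrt{Y}$ yields $A_3$. Symmetrically, for the second-qubit component, post-composing $B_1$ with $\sqrt{X}$ yields $B_2$, and with $\sqrt{Y}$ yields $B_3$. Each is a one-line substitution. The only real ``obstacle'' here is bookkeeping, namely fixing a binary encoding of $\bar{P}_1$ consistent with the XOR multiplication and carefully tabulating the six single-qubit conjugations; once that convention is nailed down, the nine formulas drop out mechanically.
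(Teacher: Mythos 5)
Your proposal is correct and follows essentially the same route as the paper: decompose $L_{i,j} = (C_1 \otimes C_2)\,\text{\sc cnot}_{21}$, determine the binary-form permutations induced by $\sqrt{X}$, $\sqrt{Y}$, $\sqrt{Z}$, determine the $\text{\sc cnot}_{21}$ permutation, and compose. The only cosmetic difference is that you derive the $\text{\sc cnot}_{21}$ and single-qubit actions via explicit Heisenberg conjugation rather than (as the paper does) by noting which Paulis are fixed or exchanged and extending linearly from the generators $(X,I),(Z,I),(I,X),(I,Z)$, but the resulting tables and the composition bookkeeping are identical.
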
 
\begin{proof}
Recall from Section~\ref{sec:polarization_9_cliffords}, that $L_{i,j} = (C' \otimes C'')\text{\sc cnot}_{21}$, where $C' \in  \{I, \sqrt{Z}, \sqrt{Y}\}$, and $C'' \in \{I, \sqrt{X}, \sqrt{Y}\}$. Recall also that by identifying  $\bar{P}_1 \cong \{0,1,2,3\}$, we have $I = \sigma_0 \cong 0$,  $X = \sigma_1 \cong 1$, $Y = \sigma_2 \cong 2$, $Z = \sigma_3 \cong 3$. The conjugate action of $\sqrt{X}$ on $\bar{P}_1$, fixes $I$ and $X$, and permutes $Y$ and $Z$. Hence, the corresponding permutation on $\bar{P}_1 \cong \{0,1,2,3\}$, can be written as $(0, 1, 3, 2)$. Similarly, the conjugate action of $\sqrt{Y}$ and $\sqrt{Z}$ induces the permutations $(0, 3, 2, 1)$ and $(0, 2, 1, 3)$, respectively. Replacing $u\in \{0,1,2,3\}$ by its binary representation $[u_1, u_2]$, we may write:
\begin{equation}
\begin{aligned}
\sqrt{X}\, &: [u_1,u_2] \mapsto [u_1, u_1\oplus u_2] \\
\sqrt{Y}\, &: [u_1,u_2] \mapsto [u_1\oplus u_2, u_2] \\
\sqrt{Z}\, &: [u_1,u_2] \mapsto [u_2, u_1]
\end{aligned} \label{eq:sqrt_sigma_perm}
\end{equation}
Moreover, the permutation induced by the conjugate action of the $\text{\sc cnot}_{21}$ gate is the linear permutation on $\bar{P}_1 \times \bar{P}_1$ such that:
\begin{align}
\text{\sc cnot}_{21}\, : &\ (X, I) \mapsto (X, I),\ \     (I, X) \mapsto (X, X) \nonumber \\
  &\ (Z, I)\mapsto (Z, Z),\ \  (I, Z) \mapsto (I, Z) \nonumber\\
 \begin{split}
\Rightarrow\, \, \text{\sc cnot}_{21}\, : &\  \left([u_1,u_2], [v_1,v_2]\right) \mapsto \left( [u_1, u_2 \oplus v_1 \oplus v_2],  [u_1 \oplus v_1, u_1\oplus v_2] \right).
\end{split} \label{eq:cnot21_perm}
\end{align}
Finally, using (\ref{eq:sqrt_sigma_perm}) and (\ref{eq:cnot21_perm}), it can be easily verified that $\Gamma_{i,j} = (A_i, B_j), \forall 1 \leq i,j \leq 3$, with $A_i$ and $B_j$ as given in the lemma.
\end{proof}

\medskip \noindent{\bf Proof of Lemma~\ref{lem:z-good-classical-channel}.} 
To simplify notation, let $W \eqdef {\cal W}^\#$ be the classical counterpart of the CMP channel $\mathcal{W}$ from Lemma~\ref{lem:z-good-classical-channel}. Applying Lemma~\ref{lemma:Zd_good_channels} and Lemma~\ref{lemma:Ai_Bj_permutations}, we may express $Z_d(W\varoast_{\Gamma_{i,j}} W)$ as a function of $(Z_1(W), Z_2(W), Z_3(W))$, for any $\Gamma_{i,j} \in\Gamma(\mathcal{L})$ and any $d=1,2,3$ (recall that $Z_0(W) = 1$). The corresponding expressions are given in Table~\ref{table:Zd_good_channel}.

\begin{table}[!t]
\centering
\caption{$Z_d(W\varoast_{\Gamma_{i,j}} W)$ as a function of $(Z_1(W), Z_2(W), Z_3(W))$}
\label{table:Zd_good_channel}
$\begin{array}{@{\,}c|c|c|c@{\,}}
\hline
(i,j)  & Z_1(W\varoast_{\Gamma_{i,j}} W) & Z_2(W\varoast_{\Gamma_{i,j}} W) & Z_3(W\varoast_{\Gamma_{i,j}} W) \\
\hline \hline
 (1, 1)   &   Z_1(W)^2     & Z_1(W)Z_2(W)   &    Z_3(W)     \\
 (1, 2)   &   Z_1(W)^2     & Z_1(W)Z_3(W)   &    Z_2(W)     \\
 (1, 3)   & Z_1(W)Z_3(W)   & Z_1(W)Z_2(W)   &    Z_1(W)     \\
 (2, 1)   & Z_1(W)Z_2(W)   &   Z_2(W)^2     &    Z_3(W)     \\
 (2, 2)   & Z_1(W)Z_2(W)   & Z_2(W)Z_3(W)   &    Z_2(W)     \\
 (2, 3)   & Z_2(W)Z_3(W)   &   Z_2(W)^2     &    Z_1(W)     \\
 (3, 1)   & Z_1(W)Z_3(W)   & Z_2(W)Z_3(W)   &    Z_3(W)     \\
 (3, 2)   & Z_1(W)Z_3(W)   &   Z_3(W)^2     &    Z_2(W)     \\
 (3, 3)   &   Z_3(W)^2     & Z_2(W)Z_3(W)   &    Z_1(W)     \\
\hline
\end{array}$
\end{table}

\pagebreak \noindent Hence,
\begin{align*}
\sum_{\Gamma \in \Gamma(\cal L)} Z\left(W \varoast_{\Gamma} W\right) &= \frac{1}{3} \sum_{\Gamma \in \Gamma(\cal L)} \sum_{d=1}^3 Z_d\left(W \varoast_{\Gamma} W\right) \\
      &= \sum_{d=1}^3 Z_1(W) + \frac{2}{3} \left( \sum_{d=1}^3 Z_1(W) \right)^2 \\
      & = 3Z(W) + 6Z(W)^2,
\end{align*}
and therefore,
\begin{align*}
\mbE_{\Gamma \in \Gamma(\cal L)} Z\left(W \varoast_{\Gamma} W\right) &= \frac{1}{9} \sum_{\Gamma \in \Gamma({\cal L})} Z\left(W \varoast_{\Gamma} W\right) = \frac{1}{3}Z(W) + \frac{2}{3}Z(W)^2.
\end{align*}
The case $\Gamma\in\mathcal{R}$ can be derived in a similar way. 
Alternatively, one can directly verify that $\mbE_{\Gamma \in \Gamma(\cal L)} Z\left(W \varoast_{\Gamma} W\right) = \mbE_{\Gamma \in \Gamma(\cal R)} Z\left(W \varoast_{\Gamma} W\right)$, similarly to the proof of Lemma~\ref{lemma:swap_eq} in the quantum case.
\hfill $\qed$

\section{Proof of Lemma~\ref{lem:pol_3_clif}} \label{apnd:pol3}

Using Table~\ref{table:Zd_good_channel} from Appendix~\ref{apnd:pol9}, for $\Gamma \in \Gamma({\cal S}) = \{\Gamma_{1,3}, \Gamma_{2,2}, \Gamma_{3,1}\}$, we get
\begin{align*}
\mbE_{\Gamma \in \Gamma({\cal S})} Z\left(W \varoast_{\Gamma} W\right)  &=  \frac{1}{3} \sum_{\Gamma \in \Gamma({\cal S})} Z\left(W \varoast_{\Gamma} W\right) \\
&= \frac{1}{9} \sum_{\Gamma \in \Gamma({\cal S})} \sum_{1\leq d \leq 3} Z_d\left(W \varoast_{\Gamma} W\right) \\
&= \frac{1}{9} \sum_{1\leq d \leq 3} Z_d(W) + \frac{2}{9}  \sum_{1\leq d'\neq d'' \leq 3}  Z_{d'}(W)Z_{d''}(W)  \\  
&\leq  \frac{1}{3}Z(W) + \frac{2}{3} Z(W)^2,
\end{align*}
where, using $Z(W) = (Z_1(W) + Z_2(W) + Z_3(W))/3$, it is easily seen that the last inequality is equivalent to $Z_1(W)Z_2(W) + Z_1(W)Z_3(W) +  Z_2(W) Z_3(W) \leq Z_1(W)^2 + Z_2(W)^2 + Z_3(W)^2$, which follows from $Z_i(W)Z_j(W) \leq (Z_i(W)^2 + Z_j(W)^2)/2$.
\ \hfill $\qed$

 \section{Proof of Lemma~\ref{lem:Fast_Polarization}}\label{sec:proof_lemma_fast_polarization}
 
\noindent We prove first the following lemma.
\begin{lemma}\label{lemma:Zd_bad_channels}
For any classical channels $N, M$, with input alphabet $\bar{P}_1 \cong \left( \{0,1,2,3\}, \oplus \right)$, and any linear permutation $\Gamma  = (A, B) : \bar{P}_1\times\bar{P}_1 \rightarrow \bar{P}_1\times\bar{P}_1$, the following inequality holds for any $d \in \bar{P}_1$:
\begin{equation*}
Z_d(N \boxast_\Gamma M)  \leq \sum_{d' \in \bar{P}_1} Z_{A(d,d')}(N) Z_{B(d,d')}(M).
\end{equation*}
\end{lemma}

\begin{proof}
According to Definition~\ref{def:bhattacharyya}, for the channel $ N \boxast_{\Gamma} M$, we have:
%
\begin{multline*}
Z\left((N \boxast_{\Gamma} M)_{u,\, u'}\right) = \sum_{y_1, y_2} \sqrt{(N \boxast_{\Gamma} M)(y_1, y_2 \mid u) \; (N \boxast_{\Gamma} M)(y_1, y_2 \mid u')}  \\
\begin{aligned}
&=  \frac{1}{4} \sum_{y_1, y_2}  \sqrt{  \sum_v N(y_1 \mid A(u,v)) \,  M(y_2 \mid B(u,v)) \; \sum_{v'}  N(y_1 \mid A(u',v')) \, M(y_2 \mid B(u',v'))}  \\
&\leq \frac{1}{4} \sum_{v,v'} \sum_{y_1,y_2}  \sqrt{ N(y_1 \mid A(u,v)) \;  M(y_2 \mid B(u,v)) \; N(y_1 \mid A(u',v')) \; M(y_2 \mid B(u',v'))}   \\
&= \frac{1}{4} \sum_{v,v'} \sum_{y_1,y_2}  \sqrt{N(y_1 \mid A(u,v)) \; N(y_1 \mid A(u',v')) \; M(y_2 \mid B(u,v)) \; M(y_2 \mid B(u',v'))}  \\
&= \frac{1}{4} \sum_{v,v'}  Z\left(N_{A(u,v),\, A(u',v')}\right) Z\left(M_{B(u,v),\, B(u',v')}\right),
\end{aligned}
\end{multline*} 
where the inequality  above  follows from $\sqrt{\sum_v x_v} \leq \sum_v \sqrt{x_v}$.
Therefore,
\begin{align*}
Z_d\left(N \boxast_{\Gamma} M\right) &= \frac{1}{4}\sum_{u} Z\left((N \boxast_{\Gamma} M)_{u,\, u\oplus d}\right) \\
&\leq \frac{1}{16} \sum_{u,v,v'} Z\left(N_{A(u,v),\, A(u\oplus d,v')}\right) Z\left(M_{B(u,v),\, B(u\oplus d,v')}\right) \\
&= \frac{1}{16} \sum_{u,v,d' \text{\makebox[0mm][l]{\ ($d' := v\oplus v'$)}}} Z\left(N_{A(u,v),\, A(u\oplus d,v\oplus d')}\right) Z\left(M_{B(u,v),\, B(u\oplus d,v\oplus d')}\right) \\
&= \frac{1}{16} \sum_{u,v,d'}\! Z\left(N_{A(u,v),\, A(u,v)\oplus A(d, d')}\right) Z\left(M_{B(u,v),\, B(u,v)\oplus B(d, d')}\right)\\
&=  \frac{1}{16} \sum_{d'} \sum_{a} Z\left(N_{a,\, a\oplus A(d, d')}\right)\sum_{b}  Z\left(M_{b,\, b\oplus B(d, d')}\right) \\
&= \sum_{d'} Z_{A(d,d')}(N) Z_{B(d,d')}(M),
\end{align*}
where the third to last equality   follows from the linearity of the permutation $\Gamma = (A,B)$, and the second to last follows from the change of basis for the summation from $(u,v)$ to $(a,b) := (A(u,v), B(u,v))$.
\end{proof}

\medskip\noindent {\bf Proof of Lemma~\ref{lem:Fast_Polarization}.}
To simplify notation, let $W \eqdef {\cal W}^\#$ be the classical counterpart of the CMP channel $\mathcal{W}$ from Lemma~\ref{lem:Fast_Polarization}. Using Lemma~\ref{lemma:Zd_bad_channels}, we have 
$$Z_d(W \boxast_\Gamma W) \leq \sum_{d' \in \bar{P}_1} Z_{A(d,d')}(W) Z_{B(d,d')}(W).$$
 For $d\neq 0$, $A(d,d')$ and $B(d, d')$ cannot be simultaneously zero (recall that $Z_0(W) = 1$), and therefore we get $Z_{A(d,d')}(W) Z_{B(d,d')}(W) \leq \bar{Z}(W)$. Hence,  $Z_d(W \boxast_\Gamma W) \leq 4\bar{Z}(W), \forall d=1,2,3$, which implies $\bar{Z}(W \boxast_\Gamma W) \leq 4\bar{Z}(W)$, as desired. 
Finally, we have 
\begin{equation*}
Z(W \boxast_\Gamma W) \leq \bar{Z}(W \boxast_\Gamma W) \leq 4\bar{Z}(W) \leq 12 Z(W),
\end{equation*}
which proves the second inequality of the lemma. \hfill $\qed$

\section{Proof of Proposition~\ref{prop:fast_polarization}}
\label{sec:proof_prop_fast_polarization}

We proceed first with several lemmas. In the following, the notation $x = x(\cdot)$ means that the value of $x$ depends only on the list of variables $(\cdot)$ enclosed between parentheses. 

\begin{lemma} \label{lemma:dprime_good_channel}
$(i)$ For any permutation $\Gamma \in \Gamma(\mathcal{S})$, there exist $\delta_1 = \delta_1(\Gamma)$, $\delta_2 = \delta_2(\Gamma)$, $\delta_3 = \delta_3(\Gamma)$, such that $\{\delta_1, \delta_2, \delta_3\} = \{1, 2, 3\}$, and
\begin{align*}
Z_{3} (W \varoast_{\Gamma} W) &= Z_{\delta_3}(W)\\
Z_{2} (W \varoast_{\Gamma} W) &= Z_{\delta_3}(W) Z_{\delta_2}(W) \\
Z_{1} (W \varoast_{\Gamma} W) &= Z_{\delta_3}(W) Z_{\delta_1}(W),
\end{align*}
and the above equalities hold for any channel $W$. 

\smallskip \noindent $(ii)$ For any $d \in \{1,2,3\}$, there exists exactly one permutation $\Gamma \in \Gamma(\mathcal{S})$, such that $\delta_3(\Gamma) = d$.
\end{lemma}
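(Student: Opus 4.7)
The plan is to reduce the entire statement to a direct inspection of Table~\ref{table:Zd_good_channel}, which already tabulates $Z_d(W \varoast_{\Gamma_{i,j}} W)$ as a monomial in $Z_1(W), Z_2(W), Z_3(W)$ for every pair $(i,j) \in \{1,2,3\}^2$. In effect, that table was built precisely by combining Lemma~\ref{lemma:Zd_good_channels} (which gives $Z_d(W \varoast_\Gamma W) = Z_{A(0,d)}(W)\, Z_{B(0,d)}(W)$ for any linear $\Gamma = (A, B)$) with the explicit formulas for $A_i, B_j$ from Lemma~\ref{lemma:Ai_Bj_permutations}, so for the present lemma no new computation is needed beyond reading off three specific rows.

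For part (i), the plan is to extract from Table~\ref{table:Zd_good_channel} the rows indexed by the pairs $(1,3)$, $(2,2)$, $(3,1)$ corresponding to the Cliffords in $\mathcal{S}$. Inspecting those rows one sees that in each case $Z_3(W \varoast_\Gamma W)$ is a single factor $Z_{\delta_3}(W)$, and that $Z_2(W \varoast_\Gamma W)$ and $Z_1(W \varoast_\Gamma W)$ both contain $Z_{\delta_3}(W)$ as a factor, multiplied respectively by $Z_{\delta_2}(W)$ and $Z_{\delta_1}(W)$, with $\{\delta_1, \delta_2, \delta_3\}$ always equal to $\{1,2,3\}$. Concretely, one reads off $(\delta_1, \delta_2, \delta_3) = (3, 2, 1)$ for $\Gamma_{1,3}$, $(1, 3, 2)$ for $\Gamma_{2,2}$, and $(1, 2, 3)$ for $\Gamma_{3,1}$, which yields the three claimed identities.

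For part (ii), it then suffices to observe that the three values of $\delta_3$ obtained above are $1, 2, 3$ respectively, one for each $\Gamma \in \Gamma(\mathcal{S})$. Hence $\Gamma \mapsto \delta_3(\Gamma)$ is a bijection from the three-element set $\Gamma(\mathcal{S})$ onto $\{1,2,3\}$, which is exactly the stated assertion. There is no real obstacle to overcome: the content of the lemma is essentially a tidy book-keeping observation about why the set $\mathcal{S} = \{L_{1,3}, L_{2,2}, L_{3,1}\}$ was chosen, namely that the three corresponding permutations together realize each ``invariant coordinate'' $d \in \{1,2,3\}$ exactly once while giving a quadratic boost (via multiplication by $Z_{\delta_3}(W)$) to the other two $Z$-parameters, a structural property that will presumably be exploited in the subsequent fast-polarization argument.
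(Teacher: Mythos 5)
Your proof is correct and takes essentially the same approach as the paper: both reduce the lemma to reading off the rows of Table~\ref{table:Zd_good_channel} for the three pairs $(1,3)$, $(2,2)$, $(3,1)$, which was itself built from Lemmas~\ref{lemma:Zd_good_channels} and~\ref{lemma:Ai_Bj_permutations}. Your explicit tabulation of the triples $(\delta_1,\delta_2,\delta_3) = (3,2,1)$, $(1,3,2)$, $(1,2,3)$ matches the table and gives $\delta_3(\Gamma_{1,3})=1$, $\delta_3(\Gamma_{2,2})=2$, $\delta_3(\Gamma_{3,1})=3$, exactly as the paper records.
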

\begin{proof}
Follows from Table~\ref{table:Zd_good_channel} in Appendix~\ref{apnd:pol9}, wherein $\Gamma(\mathcal{S}) = \{\Gamma_{1,3}, \Gamma_{2,2},\Gamma_{3,1}\}$. Precisely, we have 
$\delta_3(\Gamma_{1,3}) = 1, \delta_3(\Gamma_{2,2}) = 2, \delta_3(\Gamma_{3,1}) = 3$.
\end{proof}

\begin{lemma} \label{lemma:dprime_bad_channel}
There exist a constant $\kappa > 1$ and $\bm{\delta} = \bm{\delta}(W) \in\{1, 2, 3\}$, such that for any $\Gamma \in \Gamma(\mathcal{S})$ and any $d\in\{1,2,3\}$, the following equality holds
\begin{equation*}
Z_{d} (W \boxast_{\Gamma} W) \leq \kappa Z_{\bm{\delta}}(W).
\end{equation*}
\end{lemma}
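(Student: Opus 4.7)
My plan is to take the universal constant $\kappa = 4$ and to set $\bm{\delta}=\bm{\delta}(W)$ to be any index in $\{1,2,3\}$ that realizes $\bar{Z}(W) = \max_{d=1,2,3} Z_d(W)$. With this choice the claim becomes $Z_d(W \boxast_\Gamma W) \le 4\,\bar Z(W)$, uniformly in $d \in \{1,2,3\}$ and $\Gamma \in \Gamma(\mathcal{S})$. Since $Z_d(W') \le \bar Z(W')$ for any channel $W'$ and any $d\in\{1,2,3\}$, it is enough to show that $\bar Z(W \boxast_\Gamma W) \le 4\,\bar Z(W)$, which is already Lemma~\ref{lem:Fast_Polarization} applied to any linear permutation (and $\Gamma(\mathcal{S})$ consists of linear permutations).

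If a self-contained derivation is preferred, I would instead appeal directly to Lemma~\ref{lemma:Zd_bad_channels}, which gives
\begin{equation}
Z_d(W \boxast_\Gamma W) \;\le\; \sum_{d' \in \bar P_1} Z_{A(d,d')}(W)\, Z_{B(d,d')}(W),
\end{equation}
and then bound each of the four summands. For $e\neq 0$ one has $Z_e(W) \in [0,1]$ (Cauchy–Schwarz applied to $Z(W_{u,u'})=\sum_y \sqrt{W(y|u)W(y|u')}$ yields $Z(W_{u,u'})\le 1$, and $Z_e(W)$ is the average of four such quantities), while $Z_0(W)=1$. Hence each product $Z_{A(d,d')}(W)\,Z_{B(d,d')}(W)$ is bounded by $\bar Z(W)$ as soon as $(A(d,d'), B(d,d')) \neq (0,0)$. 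Because $\Gamma$ is a linear bijection, $\Gamma(d,d') = (0,0)$ forces $(d,d')=(0,0)$; in particular, for $d \neq 0$ and any $d' \in \bar P_1$, at least one of $A(d,d')$, $B(d,d')$ is nonzero, so each of the four summands is at most $\bar Z(W)$, and summing gives the desired bound $4\,\bar Z(W) = 4\,Z_{\bm{\delta}}(W)$.

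The only point requiring a little care is the quantifier order: the lemma demands that $\bm{\delta}$ depend only on $W$, not on $d$ or on $\Gamma$, and that the same $\kappa$ work uniformly. Taking $\bm{\delta}$ to realize the maximum of $Z_1(W), Z_2(W), Z_3(W)$ makes this uniformity automatic, so there is no real obstacle; the lemma is essentially a rephrasing, in a form convenient for the fast-polarization argument, of the bound already established in Lemma~\ref{lem:Fast_Polarization}.
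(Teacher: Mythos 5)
Your primary argument is exactly the paper's proof: take $\kappa=4$, $\bm{\delta}=\argmax_{d\in\{1,2,3\}} Z_d(W)$, and invoke the bound $\bar Z(W\boxast_\Gamma W)\le 4\bar Z(W)$ from Lemma~\ref{lem:Fast_Polarization}. Your ``self-contained'' alternative via Lemma~\ref{lemma:Zd_bad_channels} is precisely how the paper proves Lemma~\ref{lem:Fast_Polarization} itself, so both routes coincide with what is done in the paper.
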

\begin{proof}
Follows from Lemma~\ref{lem:Fast_Polarization}, for $\kappa = 4$ and $\displaystyle \bm{\delta} = \bm{\delta}(W) := \argmax_{d=1,2,3} Z_{d}(W)$.
\end{proof}

We shall also use the following lemma (known as Hoeffding's inequality) providing an upper bound for the probability that the mean of $n$
independent random variables falls below its expected value mean by a positive number.
\begin{lemma}[{\cite[Theorem 1]{hoeffding1963}}] \label{lemma:mean_independent_rv}
 Let $X_1,X_2,\dots,X_n$ be independent random variables such that $0 \leq X_i\leq 1$, for any $i=1,\dots,n$. Let $\bar{X} := \frac{1}{n}\sum_{i=1}^n X_i$, and $\mu = \mbE(\bar{X})$.  Then, for any $0< t < \mu$,
\begin{equation*}
\Pr\left\{ \bar{X} \leq \mu - t \right\} \leq e^{-2nt^2}.
\end{equation*}
\end{lemma}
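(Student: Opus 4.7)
The plan is to use the classical Cramér--Chernoff exponential moment method. For any parameter $s > 0$, observe that $\{\bar X \leq \mu - t\} = \{e^{-sn(\bar X - \mu)} \geq e^{snt}\}$, so Markov's inequality on the nonnegative random variable $e^{-sn(\bar X - \mu)}$ yields
\begin{equation}
\Pr\{\bar X \leq \mu - t\} \;\leq\; e^{-snt}\,\mbE\!\left[e^{-s\sum_{i=1}^n (X_i - \mbE X_i)}\right].
\end{equation}
By independence of the $X_i$, the moment generating function factorizes as $\prod_{i=1}^n \mbE\!\left[e^{-s(X_i - \mbE X_i)}\right]$, reducing the task to uniformly bounding each single-variable factor.

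Next, I would invoke Hoeffding's lemma: if $Y$ is a random variable with $\mbE Y = 0$ and $a \leq Y \leq b$, then $\mbE[e^{sY}] \leq \exp\!\bigl(s^2(b-a)^2/8\bigr)$ for every $s \in \mbR$. Applied to $Y_i := -(X_i - \mbE X_i)$, which satisfies $\mbE Y_i = 0$ and $-1 \leq Y_i \leq 1$ (in fact lies in an interval of length exactly $1$, since $0 \leq X_i \leq 1$), this gives $\mbE[e^{sY_i}] \leq e^{s^2/8}$. Plugging back into the Markov bound,
\begin{equation}
\Pr\{\bar X \leq \mu - t\} \;\leq\; \exp\!\left(-snt + \tfrac{n s^2}{8}\right).
\end{equation}

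Finally, I would minimize the right-hand side over $s > 0$. Differentiating $-snt + ns^2/8$ in $s$ gives the optimum at $s = 4t$, and substituting yields the claimed bound $\exp(-2 n t^2)$. The hypothesis $0 < t < \mu$ is used only to ensure that the event in question is meaningful (it is automatically trivially true for $t \geq \mu$ if we interpret $\Pr\{\bar X \leq \mu - t\}$ with the convention that $\bar X \geq 0$).

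The main obstacle is Hoeffding's lemma itself, which is the only nontrivial input. I would prove it by convexity: for $Y \in [a,b]$ write $Y = \lambda b + (1-\lambda)a$ with $\lambda = (Y-a)/(b-a)$, so that $e^{sY} \leq \lambda e^{sb} + (1-\lambda)e^{sa}$ by convexity of $t \mapsto e^{st}$. Taking expectations and using $\mbE Y = 0$ (so $\mbE \lambda = -a/(b-a)$) gives $\mbE[e^{sY}] \leq e^{\varphi(u)}$ where $u = s(b-a)$ and $\varphi(u) = -\tfrac{-a}{b-a} u + \log\!\bigl(\tfrac{-a}{b-a} + \tfrac{b}{b-a} e^{-u}\bigr)$ is adjusted so $\varphi(0) = \varphi'(0) = 0$; a direct computation shows $\varphi''(u) \leq 1/4$, hence Taylor's theorem gives $\varphi(u) \leq u^2/8$, completing the argument. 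Everything else is routine algebra.
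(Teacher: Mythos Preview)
The paper does not supply its own proof of this lemma; it simply cites Hoeffding's original 1963 paper. Your argument is the standard Cram\'er--Chernoff proof (essentially Hoeffding's own), and it is correct in substance: Markov's inequality on the exponential moment, factorization by independence, Hoeffding's lemma on each centered summand (using crucially that $X_i\in[0,1]$ forces the centered variable into an interval of length exactly~$1$, so that the exponent is $s^2/8$ rather than $s^2/2$), and optimization at $s=4t$.

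One minor slip: the explicit formula you wrote for $\varphi(u)$ does not actually satisfy $\varphi'(0)=0$; the standard normalization is $\varphi(u)=-pu+\log(1-p+pe^{u})$ with $p=-a/(b-a)$. This is only a typo in the sketch, since you correctly invoke the properties $\varphi(0)=\varphi'(0)=0$ and $\varphi''(u)\leq 1/4$ that yield $\varphi(u)\leq u^2/8$ via Taylor's theorem.
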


\medskip Now, let $\Gamma(\mathcal{S})^{\infty}$ be the infinite Cartesian product of countable many copies of $\Gamma(\mathcal{S})$. It is endowed with an infinite product probability measure \cite{hewitt2013real}, denoted by $P$, where the uniform probability measure is taken on each copy of $\Gamma(\mathcal{S})$. For our purposes,  an infinite sequence $\bm{\Gamma} \in\Gamma(\mathcal{S})^{\infty}$ should be written as $\bm{\Gamma} := \left\{ \Gamma, \Gamma_{i_1\cdots i_n} \mid n > 0, i_1 \cdots i_n \in\{0,1\}^n \right\}$ (this is always possible, since the set of indices is countable).
We further define a sequence of independent and identically distributed (i.i.d) Bernoulli random variables on $\Gamma(\mathcal{S})^{\infty}$, denoted $\Delta^{i_1\cdots i_n}$, $n \geq 0$, $i_1 \cdots i_n\in\{0,1\}^n$, 
\begin{equation*}
\Delta^{i_1\cdots i_n}(\bm{\Gamma}) := \mathbf{1}_{\{\delta_3(\Gamma_{i_1\cdots i_n}) \in \{1,2\}\}},
\end{equation*}
that is, $\Delta^{i_1\cdots i_n}(\bm{\Gamma})$ is equal to $1$, if $\delta_3(\Gamma_{i_1\cdots i_n}) \in\{1,2\}$, and equal to $0$, if $\delta_3(\Gamma_{i_1\cdots i_n}) = 3$. 
Note that $\Delta^{i_1\cdots i_n}(\bm{\Gamma})$ does actually only depend  on the $\Gamma_{i_1\cdots i_n}$ element of $\bm{\Gamma}$ (here, $n$ and $i_1\cdots i_n$ are fixed).
From Lemma~\ref{lemma:dprime_good_channel} $(ii)$, it follows that $\mbE(\Delta^{i_1\cdots i_n}) = 2/3$, $\forall n \geq 0$, $\forall i_1 \cdots i_n\in\{0,1\}^n$.

\medskip For $0 < \gamma < 2/3$ and $m > 0$, we define
\begin{align}
\Pi_m(\gamma) &\eqdef \left\{ \bm{\Gamma} \in \Gamma(\mathcal{S})^{\infty} \Bigm| \sum_{i_1\cdots i_{m-1}} \Delta^{i_1\cdots i_{m-1} 1}(\bm{\Gamma}) \geq   \left( \frac{2}{3}-\gamma\right) 2^{m-1} \right\}\label{eq:G:128}\\
\overline{\Pi}_m(\gamma)  &\eqdef \bigcap_{n \geq m} \Pi_n(\gamma) \label{eq:G:129}
\end{align}
 The sum in (\ref{eq:G:128})  comprises all the terms $\Delta^{i_1\cdots i_{m-1} i_m}(\bm{\Gamma})$, with $i_1\cdots i_{m-1} \in \{0,1\}^{m-1}$ and $i_m=1$ (here, $m$ is fixed). Thus, $\Pi_m(\gamma)$ is defined by requiring that at least a fraction of $(2/3-\gamma)$ of $\Delta^{i_1\cdots i_{m-1} i_m}$ variables are equal to $1$, where $i_m=1$.  In (\ref{eq:G:129}), the above condition must hold for any $n\geq m$.

\begin{lemma}
For any $0 < \gamma < 2/3$ and $m > 0$,
\begin{equation}\label{eq:Pi_m_lowerbound}
P\left( \overline{\Pi}_m(\gamma) \right) \geq 2 - \frac{1}{1 - e^{-\gamma^2 2^{m}}}.
\end{equation}
\end{lemma}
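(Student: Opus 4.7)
The plan is to combine a union bound with Hoeffding's inequality (Lemma~\ref{lemma:mean_independent_rv}), and then to control the resulting series by a geometric majorant.

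First I would note that, by De Morgan and the union bound,
\[
P\!\left(\overline{\Pi}_m(\gamma)\right) \;=\; 1 - P\!\left(\bigcup_{n\geq m} \Pi_n(\gamma)^c\right) \;\geq\; 1 - \sum_{n\geq m} P\!\left(\Pi_n(\gamma)^c\right).
\]
This reduces the proof to bounding $P(\Pi_n(\gamma)^c)$ for each $n\geq m$. Recall that $\Pi_n(\gamma)^c$ is precisely the event that the empirical mean of the $2^{n-1}$ random variables $\{\Delta^{i_1\dots i_{n-1}1}\}_{i_1,\dots,i_{n-1}\in\{0,1\}}$ falls strictly below $\tfrac{2}{3}-\gamma$. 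By Lemma~\ref{lemma:dprime_good_channel}$(ii)$ these variables are i.i.d.\ Bernoulli of mean $\tfrac{2}{3}$, and they are mutually independent across different multi-indices since each depends only on a distinct coordinate of $\bm{\Gamma}$. Hoeffding's inequality (Lemma~\ref{lemma:mean_independent_rv}) with $n=2^{n-1}$ and $t=\gamma$ then yields
\[
P\!\left(\Pi_n(\gamma)^c\right) \;\leq\; \exp\!\left(-2\cdot 2^{n-1}\gamma^2\right) \;=\; e^{-\gamma^2 2^n}.
\]

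Next I would bound the resulting tail sum. Writing $x := e^{-\gamma^2 2^m}\in(0,1)$, the key observation is that for each $k\geq 0$ one has $2^{n}=2^{m}\cdot 2^{k}$ with $n=m+k$, hence $e^{-\gamma^2 2^{n}}=x^{2^{k}}$. Since $x\in(0,1)$ and $2^k\geq k+1$ for all $k\geq 0$, we get the crude but sufficient majorisation
\[
\sum_{n\geq m} e^{-\gamma^2 2^n} \;=\; \sum_{k\geq 0} x^{2^k} \;\leq\; \sum_{k\geq 0} x^{k+1} \;=\; \frac{x}{1-x} \;=\; \frac{e^{-\gamma^2 2^m}}{1-e^{-\gamma^2 2^m}}.
\]
Finally, the elementary identity $1-\frac{x}{1-x}=\frac{1-2x}{1-x}=2-\frac{1}{1-x}$ converts this into the stated form
\[
P\!\left(\overline{\Pi}_m(\gamma)\right) \;\geq\; 1 - \frac{e^{-\gamma^2 2^m}}{1-e^{-\gamma^2 2^m}} \;=\; 2 - \frac{1}{1-e^{-\gamma^2 2^m}},
\]
which is the desired bound.

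There is no real obstacle here; the only point requiring a little care is verifying that the Bernoulli variables inside each $\Pi_n(\gamma)$, and across different $n$'s (for the union bound to be tight enough), are genuinely independent, which follows from the product structure of the probability measure on $\Gamma(\mathcal{S})^\infty$ together with the fact that each $\Delta^{i_1\dots i_{n-1}1}$ depends on a distinct coordinate of $\bm\Gamma$. Everything else is a mechanical application of Hoeffding plus the geometric-series trick $\sum_k x^{2^k}\le \sum_k x^{k+1}$.
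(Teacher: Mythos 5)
Your proof is correct and takes essentially the same route as the paper's: a union bound over $n \geq m$, Hoeffding's inequality applied to the $2^{n-1}$ i.i.d.\ Bernoulli variables $\Delta^{i_1\dots i_{n-1}1}$ to get $P(\Pi_n(\gamma)^c)\leq e^{-\gamma^2 2^n}$, and the term-wise majorisation $x^{2^k}\leq x^{k+1}$ (valid since $2^k\geq k+1$ and $x\in(0,1)$) to control the tail sum by the geometric series $\tfrac{x}{1-x}$. The only cosmetic difference is that you make the independence of the $\Delta$'s across coordinates of $\bm{\Gamma}$ explicit, which the paper leaves implicit.
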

\begin{proof}
By Lemma~\ref{lemma:mean_independent_rv}, $P\left( \Pi_m(\gamma) \right) \geq 1 - e^{-\gamma^2 2^{m}}$. Therefore, we have
\begin{align*}
P\left( \overline{\Pi}_m(\gamma) \right) &\geq 1 - \sum_{n\geq m} e^{-\gamma^2 2^{n}} \\
 &= 1 - \sum_{n\geq 0} \left(e^{-\gamma^2 2^{m}}\right)^{2^n} \\
 &\geq 1 - \sum_{n\geq 1} \left(e^{-\gamma^2 2^{m}}\right)^{n} \\
 &= 1 - \left(\frac{1}{1 - e^{-\gamma^2 2^{m}}} - 1 \right) \\
 &= 2 - \frac{1}{1 - e^{-\gamma^2 2^{m}}}. 
\end{align*}
\end{proof}

Note that the right hand side term in (\ref{eq:Pi_m_lowerbound}) converges to $1$ as $m$ goes to infinity. Hence, for  $\varepsilon > 0$, we denote by $m(\gamma,\epsilon)$ the smallest $m$ value, such that $2 - \frac{1}{1 - e^{-\gamma^2 2^{m}}} \geq 1-\epsilon$. It follows that $P\left( \overline{\Pi}_{m(\gamma,\varepsilon)}(\gamma) \right) \geq 1 -\varepsilon$. 

\medskip In the following, we fix once for all some $\gamma$ value, such that $0 < \gamma < 2/3$. The value of $\gamma$ will not matter for any of what we do here, we only need $(2/3-\gamma)$ to be positive. We proceed now with the proof of Proposition~\ref{prop:fast_polarization}.

\medskip \noindent {\bf Proof of Proposition~\ref{prop:fast_polarization}.} 
Let $\Omega := \{0,1\}^\infty$ denote the set of  infinite binary sequences $\omega := (\omega_1\omega_2\cdots) \in \{0,1\}^\infty$. Hence, $\Omega$ can be endowed with an infinite product probability measure, by taking the uniform probability measure on each $\omega_n$ component. We denote this probability measure by $P$ (the notation is the same as for the probability measure on $\Gamma(\mathcal{S})^{\infty}$, but no confusion should arise, since the sample spaces are different). 

\medskip Let $\varepsilon > 0$ and fix any $\bm{\Gamma} \in \Gamma(\mathcal{S})^\infty_{\text{pol}} \cap \overline{\Pi}_{m(\gamma,\varepsilon)}(\gamma) $.  Given $\bm{\Gamma}$, the polarization process can be formally described as a random process on the probability space $\Omega$ \cite{arikan09}. Precisely, for any $\omega = (\omega_1\omega_2\cdots) \in \Omega$ and $n > 0$, we define 
\begin{align*}
Z^{[n]}(\omega) &:= Z\left(W^{(\omega_1\cdots \omega_n)}\right) \\
Z_{d}^{[n]}(\omega) &:= Z_d\left(W^{(\omega_1\cdots \omega_n)}\right), \forall d\in \{1,2,3\}
\end{align*}
Note that $W^{(\omega_1\cdots \omega_n)}$ is recursively defined as in (\ref{eq:c_polar_recursion}), through the implicit assumption of using the channel combining permutations in the given sequence $\bm{\Gamma}$. For $n=0$, we set $Z^{[0]}(\omega) := Z(W)$  and $Z_{d}^{[0]}(\omega) := Z_d(W)$.

\medskip \noindent For $\zeta > 0$ and $m \geq 0$, we define
\begin{equation*}
T_m(\zeta) \eqdef \left\{  \omega \in \Omega \mid Z_{d}^{[n]}(\omega) \leq \zeta, \forall d=1,2,3, \forall n \geq m  \right\}.
\end{equation*}

\noindent Hence, for $\omega \in T_m(\zeta)$, $d\in \{1,2,3\}$, and $n > m$, we may write
\begin{equation}\label{eq:zd_decomp}
Z_{d}^{[n]}(\omega) = \frac{Z_{d_{n}}^{[n]}(\omega)}{Z_{d_{n-1}}^{[n-1]}(\omega)} \frac{Z_{d_{n-1}}^{[n-1]}(\omega)}{Z_{d_{n-2}}^{[n-2]}(\omega)} \cdots
    \frac{Z_{d_{m+1}}^{[m+1]}(\omega)}{Z_{d_{m}}^{[m]}(\omega)}Z_{d_{m}}^{[m]}(\omega),
\end{equation}
where $d_n := d$, and $d_{n-1},\dots, d_m$ are defined as explained below. Recall that $Z_{d}^{[k]}(\omega) :=$\break $Z_d(W^{(\omega_1 \cdots \omega_k)})$, and for $k \in \{n,n-1,\dots,m+1\}$, we have
\begin{equation*}
W^{(\omega_1\cdots \omega_k)} = \left\{ \begin{array}{@{}ll@{}}
   W^{(\omega_1\cdots \omega_{k-1})} \boxast_{\Gamma_{\omega_1\cdots \omega_{k-1}}}\!  W^{(\omega_1\cdots \omega_{k-1})}, & \text{if } \omega_k\!=\!0 \\
   W^{(\omega_1\cdots \omega_{k-1})} \varoast_{\Gamma_{\omega_1\cdots \omega_{k-1}}}\! W^{(\omega_1\cdots \omega_{k-1})}, & \text{if } \omega_k\!=\!1
\end{array} \right.
\end{equation*}
Hence, if $\omega_k = 0$, we set $d_{k-1} := \bm{\delta}\left(W^{(\omega_1\cdots \omega_{k-1})}\right)$ from Lemma~\ref{lemma:dprime_bad_channel}, such that we have
\begin{equation}\label{eq:bad_ubound}
\frac{Z_{d_{k}}^{[k]}(\omega)}{Z_{d_{k-1}}^{[k-1]}(\omega)} \leq \kappa, \ \text{ if } \omega_k = 0.
\end{equation} 
If $\omega_k = 1$, we set $d_{k-1} := \delta_3\left(\Gamma_{\omega_1\cdots \omega_{k-1}}\right)$ from Lemma~\ref{lemma:dprime_good_channel}, such that we have
\begin{align}
\frac{Z_{d_{k}}^{[k]}(\omega)}{Z_{d_{k-1}}^{[k-1]}(\omega)} &= 1, \ \ \text{ if } \omega_k = 1 \text{ and } d_k = 3. \\
\frac{Z_{d_{k}}^{[k]}(\omega)}{Z_{d_{k-1}}^{[k-1]}(\omega)} &\leq \zeta, \ \ \text{ if } \omega_k = 1 \text{ and } d_k \in \{1,2\}. \label{eq:good_ubound}
\end{align}
Let $A_{m,n}(\omega) := \{ k \in \{m+1,\dots,n\} \mid \omega_k = 1 \}$,  and $B_{m,n}(\omega) := \{ k \in \{m+1,\dots,n\} \mid \omega_k = 1\break \text{ and } d_k \in \{1,2\} \}$. 
Using (\ref{eq:zd_decomp}), (\ref{eq:bad_ubound})--(\ref{eq:good_ubound}),  for $\omega \in T_m(\zeta)$ and  $n > m$, we get:
\begin{equation}\label{eq:Zdn_upper_bound_1}
Z_{d}^{[n]}(\omega) \leq \kappa^{(n-m)-|A_{m,n}(\omega)|}\zeta^{|B_{m,n}(\omega)|}\zeta.
\end{equation}
Now, we want to upper-bound the right hand side term of the above inequality, by providing lower-bounds for the $|A_{m,n}(\omega)|$ and $|B_{m,n}(\omega)|$ values.

\medskip\noindent{\em $|A_{m,n}(\omega)|$ lower-bound:} Let $A^{[k]}(\omega) := \omega_k$, hence $|A_{m,n}(\omega)| = \sum_{k=m+1}^n A^{[k]}(\omega)$. Fix any $\alpha \in (0, 1/2)$, and let
\begin{equation*}
\mathcal{A}_{m,n}(\alpha) :=\! \left\{ \omega \in \Omega \Bigm| \sum_{k=m+1}^n \!\!A^{[k]}(\omega) \geq  \left(\frac{1}{2} - \alpha\right)\! (n-m) \right\}.
\end{equation*}
Hence, for any $\omega\in \mathcal{A}_{m,n}(\alpha)$,
\begin{equation}\label{eq:Amn_lower_bound}
|A_{m,n}(\omega)| \geq (1/2 - \alpha)(n-m).
\end{equation}
Moreover, by Lemma~\ref{lemma:mean_independent_rv}, $P\left( \mathcal{A}_{m,n}(\alpha) \right) \geq 1 - e^{-2\alpha^2(n-m)}$.

\medskip\noindent{\em $|B_{m,n}(\omega)|$ lower-bound:} First, note that $d_k$ is defined depending on $\omega_{k+1}$ value. Hence, we may write
\begin{align*}
B_{m,n}(\omega) &=  \left\{ k \in \{m+1,\dots,n\} \mid \omega_k = 1 \text{ and } d_k \in \{1,2\} \right\} \\
  &\supseteq \left\{ k \in \{m+1,\dots,n-1\} \mid \omega_k = 1, \omega_{k+1} = 1, \text{ and } d_k \in \{1,2\} \right\} \\
  &= \left\{ k \in \{ m+1,\dots,n-1 \} \mid \omega_k = 1, \omega_{k+1} = 1, \text{ and } \delta_3\left(\Gamma_{\omega_1\cdots \omega_{k}}\right) \in \{1,2\} \right\}.
\end{align*}  
Let $B^{[k]}$ be the Bernoulli random variable on $\Omega$, defined by
\begin{equation*}
B^{[k]}(\omega) := \mathbf{1}_{\{\omega_{k+1} = 1\}} \mathbf{1}_{\{\omega_{k} = 1\}} \mathbf{1}_{\{\delta_3(\Gamma_{\omega_1\cdots \omega_k}) \in \{1,2\}\}}.
\end{equation*}
The expected value of $B^{[k]}$ is given by
\begin{align*}
\mbE B^{[k]} &= \frac{1}{2^{k+1}} \sum_{i_1\cdots i_k i_{k+1}} \mathbf{1}_{\{i_{k+1} = 1\}} \mathbf{1}_{\{i_{k} = 1\}} \mathbf{1}_{\{\delta_3(\Gamma_{i_1\cdots i_k}) \in \{1,2\}\}} \\
 &= \frac{1}{2^{k+1}} \sum_{i_1\cdots i_{k-1}} \mathbf{1}_{\{\delta_3(\Gamma_{i_1\cdots i_{k-1} 1}) \in \{1,2\}\}} \\
 &= \frac{1}{2^{k+1}} \sum_{i_1\cdots i_{k-1}} \Delta^{i_1\cdots i_{k-1} 1}(\bm{\Gamma}).
\end{align*}
Since $\bm{\Gamma} \in \overline{\Pi}_{m(\gamma,\varepsilon)}(\gamma)$, for $k > m \geq m(\gamma,\epsilon)$, we get  
\begin{equation*}
\mbE B^{[k]} \geq \gamma_0 := \frac{1}{4}\left(\frac{2}{3} - \gamma \right).
\end{equation*}
Let $\mathcal{K}(m,n) := \{ k \in m+1,\dots, n-1 \mid k = m+1 \mod 2\}$, the set of integers $m+1,m+3,\dots$ comprised between $m+1$ and $n-1$. Random variables $B^{[k]}$, $k\in \mathcal{K}(m,n)$, are independent, and the expected value of their mean, denoted $\mbE B_{\mathcal{K}(m,n)} := \frac{1}{|\mathcal{K}(m,n)|}\mbE B^{[k]} $, satisfies $\mbE B_{\mathcal{K}(m,n)} \geq \gamma_0$. Fix any $\beta \in (0, \gamma_0)$, and let
\begin{equation*}
\mathcal{B}_{m,n}(\beta) \!:=\! \left\{ \omega \in \Omega \Bigm|\! \sum_{k\in \mathcal{K}(m,n)}\!\!\!\! B^{[k]}(\omega) \geq  (\gamma_0 - \beta) |\mathcal{K}(m,n)| \right\}.
\end{equation*}
Hence, for $m \geq m(\gamma,\epsilon)$ and $\omega \in \mathcal{B}_{m,n}(\beta)$, we have\footnote{The last inequality could be tighten, but we only need a non-zero fraction of $n-m$. } 
\begin{equation}
|B_{m,n}(\omega)| \geq \sum_{k=m+1}^{n-1} B^{[k]}(\omega) \geq \sum_{k\in \mathcal{K}(m,n)} B^{[k]}(\omega)  
\geq (\gamma_0 - \beta) |\mathcal{K}(m,n)| \geq (\gamma_0 - \beta)\frac{n-m}{3}. \label{eq:Bmn_lower_bound}
\end{equation}
Moreover, by applying Lemma~\ref{lemma:mean_independent_rv}, we have
\begin{align*}
P\left( \mathcal{B}_{m,n}(\beta) \right)  
&\geq  P\left( \sum_{k\in \mathcal{K}(m,n)} B^{[k]}(\omega) \geq  (\mbE B_{\mathcal{K}(m,n)} - \beta) |\mathcal{K}(m,n)| \right) \\
&\geq 1 - e^{-2\beta^2|\mathcal{K}(m,n)|} \\
&\geq 1 - e^{-2\beta^2 \frac{n-m}{3}}.
\end{align*}
We define $\mathcal{U}_{m,n}(\zeta, \alpha, \beta) := T_m(\zeta) \cap \mathcal{A}_{m,n}(\alpha) \cap \mathcal{B}_{m,n}(\beta)$.  Using (\ref{eq:Zdn_upper_bound_1}), (\ref{eq:Amn_lower_bound}), and (\ref{eq:Bmn_lower_bound}), for $n > m \geq m(\gamma,\epsilon)$ and $\omega \in \mathcal{U}_{m,n}(\zeta, \alpha, \beta)$, we have
\begin{equation*}
Z_{d}^{[n]}(\omega) \leq \kappa^{(\alpha+\frac{1}{2})(n-m)}\zeta^{\frac{\gamma_0 - \beta}{3}(n-m)}\zeta =\!\! \left( \kappa^{\alpha + \frac{1}{2}} \zeta^{\frac{\gamma_0 - \beta}{3}} \right)^{n-m} \!\!\!\zeta.
\end{equation*}
Note that $\alpha, \beta$, and $\gamma$ (thus, $\gamma_0$) are some fixed constants. Hence, for any $\theta > 0$ (as in the fast polarization property), we may choose $\zeta > 0$, such that $\kappa^{\alpha + \frac{1}{2}} \zeta^{\frac{\gamma_0 - \beta}{3}} \leq 2^{-(1+\theta)}$.
Using $Z^{[n]}(\omega) \leq \max_{d=1,2,3} Z_{d}^{[n]}(\omega)$, we get the following inequality, that holds for any $n > m \geq m(\gamma,\epsilon)$ and any $\omega \in \mathcal{U}_{m,n}(\zeta, \alpha, \beta)$:
\begin{equation*}
Z^{[n]}(\omega) \leq c 2^{-n(1+\theta)} = cN^{-(1+\theta)}.
\end{equation*}
where $c = c(m,\alpha,\beta,\gamma,\zeta) := \left( \kappa^{\alpha + \frac{1}{2}} \zeta^{\frac{\gamma_0 - \beta}{3}} \right)^{-m} \zeta$, and $N = 2^n$. Note that $\alpha,\beta,\gamma$, and $\zeta$ have been fixed at this point, and only the value of $m$ can still be varied.

\medskip To complete the proof, we need to show that $\mathcal{U}_{m,n}(\zeta, \alpha, \beta)$ is sufficiently large  (for some $m$, and large enough $n>m$), so that we may find information sets $\mathcal{I}_N$ of size $|\mathcal{I}_N| \geq RN$, for $R < \mathtt{I}(W)$. For this, we need the following lemma, which is essentially the same as Lemma 1 in \cite{arikan09}, and the proof follows using exactly the same arguments as in {\em loc. cit.} (and also using the fact that $\bm{\Gamma}$ is a polarizing sequence).

\begin{lemma} \label{lemma:Tm_zeta_large_enough}
For any fixed $\zeta > 0$ and any $0 \leq \delta < \mathtt{I}(W)$, there exists an integer $m_0(\zeta, \delta)$, such that
\begin{equation*}
P\left( T_{m_0}(\zeta) \right) \geq \mathtt{I}(W) - \delta.
\end{equation*}
\end{lemma}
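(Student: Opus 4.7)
The argument follows Arikan's original template (\cite{arikan09}, Lemma~1), adapted to our CMP-counterpart setting via the classical/quantum equivalence of Proposition~\ref{prop:cq_equiv}. I would carry out the proof in three steps: (i) show that the symmetric mutual information $\mathtt{I}^{[n]}(\omega):=\mathtt{I}(W^{(\omega_1\cdots\omega_n)})$ is a bounded martingale converging a.s.\ to a $\{0,1\}$-valued limit with $P(\mathtt{I}^{[\infty]}=1)=\mathtt{I}(W)$; (ii) show that on the event $E:=\{\mathtt{I}^{[\infty]}=1\}$ the Bhattacharyya parameters $Z_d^{[n]}$ converge pointwise to $0$; and (iii) conclude by continuity of measure applied to the nested family $\{T_m(\zeta)\}_m$.

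\textbf{Steps 1 and 2.} Combining Lemma~\ref{lem:i-is-a-martingale} with Proposition~\ref{prop:cq_equiv} and the identity $\mathtt{I}(W^\#)=(1+I(\mathcal{W}))/2$ yields the conservation relation $\mathtt{I}(W\boxast_\Gamma W)+\mathtt{I}(W\varoast_\Gamma W)=2\mathtt{I}(W)$ for every linear permutation $\Gamma$. Hence $\{\mathtt{I}^{[n]}\}_{n\geq 0}$ is a $[0,1]$-valued martingale w.r.t.\ the natural filtration of $\omega$. By martingale convergence, $\mathtt{I}^{[n]}\to\mathtt{I}^{[\infty]}$ a.s., and since $\bm{\Gamma}\in\Gamma(\mathcal{S})^{\infty}_{\text{pol}}$ is a polarizing sequence, $\mathtt{I}^{[\infty]}\in\{0,1\}$ a.s., so $P(\mathtt{I}^{[\infty]}=1)=\mathbb{E}[\mathtt{I}^{[\infty]}]=\mathtt{I}(W)$. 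For Step 2 I would establish a continuity bound: for any CMP counterpart $V$, $\mathtt{I}(V)\geq 1-\varepsilon$ implies $\max_{d=1,2,3} Z_d(V)\leq g(\varepsilon)$, with $g(\varepsilon)\to 0$ as $\varepsilon\to 0$. For a pure Pauli counterpart with probability vector $\mathbf{p}$, $\mathtt{I}=1-h(\mathbf{p})/2$ and $Z_d=\sum_k\sqrt{p_k p_{k\oplus d}}$, so a small-entropy constraint on $\mathbf{p}$ (via Fano/Pinsker) concentrates $\mathbf{p}$ on a single atom and makes each $Z_d$ small. For a genuine mixture, $\mathtt{I}(V)=\sum_x\lambda_x\mathtt{I}(V_x)$ and $Z_d(V)=\sum_x\lambda_x Z_d(V_x)$ (both invariant under the equivalence of Lemma~\ref{lem:nclassical-well-defined}), so the pointwise bound propagates to mixtures by Markov's inequality over the mixing index $x$. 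Applied along the random process, this yields $\max_d Z_d^{[n]}(\omega)\to 0$ pointwise for every $\omega\in E$.

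\textbf{Step 3.} Observe that $T_m(\zeta)\subseteq T_{m+1}(\zeta)$, since the defining condition imposes strictly fewer constraints as $m$ grows; each $T_m(\zeta)$ is measurable as a countable intersection of level sets of the measurable maps $\omega\mapsto Z_d^{[n]}(\omega)$. By Step 2, every $\omega\in E$ satisfies $\omega\in T_m(\zeta)$ for some $m$, so $E\subseteq\bigcup_m T_m(\zeta)$. Continuity of measure from below then gives
\[\lim_{m\to\infty} P(T_m(\zeta)) \;=\; P\!\left(\bigcup_m T_m(\zeta)\right) \;\geq\; P(E) \;=\; \mathtt{I}(W).\]
Hence for any $\delta>0$ there exists $m_0=m_0(\zeta,\delta)$ with $P(T_{m_0}(\zeta))\geq\mathtt{I}(W)-\delta$, as required.

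\textbf{Main obstacle.} The only nontrivial step is the continuity estimate in Step 2. The pointwise bound on each symmetric Pauli component is elementary, but care is required to make it decomposition-invariant for CMP channels (via Lemma~\ref{lem:nclassical-well-defined}) before averaging over mixtures. A cleaner route avoids quantitative rates altogether: since $\mathtt{I}$ and each $Z_d$ are continuous functionals of finitely many channel parameters, any accumulation point of $Z_d^{[n]}(\omega)$ for $\omega\in E$ must correspond to a channel with $\mathtt{I}=1$ and hence equals $0$, which is all that Step 3 actually needs.
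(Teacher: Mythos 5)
Your proposal is correct and follows essentially the same route the paper intends: the paper omits the proof and defers to Arikan's Lemma~1 (martingale convergence of $\mathtt{I}^{[n]}$, the polarizing-sequence hypothesis forcing a $\{0,1\}$-valued limit with $P(\mathtt{I}^{[\infty]}=1)=\mathtt{I}(W)$, the implication ``$\mathtt{I}$ near its maximum $\Rightarrow$ $Z_d$ near $0$'', and continuity of measure along the increasing family $T_m(\zeta)$), which is exactly what you reconstruct. The only ingredient you had to supply yourself is the continuity estimate relating $\mathtt{I}$ and $\max_d Z_d$; your derivation via the CMP structure works, though the generic classical bound for channels with $4$-ary input (\cite[Proposition 3]{sta09}) gives it directly without invoking the Pauli decomposition.
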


Therefore, $P\left( T_{m}(\zeta) \right)$ can be made arbitrarily  close to $\mathtt{I}(W)$, by taking $m$ large enough, and once we have made $P\left( T_{m}(\zeta) \right)$ as close as desired to $\mathtt{I}(W)$, we can make $P\left( \mathcal{A}_{m,n}(\alpha) \right)$ and $P\left( \mathcal{B}_{m,n}(\beta) \right)$ arbitrarily close to $1$, by taking $n > m$ large enough. Hence, for any $R <  \mathtt{I}(W)$, we may find $m_0 = m_0(\zeta, R)$ and $n_0 = n_0(m_0, \alpha, \beta, \gamma) > m_0$, such that 
\begin{equation*}
P\left( \mathcal{U}_{m_0,n}(\zeta, \alpha, \beta)  \right) > R, \ \ \forall n \geq n_0,
\end{equation*}
and since we may assume that $m_0 \geq m(\gamma,\varepsilon)$, we also have
\begin{equation}\label{eq:znomega_ubound}
Z^{[n]}(\omega) \leq  c_0 N^{-(1+\theta)},  \ \ \forall n \geq n_0, \ \forall \omega \in \mathcal{U}_{m_0,n}(\zeta, \alpha, \beta) 
\end{equation}
where $c_0 := c_0(m_0,\alpha,\beta,\gamma,\zeta)$.

\medskip Now, for $n > 0$, let $\mathcal{V}_n := \{ \omega \in \Omega \mid Z^{[n]}(\omega) \leq c_0 N^{-(1+\theta)} \}$. Using~(\ref{eq:znomega_ubound}), we have that $\mathcal{U}_{m_0,n}(\zeta, \alpha, \beta) \subseteq \mathcal{V}_n$, for any $n \geq n_0$, and therefore $P\left[ \mathcal{V}_n \right] \geq R$. On the other hand, 
\begin{align*}
P\left[ \mathcal{V}_n \right] &= \sum_{i_1\cdots i_n \in \{0,1\}^n} \frac{1}{2^n} \mathbf{1}\left\{ Z(W^{(i_1\cdots i_n)}) \leq c_0 N^{-(1+\theta)} \right\} \\
&= \frac{1}{N}|\mathcal{I}_N|,
\end{align*}
where $\mathcal{I}_N := \left\{ i\in \{0,\dots,N-1\} \mid Z(W^{(i)}) \leq cN^{-(1+\theta)} \right\}$. It follows that $|\mathcal{I}_N| \geq RN$, for $n\geq n_0$.

\medskip We have shown that, given $\varepsilon > 0$, the fast polarization property holds for any $\bm{\Gamma} \in \Gamma(\mathcal{S})^\infty_{\text{pol}} \cap \overline{\Pi}_{m(\gamma,\varepsilon)}(\gamma)$, with $P\left( \overline{\Pi}_{m(\gamma,\varepsilon)}(\gamma) \right) \geq 1 -\varepsilon$. We then conclude that it holds for any\break $\bm{\Gamma} \in \Gamma(\mathcal{S})^\infty_{\text{pol}} \bigcap \left(\bigcup_{\epsilon > 0} \overline{\Pi}_{m(\gamma,\varepsilon)}(\gamma)\right)$, which is a measurable subset of $\Gamma(\mathcal{S})^\infty_\text{pol}$, of same probability. 
\ \hfill $\qed$

\printbibliography


\end{document}